\def\BibTeX{{\rm B\kern-.05em{\sc i\kern-.025em b}\kern-.08em
    T\kern-.1667em\lower.7ex\hbox{E}\kern-.125emX}}
\newtheorem{theorem}{Theorem}
\newtheorem{corollary}{Corollary}
\newtheorem{definition}{Definition}
\newtheorem{lemma}{Lemma}
\newtheorem{remark}{Remark}
\newtheorem{example}{Example}
\begin{document}

\title{Coded Caching with Private Demands and Caches}

\author{
Ali~Gholami,~\IEEEmembership{Student Member,~IEEE,} 
Kai~Wan,~\IEEEmembership{Member,~IEEE,} 
Hua~Sun,~\IEEEmembership{Member,~IEEE,}
Mingyue~Ji,~\IEEEmembership{Member,~IEEE,}  
and~Giuseppe Caire,~\IEEEmembership{Fellow,~IEEE}
\thanks{
The preliminary version   of this paper was presented in parts at 
    the 2022  IEEE  International Symposium on Information Theory, Espoo, Finland,~\cite{gholami2022coded}.}
\thanks{
A.~Gholami and G.~Caire are with the Electrical Engineering and Computer Science Department, Technische Universit\"at Berlin, 10587 Berlin, Germany (e-mail:  \{a.gholami, caire\}@tu-berlin.de). The work of A.~Gholami  and G.~Caire was partially funded by the European Research Council  under the ERC Advanced Grant N. 789190, CARENET.}
\thanks{
K. Wan was with   the Electrical Engineering and Computer Science Department, Technische Universit\"{a}t Berlin,
10587 Berlin, Germany. He is now with the School of Electronic Information and Communications, 
Huazhong University of Science and Technology, 430074  Wuhan, China,  (e-mail: kai\_wan@hust.edu.cn). The work of K.~Wan was partially funded by the   National Natural
Science Foundation of China (NSFC-12141107) and the CCF-Hikvision Open Fund (20210008).}
\thanks{
H.~Sun is with the Department of Electrical Engineering, University of North Texas, Denton, TX 76203 (email: hua.sun@unt.edu). The work of H.~Sun is supported in part by funding from NSF grants CCF-2007108 and CCF-2045656.
}
\thanks{
M.~Ji is with the Electrical and Computer Engineering Department, University of Utah, Salt Lake City, UT 84112, USA (e-mail: mingyue.ji@utah.edu). The work of M.~Ji was supported in part by NSF Awards 1817154, 1824558.}
}

\maketitle

\begin{abstract}
	This paper studies the privacy issue in coded caching. Recently it was shown that 	the seminal MAN coded caching scheme   leaks the demand information of each user to the other users in the system. 
	Many works have considered coded caching with demand privacy, while each non-trivial existing coded caching scheme with private demands was built on the fact that the cache information of each user is private to the others. 
	However, most of these schemes leak the users’ cache information. As a consequence, in most realistic settings (e.g., video streaming), where the system is used over time with multiple sequential transmission rounds, these schemes leak demand privacy beyond the first round.  
 This observation motivates  our new formulation of coded caching with simultaneously private demands and caches in this paper. 
For this new model, we first show that an existing coded caching scheme with private demands, referred to as the virtual users scheme, can also preserve the privacy of the users' caches.   However, this scheme suffers from its extremely high subpacketization. The main contribution of this paper is a new construction that generates private coded caching schemes  by leveraging two-server private information retrieval (PIR) schemes. 
  We show that if  in the PIR scheme the demand is uniform over all files and the queries  are independent, the resulting caching scheme is private on both the demands and on the caches; otherwise, the resulting scheme is private only on the demands. This first results construct the coded caching schemes from a particular class of  PIR schemes, which is a new ``structural’’ result in its own merit.
  We then construct new two-server PIR schemes with uniform demand and independent queries, such that the resulted caching scheme has a subpacketization level which is significantly reduced  compared to the virtual users scheme. 
 Interestingly   we propose  a new construction of two-server PIR schemes with uniform demand and independent queries by leveraging coded caching schemes. By applying the seminal Maddah-Ali and Niesen coded caching scheme into our construction, the resulting two-server PIR scheme   is proved to be order optimal under the constraint of  uniform demand and  independent queries.  
This is a second new ``structural’’ result, somehow closing the loop in the relation between coded caching and PIR.
As a by-product of our new construction, we obtain a new caching scheme with private demands that improves the load of the state-of-the-art demand private caching schemes known so far. 
  Finally, 
to explore a broader tradeoff between cache privacy and transmission load,
we relax the cache privacy constraint and introduce the definition of leakage on cache information. Then, again as a by-product of our new construction, we propose new schemes with perfect demand privacy and imperfect cache privacy that achieve an order-gain in load with respect to the scheme with perfect privacy on both demands and caches. This also establishes a first non-trivial achievability result in the   tradeoff between load and cache privacy, for demand-private caching schemes.   
  
    
	
\end{abstract}

\begin{IEEEkeywords}
	Coded caching, private demands and caches, private information retrieval
\end{IEEEkeywords}

\section{Introduction}
	Coded caching was first introduced in \cite{maddah2014fundamental}. In a caching system, the goal is to leverage the local memory available at the end-users to reduce the load in the network by exploiting the content already availbale in the cache rather than downloading from the server. Before the emergence of \cite{maddah2014fundamental}, this leverage was limited to the \textit{local caching gain} which depends on the local cache size. The coded caching scheme proposed by Maddah-Ali and Niesen, referred to as the MAN scheme in \cite{maddah2014fundamental}, showed that the cache memory available for every user can be used in an aggregate manner even if there is no cooperation between the users. This gain is referred to as the \textit{global caching gain}. Thus, in addition to benefiting from a local caching size, the system can benefit from the aggregate cache size which scales with the number of users and yields a larger further reduction of the network load.

	In the MAN coded caching setting \cite{maddah2014fundamental}, a server which has a library of $N$ files and is connected to $K$ users via a shared medium. Each user has a cache of size $M$ files. 
	A coded caching scheme consists of two phases: \textit{placement} and \textit{delivery}. In the placement phase, each user fills its cache   
	 without  knowledge of the users' later demands. If each user directly stores some bits of files into the cache, the placement phase is called {\it uncoded}. 
	 In the delivery phase, each user demands one file. According to the users' demands and caches, the server broadcasts multicast messages to the users such that each user can recover its demanded file. 
 The transmission load is defined as  the number of broadcasted bits in the delivery phase normalized by the file size. The objective is to minimize the worst-case load  among all possible demands.
	The  MAN coded caching scheme is based on a combinatorial design in the placement which splits each file into multiple subfiles and assigns each subfile to a subset of users, 
such that each multicast message is useful to $1+KM/N$ users. The achieved load by the MAN coded caching scheme is $\frac{K(1-M/N)}{1+KM/N}$, where 
	  $1-M/N$ represents the \textit{local} caching gain and $\frac{1}{1+KM/N}$ represents the \textit{global} caching gain. When $N\geq K$, the MAN scheme was proved to be order optimal within a factor of $2$~\cite{yu2017exact} and optimal under the constraint of uncoded cache placement \cite{wan2016optimality}.  When $N<K$, an improved coded caching scheme was proposed by Yu, Maddah-Ali, and Avestimhr (YMA) in~\cite{yu2017exact}, which is built on the fact that some MAN multicast messages can be re-constructed by the other ones and are thus redundant. The YMA scheme was then proved to be order optimal within a factor of $2$~\cite{yu2017exact} and optimal under the constraint of uncoded cache placement~\cite{yu2017exact}, for any system parameters. 
Following the seminal work of MAN, coded caching was studied in different extensions, including decentralized setting \cite{maddah2014decentralized}, online coded caching \cite{pedarsani2015online},   Device-to-Device (D2D) networks \cite{ji2015fundamental}, random and nonuniform   demands \cite{ji2017order,niesen2016coded},    hierarchical coded caching \cite{karamchandani2016hierarchical}, etc.  
	
	The MAN centralized scheme has a  subpacketization level at most exponential in the number of users $K$, which is one of its practical limitations.  Also for their decentralized algorithm in~\cite{maddah2014decentralized}, the multiplicative caching gain appears in the asymptotic regime of file size scaling to infinity. The authors in~\cite{shanmugam2016finite} addressed this issue and showed that this multiplicative gain is non-existent in the finite file size regime under random placement and clique cover delivery schemes. In order to reduce the subpacketization, the authors in~\cite{jin2019new} introduce a decentralized scheme that achieves a low worst-case load in the finite file size regime and maintain optimal memory-load tradeoff when file size scales to infinity. A combinatorial  structure, referred to as placement delivery array (PDA), was proposed in~\cite{yan2017placement} to design coded caching schemes with uncoded cache placement and clique-covering delivery, where the MAN scheme can be also seen as a coded caching scheme under PDA construction.  Following~\cite{yan2017placement}, various PDA constructions were proposed in~\cite{wang2019placement, yan2018placement, sasi2021multi, cheng2021framework, zhong2020placement}. Other combinatorial structures, such as hypergraphs~\cite{shangguan2018centralized},  Ruzsa-Szeméredi
graphs~\cite{shanmugam2017coded}, the strong edge coloring of bipartite graphs~\cite{yan2017placementbipartite},  linear block codes~\cite{tang2018coded},   have also been used to construct coded caching schemes with reduced subpacketization compared to the MAN scheme.
Under PDA construction, the subpacketization of the MAN scheme is minimum to achieve the load $\frac{K(1-M/N)}{1+KM/N}$~\cite{somevariantCheng}.

	\subsection{Demand private coded caching schemes}
	Despite the optimality guarantee, another issue of the MAN scheme is its leakage on the users' demand information. In order to decode the MAN multicast messages, each user should be aware of the other users' demands, which hurts the demand privacy. 	
Information theoretic formulation on coded caching with private demands was proposed in~\cite{wan2020coded}, where each user has a cache private to the other users and the privacy constraint requires that 
each user cannot obtain any information about other users' demands from the broadcasted messages in the delivery phase. 
Based on the virtual user strategy in~\cite{engelmann2017content}, an information theoretic private scheme was proposed in~\cite{wan2020coded}. By introducing $KN-K$ virtual users and letting each file be demanded by $K$ effective users (i.e., real or virtual users), 
the problem can be solved by using the $NK$-user MAN or YMA scheme.  
The resulting scheme can perfectly preserve the privacy of each user's demand against the other users, because each user cannot distinguish the real users from the effective users. 
The achieved load of this virtual user based scheme was proved to be order optimal within a constant except for the case $N\geq K$ and $M<N/K$. However, 	
	 its subpacketization is  at most exponential to $NK$ which is far from being practical.  In order to reduce the subpacketization, 
the authors in~\cite{wan2020coded} proposed another private caching scheme based on Minimum Distance Separable (MDS) codes for the case $M\geq N/2$, which achieves an order optimal load with subpacketization at most exponential to $K$. 
A varaint (but equivalent) virtual user based scheme was proposed in~\cite{DBLP:journals/corr/abs-1909-03324}, where for each real user we introduce $N-1$ virtual users such that the union set of the requested files by these $N$ effective users is the whole library.

	Following the coded caching problem with private demands, some improved schemes were proposed in order to reduce the load or subpacketization. 
	In \cite{aravind2020coded}, the authors proposed a demand-private scheme for the special case of a caching system with $N=2, K=2$ and $M=1$ while the  subpacketization level equal to $3$ was proved to be the minimum.  
A strategy introducing the use of  \textit{private keys} was proposed by Yan and Tuninetti in~\cite{yan2021fundamental}, whose main idea is to transform file retrieval to scalar linear function retrieval (i.e., each user requests a scalar linear function of files~\cite{wan2021optimal}). 
Each user's cache is  split into two parts. In the first part, every user caches the same subfiles as in the MAN scheme. The second part serves as the private key of each user,  composed of some linear combinations of the subfiles which are not cached in the first part. 
In the delivery phase, each user pretends to request a scalar linear function of the files, from which and the private key the user can recover its demanded file. 
Then by using the cache-aided scalar linear function retrieval scheme in~\cite{wan2021optimal}, the resulting private caching scheme requires a subpacketization which is the same as the MAN scheme.  The resulting scheme was proved to be order optimal within a factor of $ 6.3707$ if the metadata (i.e., the composition) of the broadcast messages is given.\footnote{\label{foot:equivalent}{In the privacy constraint of~\cite{yan2021fundamental}, the mutual information is under the condition of the realization of the library; this is equivalent to the case that the metadata  of the broadcast messages is provided in the header of the delivered packet, which is common in practice for the  ease of decoding.}}
 Other works on demand private caching can be found in \cite{kamath2020demand}, which proves that the optimal loads with and without demand privacy are within a multiplicative factor and also characterizes the exact memory-load tradeoff for the case $N=K=2$. In \cite{gurjarpadhye2022fundamental}, the authors provided the exact memory-load tradeoff for demand private coded caching when $N\geq K=2$. Finally in \cite{aravind2020subpacketization}, demand private coded caching was studied with the focus on reducing the subpacketization level. For the cases $N=K=2$, the authors proposed a scheme with lowest possible subpacketization. 
	
	\subsection{Brief review of private information retrieval (PIR)}
	The demand privacy was originally considered  in the PIR problem~\cite{chor1995private}, where 
	a user is connected to $S$ servers through $S$ individual private links, respectively. The library contains $N$ equal-length messages and  the user wants to   retrieve one message  from   the servers without letting the servers know any information about the demand. For this purpose, the user sends a query to each server and the server replies with an answer including some coded packets to the user. 	 
	The communication cost, defined as the amount of information exchanged between the user and the servers, is equal to the sum of the total upload cost (i.e., sum of individual upload costs defined as the length of the query from the user to the servers) and total download cost (i.e., sum of individual download costs defined as the length of the answer from the servers to the user normalized by the message size).

For the  single-server PIR problem, the only solution to preserve	the information-theoretic demand privacy consists of downloading the whole library. Numerous works have considered the minimization of the communication cost for the system with multiple servers. 
A two-server PIR scheme with communication cost of $O(N^{1/3})$ was proposed in~\cite{chor1995private} based on covering codes \cite{cohen1997covering}, which was then extended to the $S$-server system with   communication cost 
 $O\left(N^{1/(2S-1)}\right)$~\cite{ambainis1997upper}. Also in \cite{itoh1999efficient}, the author introduced a time-efficient two-server PIR with the same communication complexity of $O(N^{1/3})$ as in \cite{chor1995private}. 
	In \cite{beimel2001information} the authors considered the problem of $t$-private PIR where the goal is to keep the identity of the demanded file private, even with the collusion of up to $t$ servers. 
	Some other important works on PIR include~\cite{razborov2006omega, wehner2005improved, chakrabarti2007nearly, beigel2003nearly}, where the authors study the bounds on communication cost. 
	The work in \cite{woodruff2005geometric} is a polynomial-based approach and reaches the $O(N^{1/3})$ communication cost and the work in \cite{dvir20162}, introduces the best known communication cost of $N^{o(1)}$ for two-server PIR schemes as of today and is based on the polynomial approach of \cite{woodruff2005geometric} and matching vector codes (MVC) \cite{efremenko20123, yekhanin2008towards}. 
	
Due to the difficulty to characterize the optimal communication cost, another direction on the PIR problem is to characterize the optimal total download cost. 
	In \cite{sun2017capacity} Sun and Jafar characterized the optimal total download cost,  $1+1/S+1/S^2+\dots+1/S^{N-1}$,  by  proposing  an interference alignment-type achievable scheme and a matching converse.  
In \cite{tian2019capacity}, the authors introduce an asymmetric PIR scheme which achieves the optimal total download cost. Furthermore, under the constraint of achieving the optimal total download cost, this scheme has the minimum total upload cost $S(N-1)\log_2 S$ and the minimum subpacketization on the message $S-1$. 
	In addition, some extended PIR models were considered with the objective to minimize the total download cost, including multi-message PIR (where  the user wants to privately retrieve  $M$ messages from the servers)~\cite{banawan2018multi}, symmetric PIR (where there is an additional security constraint that the user cannot  receive any information about the   undesired messages)~\cite{sun2018capacity}, PIR with side information (where the user has some prior side information in the form of a subset of messages not including the desired one) \cite{kadhe2019private}, PIR from MDS-coded data in distributed storage systems \cite{tajeddine2018private}, cache-aided PIR (where the user has a cache storage that can be used to store any function of the messages) \cite{tandon2017capacity}, multi-message PIR with private side information (where the identity of the desired messages and the side information should be kept private from the servers) \cite{shariatpanahi2018multi}, PIR with colluding databases (where   a number of databases may share the received queries among each other) \cite{sun2017collcapacity}, and PIR with coded databases \cite{banawan2018capacity}.

	\subsection{Contributions}
In the coded caching problem with private demands~\cite{wan2020coded}, an important condition to design non-trivial private caching schemes is that the cache information of each user is private to the others; otherwise, to preserve the demand privacy we need to let each user recover the whole library. However, in most existing private caching schemes (except the virtual users scheme in~\cite{DBLP:journals/corr/abs-1909-03324}), the users' caches are leaked after the transmission in the delivery phase; thus after one  transmission  round where each user has recovered one file, these schemes cannot be used to preserve the demand privacy when each user wants to retrieve another file in a new transmission round.  This motivates the formulation of
	 the coded caching problem with private demands and caches in this paper:  in addition to the privacy constraint on the users' demands, we also want to preserve the privacy of the users' caches. 
Besides the formulation of this new problem, 	our contributions are as follows.
	\begin{itemize}
		\item We first show that the virtual users scheme in \cite{DBLP:journals/corr/abs-1909-03324} is private in terms of demands and caches. The achieved load of this scheme is order optimal within a constant factor except for the case where $N> K$ and $M<N/K$. However, the subpacketization of this scheme is $2^{\mathcal{H}(M/N) NK }$ and is at most  exponential to  $NK$, where $\mathcal{H}(\cdot)$ represents the binary
entropy function.
		\item 
		In order to reduce the subpacketization of the  virtual users scheme, we propose a new construction structure on private coded caching schemes by leveraging two-server PIR schemes.  
		In particular, we show that the schemes resulting from our construction are demand private. By applying
		 the PIR scheme in~\cite{tian2019capacity},
we can construct a demand-private coded caching scheme with an improved memory-load tradeoff than that of~\cite{yan2021fundamental}. 
		We then show that if the underlying PIR scheme has
the uniform demand and  independent query (UDIQ) property (see Definition~\ref{def:udiq} in Section~\ref{sub:review of PIR}), the resulting caching scheme is both demand and cache private. This first results introduce a new ``structural’’ result which constructs the coded caching schemes from a particular class of  PIR schemes.
		\item As a consequence of the above result, we then shift our focus to the construction of two-server PIR schemes with the UDIQ property.  Interestingly, 
we find a new construction structure on two-server PIR schemes under the UDIQ condition by leveraging coded caching schemes. By applying the Maddah-Ali and Niesen scheme into our construction, the achieved load by the resulting two-server PIR scheme 	is proved to be order optimal under the constraint of UDIQ. 
		This is a second new ``structural’’ result, somehow closing the loop in the relation between coded caching and PIR.

		\item In order to explore a broader tradeoff between subpacketization order, transmission load, and cache privacy, we relax the UDIQ constraint,   and as a result, obtain demand private coded caching schemes with a controlled amount of leakage on the cache information, which opens the path in this new exploration. In particular, using the PIR scheme in~\cite{dvir20162}, we obtain a demand private coded caching scheme with better cache informaiton leakage than~\cite{yan2021fundamental}. Recall that using the PIR scheme in~\cite{tian2019capacity}, we obtain a demand private coded caching scheme achieving load lower than~\cite{yan2021fundamental} with the same subpacketization. These results clearly show the flexibility of our construction. 
		
	\end{itemize}

	\subsection{Paper organization}
	The rest of this paper is organized as follows. The system model is presented in 
	 Section \ref{sec:sysmodel}.  Section \ref{sec:results} presents   our main results on coded caching with private demands and caches. Section \ref{sec:extension} 
	 presents the results for the extended model where some leakage on the caches is allowed. 
	We conclude the paper in Section \ref{sec:conclusion}.
	
	\subsection{Notation convention}
	Calligraphic symbols denote sets, bold symbols denote vectors, and sans-serif symbols denote system parameters. We denote the set $\{a,a+1,\dots,b\}$ by $[a:b]$ and $[b]$ refers to $[1:b]$. We use $|\cdot|$ to denote the cardinality of a set or the length of a vector. Also $B_{\mathcal{A}}$ denotes the set $\{B_i, \forall i \in \mathcal{A}\}$. The base of logarithm in this paper is $2$.
	
	\section{System Model} \label{sec:sysmodel}
	\subsection{Problem formulation of coded caching with private demands and caches}
	\label{sub:problem setting}
	The considered coded caching system consists of a server with access to a library of $N$ independent files denoted by $W_1,W_2,\ldots, W_N$.   This server is connected to $K$ cache-aided users with a shared link. The entropy of the cache content of each user is limited by $MF$. We assume that each  file has $F$ bits. The system operates in two phases.
	
	{\it Placement Phase.}
	Each user fills its cache without knowledge of later demands. The cached content of user $k\in[K]$ is
	\begin{align}
		Z_{k}=\phi_{k}(W_{1},\ldots,W_{N}, \mathscr{M}_k), \label{eq:cK}
	\end{align}  
	where   
	$\mathscr{M}_k$ represents the metadata of the  bits in  $Z_k$. 	$\mathscr{M}_k$  is a  random variable over $\mathscr{C}_k$, representing all types of cache placements of user $k$. 
	The realization of $\mathscr{M}_k$ is only known by the server and user $k$. The memory size constraint states that the cache size should be $MF$,\footnotetext{The number of bits per information file symbol 
	to represent cache $Z_k = z_k$ is 
	$
		(1/F) * \lceil \log_2( 1/ \Pr(Z_k = z_k)) \rceil.
	$
	Thus the average number of bits to represent the cache is
	$
		(1/F) \sum_{z_k} \lceil \log_2( 1/ \Pr(Z_k = z_k))\rceil * \Pr(Z_k = z_k) \simeq H(Z_k)/F,
	$
	where the error (rounding integer) is $O(1/F)$. 
	 Hence, for large $F$, we can neglect such rounding, and impose a constraint on the cache entropy, $H(Z_k) \leq MF$. This does not mean that every realization of $Z_k$ can be represented with $MF$ bits, however, on average over the ensemble of the realizations; this holds for each user $k$. } i.e., 
	\begin{align}
	H(Z_{k}) \leq   MF, \ \forall k \in [K].   \text{\footnotemark} \label{eq:memory size}
	\end{align}

Following the assumption made in~\cite{wan2020coded}, we assume that $F$ is sufficiently large such that the size of $\mathscr{M}_k$ is negligible with respect to the file size and $\mathscr{M}_k$ is also provided in $Z_k$. 
	
	
	{\it Delivery Phase.}
	During the delivery phase, user $k \in [K]$ requests one file $W_{d_k}$, where $d_k$ is   uniformly i.i.d. over $[N]$. The demanded vector is denoted by $\mathbf{d}=(d_1,d_2,\ldots,d_{K})$. Given the demand vector $\mathbf{d}$, the server broadcasts to all users  
	\begin{align}
		X_\mathbf{d} = \psi(\mathbf{d}, W_1, \ldots, W_N, \mathscr{M}_1, \ldots,  \mathscr{M}_K). \label{eq:psi}
	\end{align}
	Note that we have  
	\begin{align}
		H(W_{[N]},  \mathscr{M}_{[K]}, \mathbf{d} )= NF + H( \mathscr{M}_{[K]}) + \sum_{k\in [K]} H(d_k) . 
		\label{eq:independent}
	\end{align}
We also assume that the metadata 	of the broadcast message is given inside the message and is negligible compared to the file size.

	{\it Decoding.} 
	User $k\in [K]$ decodes its desired file $W_{d_k}$ from $\big(d_k, Z_k, X_\mathbf{d} \big)$, i.e.,  
	\begin{align}
		H\big(W_{d_k} | d_k, Z_k, X_\mathbf{d}  \big)=0. \label{eq:decoding}
	\end{align}
	
	{\it Privacy.} 
	We want to preserve the privacy of each user's demand against other users, i.e.,
	\begin{align}
		I(\mathbf{d}; X_\mathbf{d} | d_k, Z_k)=0, \ \forall k\in [K].\label{eq:demand privacy constraint} 
	\end{align}
	In addition to~\eqref{eq:demand privacy constraint}, we want to preserve the privacy of the metadata of each user's cache against other users, i.e.,
	\begin{align}
		I\big( (\mathscr{M}_1,\ldots,\mathscr{M}_{K}); X_\mathbf{d} | d_k, Z_k\big)=0, \ \forall k\in [K].\label{eq:cache privacy constraint} 
	\end{align}
	
	{\it Objective.}
	The load $R$ is achievable if there exist  cache placement functions $\{\phi_{k}(\cdot):k\in [K]\}$, encoding function $\psi(\cdot)$, and decoding functions $\{\theta_k(\cdot):k\in [K]\}$ such that
	\begin{align}
		&W_{d_k} = \theta_k(d_k, Z_k, X_\mathbf{d}), \forall k\in [K], \\
		&\text{where } H(X_\mathbf{d})/F \leq R .
	\end{align}
	
	Our objective is to find the minimum achievable load $R^{\star}$ for given system parameters $M,N,K$,\footnote{\label{foot:the same load}Note that the broadcast messages for different demands have the same size, by the constraint of private demands.}  i.e.,
	\begin{align}
		R^{\star}= \min_{\phi_{k}, \psi, \theta_k : k\in [K]}  R. \label{eq:minimum load}
	\end{align}
	
	\subsection{Review of the existing schemes for coded caching with private demands}
	\label{sub:review of private caching}	
	Note that if we remove the private constraint on the caches in~\eqref{eq:cache privacy constraint}, the considered problem reduces to the coded caching problem with private demands in~\cite{wan2020coded}. In the following, we review in brief two efficient existing coded caching schemes with private demands, which are based on the virtual-user strategy and the privacy key strategy, respectively. In the virtual-user strategy proposed in~\cite{kamath2020demand},  an $(N,K,M)$-private scheme is built using an $(N,NK,M)$ non-private scheme. In the placement phase,  user $k$'s cache encoding function, $\text{cache}_k$, in the private scheme is given by $\text{cache}_k=\text{cache}_{(k-1)N+S_k}^{np}$ for $S_k$ chosen uniformly random from $[N]$ where $\text{cache}_i^{np}$ is the cache encoding function of the non-private scheme for user $i$. The memory-load tradeoff in this scheme is given by the piecewise linear function joining the memory-load points $\left(\frac{ t}{K}, \frac{\binom{NK}{t+1}-\binom{NK-N}{t+1} }{\binom{NK}{t}} \right), \ \forall t\in [0:NK]$. Regarding cache privacy, since the choice of $S_k$ is unknown to users other than $k$, the cache contents of the users are private. In the delivery phase, the assignment of files demanded by the virtual users is such that all $N$ file indices are requested by users $[(k-1)N+1:kN]$ for every $k \in [K]$. Regarding privacy, the assignment of caches to demands are revealed during server transmission, but the cache-demand pair for real user $k \in [K]$ is not distinguishable among users $[(k-1)N+1:kN]$ and thus, both caches and demands are private. 
	
	The privacy key scheme proposed in~\cite{yan2021fundamental} does not provide full privacy for users' caches but keeps the demands private. The placement phase is similar to the MAN scheme and the subfiles cached in the MAN scheme for each user is also cached here but also additionally, a linear function of subfiles for each uncached subfile index is stored in the cache for each user. The coefficients of this linear combination are chosen randomly by each user and kept private from others. In the delivery phase, based on these coefficients, the user requests a linear function of subfiles so that the retrieval of the demanded subfiles are possible. If the coefficient vector $\textbf{p}_k$ is used in the placement phase for user $k$, then in the delivery phase the requested coefficient vector would be $\textbf{p}_k+\textbf{d}_k$ in which $\textbf{d}_k$ has a $1$ in position $d_k$ (the demanded file index of user $k$) and $0$ elsewhere. The memory-load tradeoff for this scheme is given by the piecewise linear function joining the memory-load points $\left(1+\frac{t(N-1)}{K}, \frac{\binom{K}{t+1}-\binom{K-\min \{N-1,K\}}{t+1}}{\binom{K}{t}}\right), \ \forall t\in [0:K]$. Regarding privacy, since $\textbf{p}_k$ is a uniformly chosen random vector on $\{0,1\}^N$, $\textbf{p}_k+\textbf{d}_k$ would also be uniformly distributed on $\{0,1\}^N$ no matter the choice of $\textbf{d}_k$ and as a result, the demands are kept private. 
	
	\subsection{Review of   private information retrieval}
	\label{sub:review of PIR}	
	Since our main result is built on a newly discovered connection between private caching schemes and PIR, in this section we review the PIR problem setting. 
	

 Assume that there are $S$ servers each containing of a library of $N$ files with $B$ bits,  denoted by $W_1, W_2, \ldots,W_N$. 
A user is connected to these $S$ servers through $S$ individual and private links (meaning the servers do not collude), and wants to retrieve one file from the library while keeping the privacy of the demand against the servers. 
Assuming that the desired file is $W_{d}$, 
  for each $s\in[S]$, the user sends the query  $Q^{[d]}_{s} \in \mathcal{Q}_{s}$ to server $s$.
	 Based on the received query, server $s$ sends back the answer $A^{[d]}_{s}$ as a function of the query and the files $W_1, W_2, \ldots,W_N$,  to the user; i.e.,
	 \begin{align}
	 	A^{[d]}_{s} = \gamma_{s} (Q^{[d]}_{s}, W_1, W_2, \ldots,W_N),
	 	\label{eq:PIR encoding function}
	 \end{align}
	 where $\gamma_{s}$ represents the encoding function of server $s$. Based on the set of answers and queries, there should exist a decoding function by which the user can recover the desired file, i.e.
	 \begin{align}
	 	H\big(W_d|A^{[d]}_1,\dots,A^{[d]}_S,Q^{[d]}_1,\dots,Q^{[d]}_S\big)=0.
	 	\label{eq:PIR decoding constraint}
	 \end{align}
	 Additionally, the privacy constraint states that the query sent to each server, should not reveal any information about the desired file index; i.e., for each $s \in [S]$, 
	 \begin{align}
	 	I(d;Q^{[d]}_{s}| W_1,\ldots, W_N )=0.
	 	\label{eq:PIR privacy constraint}
	 \end{align}
	 
	 From \eqref{eq:PIR privacy constraint} we can conclude that $H(A^{[1]}_{s})=\cdots=H(A^{[N]}_{s}):=H(A_{s})$ holds for every $s \in [S]$. The total download cost of the PIR scheme is defined as the total size of information received from the servers over message size, denoted by 
	 $R_D=\frac{\sum_{s \in [S]} H(A_{s}) }{B}.$ 
	  The objective of the PIR problem is to characterize the minimum total download cost $R_D$.\footnote{\label{foot:PIR rate}Note that in most information theoretic works on PIR, the objective is to maximize the download rate, which is defined as $\frac{B }{\sum_{s \in [S]} H(A_{s})}$. In other words, the download rate is the reciprocal of the  total download cost considered in this paper. 
	 } 

 	The optimal total download cost was solved for general systems parameters in~\cite{sun2017capacity}. This result is recalled here in the following: 
 	\begin{theorem} [Capacity of PIR~\cite{sun2017capacity}]
 		\label{thm:pircapacity}
 		For the PIR problem with $N$ messages and $S$ databases, the optimal total download cost is 
 		\begin{align}
 			 1+1/S+1/S^2+\dots+1/S^{N-1}.
 		\end{align}
 	\end{theorem}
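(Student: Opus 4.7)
The plan is to prove Theorem~\ref{thm:pircapacity} by establishing matching achievability and converse bounds on the total download cost $R_D = \sum_{s\in [S]} H(A_s)/B$, following the original Sun--Jafar argument in~\cite{sun2017capacity}.

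For achievability, I would exhibit an explicit scheme attaining $R_D = \sum_{k=0}^{N-1} S^{-k}$ with subpacketization $L = S^N$. The construction rests on two symmetry principles. First, \emph{per-server symmetry}: the user independently and privately permutes the symbol indices of every message at every server, so that any query restricted to one server looks like a uniformly random subset of indices for every message, immediately fulfilling the privacy constraint~\eqref{eq:PIR privacy constraint}. Second, \emph{interference alignment across messages and servers}: the queries are organized into $N$ rounds, where in round $k$ the user requests from each server sums of $k$ symbols drawn from $k$ distinct messages. Non-desired (``interference'') symbols appearing inside a round-$k$ combination at one server are arranged so that the very same non-desired symbols appear either individually (round $1$) or in lower-order sums downloadable from the other $S-1$ servers, and hence can be subtracted. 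A direct count of the new desired symbols obtained per round produces exactly $L \cdot \sum_{k=0}^{N-1} S^{-k}$ downloaded symbols in total, so together with~\eqref{eq:PIR decoding constraint} this proves the achievability.

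For the converse, I would proceed by induction on $n \in [N]$ and aim to establish the recursion
\begin{align}
\sum_{s=1}^S H\bigl(A_s^{[n]} \,\bigm|\, W_{[n-1]}\bigr) \;\geq\; B \;+\; \frac{1}{S} \sum_{s=1}^S H\bigl(A_s^{[n+1]} \,\bigm|\, W_{[n]}\bigr),
\end{align}
with the query variables kept as implicit conditioning. The base case $n=N$ is the trivial statement $\sum_s H(A_s^{[N]} \mid W_{[N-1]}) \geq B$, coming from the decodability of $W_N$ via~\eqref{eq:PIR decoding constraint} combined with mutual independence of the files. For the inductive step, the privacy constraint~\eqref{eq:PIR privacy constraint} forces $Q_{[S]}^{[n]}$ and $Q_{[S]}^{[n+1]}$ to be identically distributed, so the joint law of the $S$ answers conditioned on $W_{[n-1]}$ is symmetric with respect to swapping the desired index from $n$ to $n+1$; combining this swap with decodability of $W_n$, together with a symmetrization step over the $S$ servers that trades the worst single server against the per-server average, yields the $1/S$ factor. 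Unrolling from $n=1$ down through the base case reproduces the geometric sum $R_D \geq \sum_{k=0}^{N-1} S^{-k}$.

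The step I expect to be the main obstacle is the inductive inequality in the converse. Turning the privacy-induced statistical symmetry into the clean factor $1/S$ requires carefully arranging the chain-rule expansion of $H(A_{[S]}^{[n]} \mid W_{[n-1]})$ so that after invoking symmetry one can truly condition on $W_{[n]}$ while still exposing a per-server summation on the right-hand side. Missteps here either leak too much conditioning (breaking the inductive hypothesis) or leave residual terms that spoil the geometric series, so getting this bookkeeping exactly right is the delicate part of the proof.
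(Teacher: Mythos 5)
The paper does not prove Theorem~\ref{thm:pircapacity}; it is recalled verbatim as a known result from Sun and Jafar~\cite{sun2017capacity} and used as a black box, so there is no in-paper proof to compare against. Judged on its own terms, your sketch faithfully reproduces the structure of the Sun--Jafar argument: a round-based, symmetry-driven, interference-alignment achievability with subpacketization $S^N$, and an inductive converse organized around the recursion
\begin{align}
\sum_{s=1}^S H\bigl(A_s^{[n]} \,\bigm|\, W_{[n-1]}\bigr) \;\geq\; B \;+\; \frac{1}{S}\sum_{s=1}^S H\bigl(A_s^{[n+1]} \,\bigm|\, W_{[n]}\bigr),
\end{align}
unrolled into the geometric sum. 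One point you should tighten if you flesh this out: the ``statistical symmetry'' you invoke is a \emph{per-server} equidistribution, i.e.\ for each $s$ separately $\bigl(Q_s^{[n]}, A_s^{[n]}, W_{[N]}\bigr)$ and $\bigl(Q_s^{[n+1]}, A_s^{[n+1]}, W_{[N]}\bigr)$ are identically distributed, not a joint equidistribution over $\bigl(Q_{[S]}, A_{[S]}\bigr)$. In Sun--Jafar the swap from demand $n$ to demand $n+1$ is therefore applied \emph{inside} a single-server term $H\bigl(A_s^{[\cdot]} \mid W_{[n]}, Q_s\bigr)$ after first dropping from the joint entropy $H\bigl(A_{[S]}^{[n]} \mid W_{[n]}, Q\bigr)$ to a single server, and only then averaging over $s$ to produce the $1/S$ factor; doing the swap on the joint term directly is not justified by the privacy constraint. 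You correctly flagged this conditioning bookkeeping as the delicate step. Since the present paper merely cites the theorem, a full verification of that step lies outside its scope, but your outline identifies the right structure and the right pitfall.
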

 	Interestingly, the optimal total download cost can be achieved not only in the asymptotic regime of arbitrarily large file size. In fact, it is sufficient that the file size is equal to any integer multiple of $S^N$ bits. 
 	 In~\cite{tian2019capacity} the authors proposed a PIR scheme   that reaches the optimal (least) file size and  total upload cost among the class of decomposable codes achieving the optimal total download cost. According to~\cite{tian2019capacity}, the term ``decomposable" restricts each coded symbol to be a summation of the component functions on the individual messages. For the exact definition,  please refer to~\cite[Definitions 2 and 3]{tian2019capacity}. Their result is stated in the following theorem. 
    \begin{theorem} [\hspace{1sp}\cite{tian2019capacity}]
 		Among all download cost optimal uniformly decomposable PIR codes, the PIR code proposed in~\cite{tian2019capacity} has the smallest message size, which is $S-1$. Among all download cost optimal decomposable PIR codes, this scheme has the lowest total upload cost, which is $S(N-1) \log_2 N$. 
 		
 	\end{theorem}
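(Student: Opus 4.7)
My plan is to separate the two claims and treat each as a matching converse to the explicit construction of~\cite{tian2019capacity}. Throughout the argument I would use the privacy constraint~\eqref{eq:PIR privacy constraint} to conclude that $H(A_s^{[d]}) = H(A_s)$ for every demand $d$, and the download-optimality assumption together with Theorem~\ref{thm:pircapacity} to fix $\sum_{s\in[S]} H(A_s)/B = 1+1/S+\cdots+1/S^{N-1}$. Write $L$ for the per-message symbol count (the ``message size'') and $U_s = \log_2|\mathcal{Q}_s|$ for the per-server query length; the goal is to lower bound $L$ in the uniformly decomposable class and $\sum_s U_s$ in the decomposable class.

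For the message-size lower bound, I would unpack uniform decomposability: every downloaded coded symbol is a sum of per-message components, and uniformity forces each message to contribute the same number of components to each server's answer. Combining this symmetry with the Sun--Jafar download breakdown---which prescribes, in any optimal scheme, exactly how many desired versus interference-canceling symbols each server transmits---and with the decoding condition~\eqref{eq:PIR decoding constraint}, I would count the distinct per-message components that must appear across the $S$ servers to recover a message of size $L$. The smallest integer $L$ for which such a uniform partition is consistent will turn out to be $S-1$: any smaller value either forces the per-server desired-symbol count to be non-integer or leaves no room for the interference structure that Theorem~\ref{thm:pircapacity} mandates. The matching upper bound $L=S-1$ follows from the explicit construction in~\cite{tian2019capacity}.

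For the upload lower bound in the (not necessarily uniform) decomposable class, I would argue that each server's query alphabet must be able to select among at least $N$ essentially distinct ``roles,'' one per possible demand. Privacy forces $Q_s^{[d]}$ to have a $d$-independent marginal, while decoding forces the joint distribution of $(Q_1^{[d]},\ldots,Q_S^{[d]})$ to depend on $d$ in a way that, via decomposability, identifies the desired message at each server by the coefficients applied to its component functions. A counting argument on the support of these joint distributions, made rigorous by the linearization that decomposability affords, would show $|\mathcal{Q}_s|\geq N^{N-1}$ for each $s$, i.e., $U_s \geq (N-1)\log_2 N$, once a canonical label is fixed for one of the $N$ message indices to absorb the symmetrization. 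Summing over $s\in[S]$ yields the $S(N-1)\log_2 N$ lower bound, matched by~\cite{tian2019capacity}.

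The main obstacle I anticipate is converting the abstract definition of ``decomposable'' (only that each coded symbol is a sum of per-message component functions, with no further structural constraint) into the rigid combinatorial statement needed for the upload bound. In particular, the converse must rule out schemes that multiplex several demand roles into a single query coordinate by reusing the same component functions across different demands. I would handle this by combining privacy in the form $H(Q_s^{[d]}) = H(Q_s)$ with a careful look at how component functions can be shared across demands, showing that any such sharing must respect an $N$-fold permutation symmetry on message labels. Once this rigidity is established, the counting arguments for both the $S-1$ bound and the $S(N-1)\log_2 N$ bound become routine consequences of the Sun--Jafar download constraint.
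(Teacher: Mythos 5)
You are attempting to re-derive a result that this paper does not prove at all: the theorem is labeled \textquotedblleft[\cite{tian2019capacity}]\textquotedblright{} and is simply imported as background from Tian, Sun, and Liu. There is no proof in the present paper to compare against; the proof lives entirely in that reference, which establishes both converse bounds under the decomposability hypotheses and exhibits the matching construction.

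Evaluating your sketch on its own terms, there are two concrete concerns. First, both of your lower bounds are asserted rather than derived. For the message-size claim you write that the smallest consistent $L$ ``will turn out to be $S-1$,'' but the counting argument that excludes $L \le S-2$ is the entire content of the converse; you never say what structural feature of a uniformly decomposable, download-optimal code collapses when $L < S-1$. Similarly, for the upload bound the crucial rigidity statement --- that decomposability forces a per-server query alphabet of size at least some explicit power --- is the step you flag as ``the main obstacle'' and then defer; without it there is no proof. Tian et al.\ handle this through a careful analysis of the component functions and the role each query plays in identifying which coded pieces of the demanded message it extracts; a high-level appeal to ``$N$-fold permutation symmetry on message labels'' does not substitute for that analysis. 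Second, there appears to be a dimensional inconsistency you have reproduced from the theorem statement itself: your argument targets $|\mathcal{Q}_s| \ge N^{N-1}$, giving $(N-1)\log_2 N$ bits per server, but the paper's own introduction quotes Tian et al.'s result as total upload cost $S(N-1)\log_2 S$, i.e.\ $|\mathcal{Q}_s| \ge S^{N-1}$. In the Tian et al.\ construction each query is an $(N-1)$-tuple over an alphabet of size $S$ (the number of servers), so the alphabet exponent should be the server count $S$, not the message count $N$. The theorem as typeset in this paper appears to contain a typo ($\log_2 N$ where $\log_2 S$ was intended), and your converse sketch inherits the wrong target; a correct proof would have to establish $|\mathcal{Q}_s| \ge S^{N-1}$ and the heuristic ``$N$ roles per demand'' you propose does not obviously produce a power of $S$.
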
 
 
 	Finally, we introduce the \textit{uniform demand and independent queries} (UDIQ) condition on PIR schemes which will be needed in our construction of coded caching schemes with private demands and caches. 
 	\begin{definition} [UDIQ condition] \label{def:udiq}
 		For a two-server PIR scheme, if the demand is uniformly distributed over $[N]$ and  
 		\begin{align}
 			I( Q^{[n]}_{1}; Q^{[n]}_2| W_1,\ldots, W_N)=0,  \ \forall n \in[N], \label{eq:independent queries constriants}
 		\end{align}
 		then the PIR scheme  satisfies the UDIQ condition. 
 	\end{definition}

	\section{Main Results} \label{sec:results}
In this section, we will present our main results on the coded caching problem with private demands and caches. 	
	We first show that the virtual users scheme reviewed in Section~\ref{sub:review of private caching} can also preserve the privacy of the users' caches. 
	\begin{theorem}
		\label{thm:virtual user}
		For the coded caching problem with private demands and caches, $R^{\star} $ is upper bounded by the lower convex envelop of the following memory-load tradeoff points, 
		\begin{align}
			\left(\frac{ t}{K}, \frac{\binom{NK}{t+1}-\binom{NK-N}{t+1} }{\binom{NK}{t}} \right), \ \forall t\in [0:NK]. 
			\label{eq:upper bound on load}
		\end{align}
	\end{theorem}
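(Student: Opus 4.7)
The plan is to invoke the virtual users construction of~\cite{DBLP:journals/corr/abs-1909-03324}, whose placement, delivery, and load guarantee are already set up in Section~\ref{sub:review of private caching}, and to supplement the original demand-privacy argument with a verification that the same scheme also enforces the cache-privacy constraint~\eqref{eq:cache privacy constraint}. Recall that the construction embeds the $K$ real users in a system of $NK$ ``slots'' grouped as $\{[(k-1)N+1:kN]\}_{k\in[K]}$, draws $S_k$ uniformly and independently from $[(k-1)N+1:kN]$ for each real user $k$, sets $Z_k$ to the slot-$S_k$ cache of the $(N,NK,M)$ YMA placement with $\mathscr{M}_k=S_k$, and then executes the YMA delivery on any virtual demand vector whose restriction to block $k$ is a permutation of $[N]$ placing $d_k$ at slot $S_k$. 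The memory bound $H(Z_k)\le MF$, decodability of $W_{d_k}$ by real user $k$ (by acting as slot $S_k$), and the achievable load equal to the $NK$-user YMA load --- which after the reparametrization $M/N=t/(NK)$ yields the operating points $\bigl(t/K,\,[\binom{NK}{t+1}-\binom{NK-N}{t+1}]/\binom{NK}{t}\bigr)$ for $t\in[0:NK]$ and, by memory-sharing, their lower convex envelope --- are all inherited from YMA.

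The conceptual core, and the step I expect to require the most care, is proving~\eqref{eq:cache privacy constraint} jointly with~\eqref{eq:demand privacy constraint}. The plan is a symmetry argument: $X_{\mathbf{d}}$ is a deterministic function of the virtual demand vector together with $W_{[N]}$, and each block's virtual demands always form a permutation of $[N]$. Fix any $k'\in[K]$; I would first condition on the pair $(d_{k'},S_{k'})$, which is determined by $(d_{k'},Z_{k'})$ together with $\mathscr{M}_{k'}$. Since the pairs $\{(d_k,S_k)\}_{k\neq k'}$ are mutually independent and each uniform on $[N]\times[(k-1)N+1:kN]$, and since the placement-to-permutation rule on block $k$ is an $N$-to-$1$ map whose preimage at every permutation consists of exactly one pair $(d_k,S_k)$ per choice of $S_k\in[(k-1)N+1:kN]$, the induced block permutation has a distribution that does not depend on which particular $(d_k,S_k)$ in its preimage was drawn.

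Consequently, the posterior on $(d_k,\mathscr{M}_k)$ given $(d_{k'},Z_{k'})$ and $X_{\mathbf{d}}$ remains uniform on $[N]\times[(k-1)N+1:kN]$ for every $k\neq k'$, which is exactly the joint statement $I\bigl((\mathbf{d},\mathscr{M}_{[K]});X_{\mathbf{d}}\bigm|d_{k'},Z_{k'}\bigr)=0$, implying both privacy identities simultaneously. The main obstacle is setting up the conditioning cleanly so that the independence of the $(d_k,S_k)$ across blocks decouples, and so that the within-block $N$-to-$1$ symmetry can be invoked without accidentally leaking $S_{k'}$ or $d_{k'}$ back into the posterior on the other blocks; once this careful bookkeeping is done, the residual step is a direct Bayes computation, and combining with the YMA load guarantee yields the memory-load tradeoff asserted in~\eqref{eq:upper bound on load}.
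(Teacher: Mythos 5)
Your proposal follows essentially the same route as the paper's proof: you invoke the virtual-users construction and argue cache (and demand) privacy from the fact that the cyclic-shift index $C_k=(S_k-d_k)\bmod N$ is what actually drives the transmission, while the per-block $N$-to-$1$ map from $(d_k,S_k)$ to $C_k$ keeps the posterior on $(d_k,S_k)$ uniform over the preimage. The paper formalizes the same observation via a chain-rule decomposition of the mutual information, writing $X_\mathbf{d}=(\mathbf{C},X^{np}(W_{[N]},\mathbf{C}))$ and showing $I(S_{[K]};\mathbf{C}\mid d_k,Z_k)=0$ and $I(S_{[K]};X^{np}\mid d_k,Z_k,\mathbf{C})=0$; your "posterior stays uniform" phrasing is the Bayes-rule mirror of that decomposition, and your joint statement $I\bigl((\mathbf{d},\mathscr{M}_{[K]});X_{\mathbf{d}}\mid d_{k'},Z_{k'}\bigr)=0$ subsumes both privacy constraints at once rather than citing~\cite{DBLP:journals/corr/abs-1909-03324} for demand privacy separately, but the underlying symmetry argument is identical.
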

	
	\begin{proof}
		The demand privacy constraint in \eqref{eq:demand privacy constraint} is already proved to hold in \cite{DBLP:journals/corr/abs-1909-03324}. To complete the proof, we need to show that the cache privacy constraint also holds. The virtual users scheme for parameters $(N,K,M)$ is build upon the non-private MAN scheme for parameters $(N,NK,M)$ when the demands of virtual users are carefully selected. In this scheme, user $k$ acts as user $\big((k-1)N+S_k\big)$ in the $(N, NK, M)$ non-private scheme in which $S_k\sim\text{Unif}\{[N]\}$. So the metadata of the cache content of user $k$ is determined by $S_k$, or equivalently $\mathscr{M}_k=S_k$. In this scheme, the users $(k-1)N+1, (k-1)N+2,\ldots,kN$ cover all $N$ possible demands. Following the demand construction of the $(N,NK,M)$ non private scheme in \cite{DBLP:journals/corr/abs-1909-03324}, define $C_k$ as follows, 
		\begin{align} \label{eq:shifts}
			C_k:=(S_k-d_k) \text{ mod } N, k \in [K]
		\end{align}
Then, let $q_k$ be the right cyclic shift of the vector $(1, \ldots, N)$ by $C_k$ positions. Thus the demand vector of the 		
		  $(N,NK,M)$ non-private scheme is $\mathbf{d}^{np}=(\mathbf{q}_1, \mathbf{q}_2, \ldots, \mathbf{q}_K)$. So we can see the demand vector in the non-private scheme is a function of   $\mathbf{C}:=(C_1,C_2,\ldots,C_K)$. The transmission of the server for one part should contain the vector $\mathbf{C}$ in order for the users to be able to decode their messages    \cite{DBLP:journals/corr/abs-1909-03324}. The other part of the transmission consists of a non-private $(N,NK,M)$ coded caching scheme based on the scheme in \cite{yu2017exact} which is a function of the library and $\mathbf{d}^{np}$ and since $\mathbf{d}^{np}$ is a function of $\mathbf{C}$, we denote this part of transmission as $X^{np}(W_{[K]}, \mathbf{C})$. So in the end we can write $X_\mathbf{d}=(\mathbf{C},X^{np}(W_{[K]}, \mathbf{C}))$. Now we can write the cache privacy criterion in \eqref{eq:cache privacy constraint} as follows,  
		\begin{subequations}
			\begin{align}
				&I\big( (\mathscr{M}_1,\ldots,\mathscr{M}_{K}); X_\mathbf{d} | d_k, Z_k\big) \nonumber \\ 
				&= I\big(S_1,\ldots,S_K;\mathbf{C},X^{np}(W_1,\ldots,W_K, \mathbf{C}) | d_k, Z_k\big) \\
				&=I(S_1,\ldots,S_K;\mathbf{C} | d_k, Z_k) \label{1}\\
				&+I(S_1,\ldots,S_K;X^{np}(W_1,\ldots,W_K, \mathbf{C}) | d_k, Z_k, \mathbf{C}) \label{2}
			\end{align}
		\end{subequations}
		Based on \eqref{eq:shifts} and the fact that the demands are uniformly distributed, the distribution of $\mathbf{C}$ does not change depending on knowing or not knowing the value of the vector $(S_1,\ldots,S_K)$. Thus the term in \eqref{1} is zero and since $\mathbf{C}$ is already in the condition in \eqref{2}, $X^{np}(W_1,\ldots,W_K, \mathbf{C})$ would not have any connection to $(S_1,\ldots,S_K)$ and this term is also zero. Therefore, both the privacy constraints \eqref{eq:demand privacy constraint} and \eqref{eq:cache privacy constraint} are satisfied and decodability in \eqref{eq:decoding} is already proved to hold in \cite{DBLP:journals/corr/abs-1909-03324}. This completes the proof. 
	\end{proof}

Note that it was proved in~\cite{wan2020coded} that the multiplicative gap between the achieved load by the virtual users scheme and 
 the converse bound of the non-private coded caching problem is at most $8$, except the case of $N<K$ and $M<N/K$. This order optimality result also holds for the considered coded caching problem with private demands and caches.

	\subsection{New construction on  coded caching with private demands}
	\label{sub:new construction}
		The subpacketization   of the virtual users scheme in Theorem~\ref{thm:virtual user} is     $2^{\mathcal{H}(M/N) NK }$ and is at most  exponential to  $NK$, while the subpacketization of the MAN scheme is  $2^{\mathcal{H}(M/N) K }$ and is at most  exponential to  $K$. Next, we aim   to   reduce the subpacketization   of the virtual users scheme while keeping demand and cache information private simultaneously.  
The key contribution of this paper is to propose a new construction strategy on private coded caching, which 
establishes a new relationship between two-server PIR schemes and private coded caching. 
We first consider demand-privacy, and propose a structure in the following theorem to construct  demand private coded caching schemes from PIR schemes. 
 The proof is given in Appendix~\ref{proofthm4}. 	
 	\begin{theorem}[From PIR to coded caching]
 	\label{thm:private demands}
 	Given any two-server PIR scheme with $N$ files and download cost pair $(R_{D_1}, R_{D_2})$ where $R_{D_i}$ corresponds to server $i$, there exists an $(N,K)$ coded caching scheme ($N$ files and $K$ users) with private demands  whose achieved memory-load tradeoff is the lower convex envelope of $(0,N)$, 
 	 \begin{align}
 		\left(  \frac{Nt}{K}+\big(1-\frac{t}{K}\big) \left(\mu_1 R_{D_1}+ \mu_2 R_{D_2}\right), \left(\mu_1 R_{D_2}+ \mu_2 R_{D_1}\right) \frac{K-t}{t+1}  \right) ,   \forall t\in [0:K-1],
 		\label{eq:novel approach achieved memory load}
 	\end{align}
 	and $(N,0)$, where $\mu_1,\mu_2 \in [0,1], \mu_1+\mu_2=1$. Assume the needed subpacketization of the given PIR scheme  is $F^{\prime}$, then the needed subpacketization for each point in \eqref{eq:novel approach achieved memory load} with $t\in [0:K-1]$ is $\binom{K}{t} F^{\prime}$.  
 	\end{theorem}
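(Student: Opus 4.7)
I would construct the claimed scheme in two layers: a Maddah-Ali–Niesen (MAN) uncoded placement with parameter $t$ on top of which a two-server PIR layer is plugged in, and realize the free parameter $(\mu_1,\mu_2)$ by memory-sharing the two natural ``cache plays server $j$'' corner schemes ($j=1,2$). Specifically, each file $W_n$ is split into $\binom{K}{t}$ equal-size subfiles $W_{n,\mathcal{S}}$ indexed by $\mathcal{S}\subseteq[K]$ with $|\mathcal{S}|=t$, and user $k$ caches $W_{n,\mathcal{S}}$ for every $\mathcal{S}\ni k$; this uses memory $Nt/K$. Each MAN subfile is further split into two sub-parts of relative sizes $\mu_1,\mu_2$, and the PIR layer is applied to each sub-part independently. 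For every $(t+1)$-subset $\mathcal{T}\ni k$ and each $j\in\{1,2\}$, user $k$ independently draws the PIR server-$j$ query $Q_j^{(k,\mathcal{T},j)}$ from its PIR marginal (so, by~\eqref{eq:PIR privacy constraint}, independent of its eventual demand) and caches the server-$j$ answer applied to the local library $\{W_{n,\mathcal{T}\setminus\{k\}}^{(j)}\}_{n\in[N]}$. Summing over the $\binom{K-1}{t}$ such sets $\mathcal{T}$ per user produces the extra memory $(1-t/K)(\mu_1 R_{D_1}+\mu_2 R_{D_2})$, as required.

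\textbf{Delivery, load, and decodability.} Given $\mathbf{d}$, for each triple $(\mathcal{T},k\in\mathcal{T},j)$, user $k$ samples the complementary query $Q_{3-j}^{(k,\mathcal{T},j)}$ from the PIR conditional $P(\cdot\mid Q_j^{(k,\mathcal{T},j)},d_k)$ and announces it as header information; the server then computes the server-$(3-j)$ answer $A^{(k,\mathcal{T},j)}$ on the same local library and, per pair $(\mathcal{T},j)$, broadcasts the single XOR $\bigoplus_{k\in\mathcal{T}}A^{(k,\mathcal{T},j)}$. Any other user $k'\in\mathcal{T}\setminus\{k\}$ has $\{W_{n,\mathcal{T}\setminus\{k\}}^{(j)}\}_n$ in its cache and can reproduce $A^{(k,\mathcal{T},j)}$ from the public query, hence cancel it from the XOR to extract $A^{(k',\mathcal{T},j)}$; combined with its cached server-$j$ answer and queries, this decodes $W_{d_{k'},\mathcal{T}\setminus\{k'\}}^{(j)}$ by~\eqref{eq:PIR decoding constraint}. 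Each per-$(\mathcal{T},j)$ broadcast has size $\mu_j R_{D_{3-j}}/\binom{K}{t}$ of a file, and summing over $j$ and the $\binom{K}{t+1}=\tfrac{K-t}{t+1}\binom{K}{t}$ multicasts yields load $(\mu_1 R_{D_2}+\mu_2 R_{D_1})(K-t)/(t+1)$. The subpacketization is $\binom{K}{t}F'$ since each MAN subfile is further split per the PIR scheme; the trivial boundary point $(N,0)$ together with the $t=0$ corner closes the lower convex envelope.

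\textbf{Privacy and the main obstacle.} To verify $I(\mathbf{d};X_\mathbf{d}\mid d_k,Z_k)=0$, I would show that for every $k'\neq k$ and every $(\mathcal{T},j)$ with $k'\in\mathcal{T}$, the announced query $Q_{3-j}^{(k',\mathcal{T},j)}$ is, conditional on $(d_k,Z_k,W_{[N]})$, distributed according to the PIR marginal of $Q_{3-j}$, which is independent of $d_{k'}$ by~\eqref{eq:PIR privacy constraint}. The key point is that user $k$'s view marginalizes over the cached companion query $Q_j^{(k',\mathcal{T},j)}$ and all of user $k'$'s private PIR randomness, both of which are independent of $(d_k,Z_k)$. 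Once the published queries are shown to be independent of $\mathbf{d}_{-k}$ in this sense, the XOR is a deterministic function of the fixed library and these queries and therefore inherits the same independence. The main obstacle is precisely this joint-privacy argument: although each two-server PIR instance is private in isolation, the broadcast packs together $2(t+1)\binom{K}{t+1}$ PIR query/answer pieces, and we must argue that their joint leakage factorizes; this is secured by drawing the PIR randomness independently across the $(k,\mathcal{T},j)$ triples and by the fact that each user's PIR randomness lives only in that user's private cache $Z_{k'}$.
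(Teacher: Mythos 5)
Your construction matches the paper's proof of Theorem~\ref{thm:private demands}: a MAN uncoded placement for the first part of the cache, a two-server PIR answer stored as a private key for the second part, delivery by XOR-ing the complementary server's PIR answers across each MAN multicast set with the companion queries published as negligible metadata, and demand privacy inherited term-by-term from the PIR query privacy constraint~\eqref{eq:PIR privacy constraint}. The only cosmetic differences are that the paper draws a single query $Q_{1,k}$ per user and reuses it for every subfile index $\tau$ (since the PIR demand is the same $d_k$ for all subfiles of $W_{d_k}$), rather than a fresh independent PIR instance per $(\mathcal{T},j)$, and it realizes the $(\mu_1,\mu_2)$ parameter by time-sharing the PIR scheme itself (swapping server roles on a $\mu_2$ fraction) before plugging it into the caching layer, which is equivalent to your per-subfile split.
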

 

 
 	Since based on Theorem \ref{thm:private demands} we are allowed to use any two-server PIR scheme, we can choose the one in \cite{tian2019capacity} which has the optimal total download cost $R_D^{\star}:= 1+1/2+(1/2)^2+\cdots+(1/2)^{N-1}$ and subpacketization level of $F'=1$. Therefore, using the scheme in \cite{tian2019capacity} into Theorem \ref{thm:private demands} (for $\mu_1=\mu_2=1/2$), we will have the following result.
 	\begin{corollary}
 		\label{cor:private demands}
 		For the $(N,K)$  coded caching problem with private demands in \cite{wan2020coded}, there exists a scheme whose achieved memory-load tradeoff is the lower convex envelope of $(0,N)$, 
 		\begin{align}
 			(M,R)= \left(  \frac{Nt}{K}+\big(1-\frac{t}{K}\big)\frac{R_D^{\star}}{2}, \frac{R_D^{\star}}{2}\frac{K-t}{t+1}  \right) ,   \forall t\in [0:K-1],
 			\label{eq:achieved memory load private demands}
 		\end{align}
 		and $(N,0)$. The needed subpacketization for each point in~\eqref{eq:achieved memory load private demands} with $t\in [0:K-1]$ is $\binom{K}{t}$.
 	\end{corollary}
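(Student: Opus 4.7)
The plan is to obtain this corollary as a direct specialization of Theorem~\ref{thm:private demands}. First I would invoke the PIR scheme of~\cite{tian2019capacity} with $S=2$ servers and $N$ messages. By Theorem~\ref{thm:pircapacity} the total download cost of this scheme is exactly $R_D^\star = 1 + 1/2 + \cdots + (1/2)^{N-1}$, and by the construction in~\cite{tian2019capacity} the scheme has subpacketization $F' = S - 1 = 1$. I would then argue (or invoke the symmetry of the construction in~\cite{tian2019capacity}) that the per-server download costs satisfy $R_{D_1} = R_{D_2} = R_D^\star/2$; if one prefers not to rely on symmetry directly, the same conclusion can be obtained by averaging the two assignments of servers to roles, which corresponds exactly to the choice $\mu_1 = \mu_2 = 1/2$ in Theorem~\ref{thm:private demands}.

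With these values substituted into the generic tradeoff~\eqref{eq:novel approach achieved memory load}, the combinations simplify: $\mu_1 R_{D_1} + \mu_2 R_{D_2} = R_D^\star/2$ and $\mu_1 R_{D_2} + \mu_2 R_{D_1} = R_D^\star/2$. Plugging these into~\eqref{eq:novel approach achieved memory load} directly yields, for each $t \in [0:K-1]$, the memory-load point
\begin{align}
\left( \frac{Nt}{K} + \Bigl(1 - \frac{t}{K}\Bigr)\frac{R_D^\star}{2}, \ \frac{R_D^\star}{2}\cdot\frac{K-t}{t+1} \right),
\end{align}
together with the trivial corner points $(0,N)$ and $(N,0)$, and the overall achievable region is the lower convex envelope of these, as stated. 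The subpacketization claim is immediate from the second part of Theorem~\ref{thm:private demands}: the required subpacketization at parameter $t$ is $\binom{K}{t} F' = \binom{K}{t}$.

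The only non-routine step is ensuring that the two per-server download costs are balanced so that choosing $\mu_1 = \mu_2 = 1/2$ gives the claimed expression; I expect this to be essentially automatic because the scheme in~\cite{tian2019capacity} treats both servers symmetrically, but if needed it can be made rigorous by time-sharing between the PIR scheme and its server-swapped counterpart, which preserves both the total download cost $R_D^\star$ and the unit subpacketization (at the cost of at most a factor that does not affect the stated expression). All remaining content, in particular the demand-privacy guarantee and the decoding correctness, is inherited verbatim from Theorem~\ref{thm:private demands}, so no additional argument is required.
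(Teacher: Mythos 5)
Your proof follows essentially the same route as the paper: plug the two-server PIR scheme of~\cite{tian2019capacity} (total download cost $R_D^\star$, subpacketization $F'=1$) into Theorem~\ref{thm:private demands} with $\mu_1=\mu_2=1/2$, so that $\mu_1 R_{D_1}+\mu_2 R_{D_2}=\mu_1 R_{D_2}+\mu_2 R_{D_1}=R_D^\star/2$ regardless of how $R_D^\star$ splits across the servers, and read off the memory-load points and the $\binom{K}{t}F'=\binom{K}{t}$ subpacketization. One caveat: your first suggested justification, that the scheme in~\cite{tian2019capacity} ``treats both servers symmetrically'' so that $R_{D_1}=R_{D_2}=R_D^\star/2$, is factually incorrect (e.g., for $N=2$ that scheme has $(R_{D_1},R_{D_2})=(1/2,1)$, see Table~\ref{tab:tablentian}); your fallback via the time-sharing parameter $\mu_1=\mu_2=1/2$ is the correct argument and is exactly what the paper uses.
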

 	
 	\begin{remark}[Comparison to the demand private caching scheme in~\cite{yan2021fundamental}] \label{re:simp}
 	
 In Theorem \ref{thm:private demands} for the time-sharing parameters   $\mu_1=\mu_2=1/2$ (or the case where $R_{D_1}=R_{D_2}$), the points in \eqref{eq:novel approach achieved memory load} become
 		\begin{align}
 			(M,R) = \left(\frac{Nt}{K}+\big(1-\frac{t}{K}\big) \frac{R_D}{2}, \frac{R_D}{2} \frac{K-t}{t+1}\right), \ \forall t \in [0:K], \label{eq:half half}
 		\end{align}
 	where $R_D=R_{D_1}+R_{D_2}$ represents the total download cost.
 	The memory-load tradeoff for the demand private scheme of \cite{yan2021fundamental} for the case $K \leq N+t$, follows  $(M,R)= \left(  \frac{Nt}{K}+\big(1-\frac{t}{K}\big), \frac{K-t}{t+1}  \right)$, which is  order optimal within a constant gap. So when $K \leq N+t$,  the achieved memory-load tradeoff in~\eqref{eq:half half} is strictly better than    \cite{yan2021fundamental} if the selected PIR scheme has the total download cost   $ R_D=R_{D_1}+   R_{D_2}<2$. In this case, the resulting scheme is also order optimal within a constant gap. 
 	 When $K \leq N+t$, the demand private coded caching scheme in \cite{yan2021fundamental} is a special case of our construction in Theorem~\ref{thm:private demands}; by applying the two-server PIR scheme in~\cite{shah2014one} into Theorem~\ref{thm:private demands}, the resulting coded caching scheme with private demands becomes  the privacy key scheme in \cite{yan2021fundamental}. Since the total download cost of the two-server PIR scheme in~\cite{tian2019capacity} is strictly lower than $2$,  when  $K \leq N+t$ the proposed caching scheme   in Corollary \ref{cor:private demands} has a strictly better performance on the memory-load tradeoff than the scheme in \cite{yan2021fundamental}, while the needed subpacketizations of these two schemes are the same.   Note that when $K > N+t$, the proposed demand-private scheme is also order optimal within a constant gap, by using a similar proof as~\cite[Appendix D]{yan2021fundamental}.\footnote{\label{foot:order optimal for K>N}More precisely, by the same proof for the case $M\leq 1$, we can show the load equal to $N$ is order optimal within a factor of  $4$; when $M>1$, we can show that the gap between the proposed scheme and the MAN scheme is within a constant gap. In addition, the memory-sharing between $(0,N)$ and the MAN scheme is order optimal within a factor of $4$~\cite{improvedlower2017Ghasemi}. So we can prove that our scheme is also order optimal within a constant gap.}
 	\end{remark}


 	\begin{remark}[Comparison to the virtual users scheme]
 		A comparison on the loads of the virtual users scheme in Theorem \ref{thm:virtual user} and our construction with the optimal PIR scheme in Corollary \ref{cor:private demands} is depicted in Figure \ref{fig:compare}. Note that the subpacketization of the virtual users scheme is   $2^{\mathcal{H}(M/N) NK }$ and is at most  exponential to  $NK$, while that of Corollary \ref{cor:private demands} is $2^{\mathcal{H}(M/N) K }$ and is at most  exponential to  $K$.
 	\end{remark}

 	\begin{figure} [h]
 		\centering
 		\includegraphics[width=120mm]{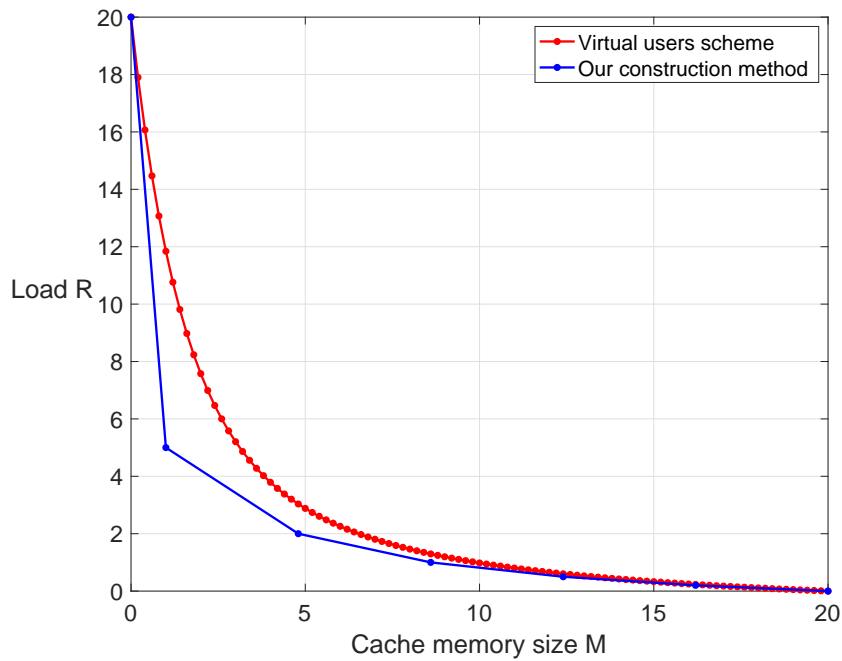}
 		\caption{Comparison of loads for parameters $N=20, K=5$, and different values of $M$ for the virtual users scheme in Theorem \ref{thm:virtual user} and our construction in Corollary \ref{cor:private demands}. }
   \label{fig:compare}
 	\end{figure}

 	\begin{remark}
 		The connection of PIR and demand private coded caching in our structure in Theorem~\ref{thm:private demands}, emerges from the fact that the individual queries sent to the servers solely do not reveal any information about the demanded file index. Therefore, the query to one server can be used to fill out the cache memory and the query to the other server to build up the server transmission, without revealing any information about the demanded indices by the users. This logic holds for any PIR scheme including multi-message PIR schemes. Specifically, if we assume $d_k, k \in [K]$ denotes the set of demands by user $k$, and $\mathbf{d} = (d_1, d_2, \dots. d_K)$, the proof of Theorem \ref{thm:private demands} in Appendix \ref{proofthm4} works without any change. In this case by using these schemes, each user can request multiple files in the coded caching scheme while preserving the privacy of these demands. 
 	\end{remark}

 We can also  extend the proposed construction in Theorem~\ref{thm:private demands} to obtain a more flexible tradeoff among the memory, load, and subpacketization, 
 by using any  coded caching scheme under PDA construction~\cite{yan2017placement}, instead of the MAN caching scheme (recall that the MAN scheme can be also seen as a caching scheme under PDA construction). This extension is feasible because the coded caching schemes under PDA construction  is based on uncoded cache placement (which is symmetric across files) and clique-covering delivery.\footnote{\label{foot:clique covering}The clique-covering delivery means that, in the delivery phase several multicast messages are broadcasted to the users. Each multicast message is a sum of subfiles and  useful to a subset of users, where each user requests one subfile and caches all the other subfiles.} Directly from the proof of Theorem~\ref{thm:private demands}, we can have the following corollary.
 	\begin{corollary}
 		\label{cor:private demands with PDA}
 		Given any two-server PIR scheme with $N$ files and download cost pair $(R_{D_1}, R_{D_2})$, and given any non-private coded caching scheme under PDA construction with memory-load tradeoff $(M_1,R_1)$, 
  	there exists an $(N,K)$ coded caching scheme   with private demands  which can  achieve the memory-load tradeoff point
 	 \begin{align}
 		(M,R)= \big( M_1+ (1-M_1/N) \left(\mu_1 R_{D_1}+ \mu_2 R_{D_2}\right) ,\left(\mu_1 R_{D_2}+ \mu_2 R_{D_1}\right) R_1  \big), 
 		\label{eq:novel approach achieved memory load with PDA}
 	\end{align}
 	where $\mu_1,\mu_2 \in [0,1], \mu_1+\mu_2=1$. Assume the   subpacketizations of the given PIR scheme   and of the non-private coded caching scheme are   $F^{\prime}$ and $F^{\prime\prime}$, respectively;   then the needed subpacketization  of the resulting coded caching scheme   with private demands  is $  F^{\prime} F^{\prime\prime}$.  
 	\end{corollary}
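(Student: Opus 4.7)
}

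The plan is to repeat the construction from the proof of Theorem~\ref{thm:private demands} essentially verbatim, replacing the MAN caching scheme (which appears there as the ``non-private backbone'' obtained by taking $M_1=Nt/K$, $R_1=(K-t)/(t+1)$) with a generic PDA-based scheme achieving $(M_1,R_1)$. The point is that the proof of Theorem~\ref{thm:private demands} only uses two structural features of the backbone caching scheme: (i) the cache placement is uncoded and symmetric across the files (so each user stores a fraction $M_1/N$ of each file, and the identity of the stored fraction is determined by the user index alone, not by the file), and (ii) the delivery uses clique-covering (each transmitted symbol is the XOR of subfiles requested by a clique of users, each of whom caches all the other summands). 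Both of these properties are exactly the defining characteristics of any PDA construction, so the same machinery applies.

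Concretely, I would do the following. First, split each file $W_n$ into two independent parts $W_n = (W_n^{(1)}, W_n^{(2)})$ of sizes $M_1 F/N$ and $(1-M_1/N)F$, so that the PDA backbone will be used on the first part, while the PIR construction will operate on the second part. Next, instantiate the PDA scheme of parameters $(M_1, R_1)$ on $W_{[N]}^{(1)}$: with probability $\mu_1$ populate the users' caches in the ``forward'' assignment and with probability $\mu_2$ in the ``swapped'' assignment (analogous to the two PIR-server roles in Theorem~\ref{thm:private demands}); the expected memory footprint from this part is exactly $M_1$. Then, for the second part, invoke the two-server PIR scheme on $W_{[N]}^{(2)}$: the user caches play the role of one server, and the server broadcast plays the role of the other server, with the $\mu_1/\mu_2$ randomization determining which PIR-query is stored and which is transmitted. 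Because each user's cache holds only one PIR query (as a function of the files) and the broadcast carries only the other, this step adds $(1-M_1/N)(\mu_1 R_{D_1}+\mu_2 R_{D_2})$ to the memory and $(\mu_1 R_{D_2}+\mu_2 R_{D_1})R_1$ to the load, where the $R_1$ factor comes from multicasting the PIR-type answer through the PDA clique-covering pattern rather than through the specific MAN pattern.

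The demand-privacy verification is identical to that in the proof of Theorem~\ref{thm:private demands}: conditioned on user $k$'s own cache and demand, the broadcast is distributed as a function of independent PIR queries for the other users' demands, and the PIR privacy constraint~\eqref{eq:PIR privacy constraint} makes each such query uninformative about the corresponding demand; symmetry of the PDA placement across files ensures that the backbone part contributes no demand information either. For the subpacketization count, the PDA scheme needs $F''$ subpackets per file and the PIR scheme needs $F'$; since the two constructions operate on disjoint ``coordinates'' (file split) and are combined via a product structure, the overall subpacketization is $F' F''$, as claimed.

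The only mildly delicate step is checking that the clique-covering delivery of the PDA is compatible with embedding PIR answers into each multicast, rather than the raw subfile XORs used by MAN. The key observation is that for a PDA, within each clique every user caches all summands except its own requested subfile; replacing that requested subfile by the PIR answer evaluated on the corresponding block of $W_n^{(2)}$ preserves decodability (the user combines the cached PIR query with the received PIR answer to reconstruct its block) and preserves privacy (the PIR query seen by any other user is still a valid query for a uniform, independent demand). Once this compatibility is spelled out, the memory, load, privacy, and subpacketization computations fall out directly, and the corollary follows.
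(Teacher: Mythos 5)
Your high-level intuition is right — the proof of Theorem~\ref{thm:private demands} only uses file-symmetric uncoded placement and clique-covering delivery, which are exactly the PDA properties — and your final paragraph even describes the correct mechanism (replace each requested subfile inside a clique by a server-2 PIR answer and put the server-1 answer in the cache). But the concrete construction in your middle paragraphs contradicts this and would fail. The paper does \emph{not} split each file into a ``PDA part'' $W_n^{(1)}$ and a ``PIR part'' $W_n^{(2)}$. Instead, the PDA gives the subpacketization of the \emph{whole} file into $F''$ subfiles, each of which is further cut into $F'$ pieces for the PIR; user $k$ caches the PDA-prescribed subfiles directly (memory $M_1$) and, for every \emph{uncached} subfile index $j$, caches the server-1 answer $\gamma_1(Q_{1,k}, W_{1,j},\ldots,W_{N,j})$ (memory $(1-M_1/N)R'_{D_1}$); delivery replaces each PDA multicast XOR by the corresponding sum of server-2 answers (load $R_1 R'_{D_2}$). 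The PIR has to ride on exactly the subfiles for which the PDA delivery is designed — it is this nesting that makes the multicast decodable and that gives the product subpacketization $F'F''$; a bit-disjoint file split would give $F'+F''$ (or a maximum), not the product.

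Under your disjoint split the arithmetic and decodability both fail. Instantiating the PDA with tradeoff $(M_1,R_1)$ on files of size $M_1 F/N$ yields memory $(M_1/N)\cdot M_1$, not $M_1$; so your ``$M_1$ footprint'' implicitly stores \emph{all} of $W^{(1)}_{[N]}$, which discards the PDA structure and leaves no cache asymmetry to exploit. And applying the two-server PIR to the monolithic $W^{(2)}_{[N]}$ forces a distinct server-2 answer per user in the broadcast, giving a load of order $K \cdot (1-M_1/N) R'_{D_2}$, not $(\mu_1 R_{D_2}+\mu_2 R_{D_1}) R_1$: with no structured cache over $W^{(2)}$ there is no clique across which those answers can be summed. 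Finally, the $\mu_1,\mu_2$ in Theorem~\ref{thm:private demands} time-share the roles of the two \emph{PIR servers} within each PIR instance; there is no ``forward''/``swapped'' variant of a PDA placement, so attaching $\mu_1/\mu_2$ to the PDA is a misplaced import. Once you drop the file split and adopt the nested subfile construction, the corollary follows exactly as in the proof of Theorem~\ref{thm:private demands}, which is why the paper states it as an immediate consequence.
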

By applying coded caching schemes under PDA construction into Corollary~\ref{cor:private demands with PDA}, we can further reduce the subpacketization of the scheme in Theorem~\ref{thm:private demands}.

\subsection{New construction on  coded caching with private demands and caches}
Next, we consider the construction of coded caching schemes with both demand privacy and cache privacy. This is given in the following result, proved in Appendix~\ref{proofthm5}.
 	\begin{theorem}  \label{thm:cachepir}
 		Given any two-server $N$-message PIR scheme satisfying the UDIQ condition in Definition~\ref{def:udiq} with download cost pair $(R_{D_1}, R_{D_2})$ where $R_{D_i}$ corresponds to server $i$ and time-sharing parameters $\mu_1,\mu_2$ where $\mu_1,\mu_2 \in [0,1], \mu_1+\mu_2=1$, there exists an $(N,K)$  coded caching scheme with private demands and caches whose achieved memory-load tradeoff  $(M,R)$ is the lower convex envelope of $(0,N)$, $(N,0)$, and the points in~\eqref{eq:novel approach achieved memory load}. Assume the needed subpacketization of the given PIR scheme  is $F^{\prime}$, then the needed subpacketization for each point in \eqref{eq:novel approach achieved memory load} with $t\in [0:K-1]$ is $\binom{K}{t} F^{\prime}$.  
	\end{theorem}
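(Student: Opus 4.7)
The plan is to reuse the construction developed for Theorem~\ref{thm:private demands} verbatim and show that the UDIQ property upgrades demand privacy to joint demand-and-cache privacy without modifying the scheme or the load/subpacketization analysis. Recall that in that construction, for each user $k$ we run an independent instance of the given two-server PIR scheme with the random demand $d_k\in[N]$, producing a query pair $(Q^{[d_k]}_1, Q^{[d_k]}_2)$; the first query determines the non-MAN portion of user $k$'s cache $Z_k$, while the aggregate $\{Q^{[d_k]}_2\}_{k\in[K]}$ drives the PIR-related part of the delivery transmission $X_\mathbf{d}$. Consequently the metadata $\mathscr{M}_k$ is captured by the realization of $Q^{[d_k]}_1$ together with the (symmetric) MAN placement randomness, and is contained in $Z_k$ by the convention following~\eqref{eq:memory size}.

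Load, subpacketization, and decodability are identical to Theorem~\ref{thm:private demands}, and demand privacy is inherited unchanged. It therefore remains to establish~\eqref{eq:cache privacy constraint}, namely $I(\mathscr{M}_{[K]}; X_\mathbf{d}\mid d_k, Z_k)=0$ for every $k\in[K]$. Since $\mathscr{M}_k$ is a function of $Z_k$, this reduces to showing
\[
I\bigl(\mathscr{M}_{[K]\setminus\{k\}};\, X_\mathbf{d}\,\big|\, d_k, Z_k\bigr)=0.
\]

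The core is a conditional-independence argument. By construction, $X_\mathbf{d}$ is a deterministic function of $(W_{[N]}, \mathbf{d}, \{Q^{[d_j]}_2\}_{j\in[K]})$. For each $j\neq k$ three facts hold: (i) $\mathscr{M}_j=Q^{[d_j]}_1$ is independent of $d_j$ by PIR privacy~\eqref{eq:PIR privacy constraint}; (ii) $\mathscr{M}_j$ is independent of $Q^{[d_j]}_2$ by the UDIQ condition~\eqref{eq:independent queries constriants}; and (iii) user $j$'s entire PIR instance is independent of every other user's, because each user draws fresh randomness. Combining these with the fact that $(d_k, Z_k)$ is determined by the library and user $k$'s own independent instance, one concludes that $(\mathscr{M}_j)_{j\neq k}$ is conditionally independent of $(W_{[N]}, \mathbf{d}, \{Q^{[d_i]}_2\}_{i\in[K]})$ given $(d_k, Z_k)$, and therefore of the deterministic image $X_\mathbf{d}$.

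The main obstacle is bookkeeping: one must verify that conditioning on $Z_k$ does not secretly introduce a dependence between $(\mathscr{M}_j)_{j\neq k}$ and $X_\mathbf{d}$. This is where we use that $Z_k$ depends only on $(W_{[N]}, \mathscr{M}_k)$ and on no other user's PIR randomness, and that the uniform-demand half of UDIQ is automatically matched by the caching problem's assumption $d_k\sim\mathrm{Unif}([N])$ i.i.d., so the joint distribution of each user's queries is exactly the one under which UDIQ was formulated. Once these are in place, a chain-rule expansion of the conditional mutual information collapses term by term to zero, completing the proof.
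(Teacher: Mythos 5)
Your overall strategy mirrors the paper's: inherit the construction, decodability, demand privacy, load, and subpacketization from Theorem~\ref{thm:private demands}, and invoke UDIQ only to establish~\eqref{eq:cache privacy constraint}. However, the pivotal conditional-independence claim you assert is too strong and is in fact false. You conclude that $(\mathscr{M}_j)_{j\neq k}$ is conditionally independent of $\big(W_{[N]}, \mathbf{d}, \{Q_{2,i}\}_{i\in[K]}\big)$ given $(d_k,Z_k)$, and you derive this by ``combining'' (i) $\mathscr{M}_j \perp d_j$ (PIR privacy) and (ii) $\mathscr{M}_j \perp Q_{2,j}$ (UDIQ). But those are two separate \emph{pairwise} independences; they do not imply $\mathscr{M}_j \perp (d_j, Q_{2,j})$. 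In fact, in a typical PIR scheme the pair $(Q_1,d)$ determines $Q_2$ (that is what makes the construction decodable), so $Q_1$ can be recovered from $(d,Q_2)$ and is maximally dependent on the pair even though it is independent of each coordinate alone. A concrete witness: the $N=2$ scheme with $Q_1$ uniform on $\{0,1\}$ and $Q_2 = Q_1 \oplus (d-1)$ satisfies UDIQ, yet $Q_1$ is a deterministic function of $(d,Q_2)$. So the statement you rely on does not follow from (i)--(iii), and for most UDIQ PIR schemes it simply does not hold.

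The repair is exactly what the paper's chain~\eqref{eq:cacheprivacy1}--\eqref{eq:cache privacy} encodes: $X_\mathbf{d}$ is already a deterministic function of $\big(W_{[N]}, \{Q_{2,i}\}_{i\in[K]}\big)$ \emph{without} $\mathbf{d}$ (the multicast messages $Y_{\mathcal{S}}$ and the metadata depend only on the $Q_{2,i}$'s and the library), so you only need the weaker conditional independence $(\mathscr{M}_j)_{j\neq k} \perp \big(W_{[N]}, \{Q_{2,i}\}_{i\in[K]}\big) \mid (d_k,Z_k)$. That one is true, and it follows from (ii) and (iii) together with the observation that $Z_k$ is a function of $(W_{[N]},\mathscr{M}_k)$ alone. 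Fact (i) --- PIR privacy --- is precisely the ingredient used for demand privacy in Theorem~\ref{thm:private demands}; it contributes nothing to the cache-privacy step, and including $\mathbf{d}$ in the tuple is what pushed your claim past the threshold of truth. Once $\mathbf{d}$ is dropped, your argument collapses to the paper's own derivation.
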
 
 	The novelty in the construction in Theorem \ref{thm:private demands} is to generate private keys by a two-server PIR scheme. In the privacy key scheme~\cite{yan2021fundamental}, in addition to caching subfiles as in the MAN caching scheme, for each  set $\mathcal{V}\subseteq [K]$ where $k\notin \mathcal{V}$ and $|\mathcal{V}|=t$, each user $k$ also caches a random linear combination of $W_{1,\mathcal{V}}, \ldots,W_{N,\mathcal{V}}$ (assumed to be $p_1W_{1,\mathcal{V}}+\cdots+p_{N}W_{N,\mathcal{V}}$) in its caches as a private key, such that the effective demand of user $k$ in the delivery phase becomes 
 	$$p_1W_{1,\mathcal{V}}+\cdots+ p_{d_k-1} W_{d_k-1,\mathcal{V}}+ (p_{d_k}+1)W_{d_k,\mathcal{V}} +p_{d_k+1} W_{d_k+1,\mathcal{V}} +\cdots +p_{N}W_{N,\mathcal{V}}.$$
  	Thus the privacy of  the user's demand   could be preserved. 
  	 In our construction, instead of storing a random linear combination of $W_{1,\mathcal{V}}, \ldots,W_{N,\mathcal{V}}$, we apply any two-server PIR scheme where we treat each of  $W_{1,\mathcal{V}}, \ldots,W_{N,\mathcal{V}}$ as a file in the PIR problem. The answer of the first server in the PIR scheme serves as the     private key  stored by user $k$; according to the demand of user $k$, the answer of the second server in the PIR scheme serves as the effective request of user $k$. 
Then in Theorem \ref{thm:cachepir}, if the PIR scheme additionally satisfies the UDIQ condition, the resulting coded caching scheme satisfies the cache privacy condition in addition to the demand privacy condition.

From the same reason on deriving Corollary~\ref{cor:private demands with PDA}, we can also extend Theorem~\ref{thm:cachepir} by using other coded caching schemes under PDA construction, and obtain the following corollary.
\begin{corollary}  \label{cor:cachepir PDA}
 		Given any two-server $N$-message PIR scheme satisfying the UDIQ condition in Definition~\ref{def:udiq}  with download cost pair $(R_{D_1}, R_{D_2})$, and given any non-private coded caching scheme under PDA construction with memory-load tradeoff $(M_1,R_1)$, 
 		 there exists an $(N,K)$  coded caching scheme with private demands and caches which achieves the memory-load tradeoff the point in~\eqref{eq:novel approach achieved memory load with PDA}. Assume the   subpacketizations of the given PIR scheme   and of the non-private coded caching scheme are   $F^{\prime}$ and $F^{\prime\prime}$, respectively;   then the needed subpacketization  of the resulting coded caching scheme   with private demands  is $  F^{\prime} F^{\prime\prime}$.  
	\end{corollary}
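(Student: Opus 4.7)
The plan is to mirror the construction used in Theorem~\ref{thm:cachepir}, replacing the MAN scheme (a special case of PDA) with the given non-private PDA-based scheme of memory-load tradeoff $(M_1,R_1)$ and subpacketization $F''$. Because every PDA-based scheme inherits the two structural properties actually invoked in the proof of Theorem~\ref{thm:cachepir}, namely an uncoded placement that is symmetric across files and a clique-covering delivery in which each broadcast is a sum of subfiles with each subfile desired by exactly one user in the corresponding subset and cached by all the others, the construction and its analysis carry over almost verbatim.

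First I would run the non-private PDA-based placement, splitting each file into $F''$ subfiles so that every broadcast clique $\mathcal{V}$ in the delivery array indexes a block of $N$ parallel subfiles $W_{1,\mathcal{V}},\ldots,W_{N,\mathcal{V}}$, one per file, with the designated user in $\mathcal{V}$ being the unique one missing each such subfile. On top of this placement, each user $k$ independently runs the first-server side of the given two-server UDIQ PIR scheme over the $N$ subfiles $\{W_{n,\mathcal{V}}\}_{n\in[N]}$ associated with each clique $\mathcal{V}$ it belongs to: the server-$1$ answer $A_1^{[d_k]}$ is stored in the cache as a private key. After time-sharing between this construction and its dual (where the roles of the two servers are swapped) with weights $\mu_1,\mu_2$, the per-file cache entropy becomes $M_1+(1-M_1/N)(\mu_1 R_{D_1}+\mu_2 R_{D_2})$, matching~\eqref{eq:novel approach achieved memory load with PDA}.

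In the delivery phase, each user $k$ sends its query $Q_2^{[d_k]}$ to the server, which broadcasts, for each clique $\mathcal{V}$, the effective server-$2$ answer $A_2^{[d_k]}$ evaluated on $\{W_{n,\mathcal{V}}\}_{n\in[N]}$. These effective requests substitute for the originally demanded subfiles in the PDA clique-covering delivery: user $k$ XORs the received message with the cached effective requests of the other members of $\mathcal{V}$, recovers its own $A_2^{[d_k]}$, and combines it with the key $A_1^{[d_k]}$ to decode $W_{d_k,\mathcal{V}}$ by PIR decodability~\eqref{eq:PIR decoding constraint}. The delivery load is then $R_1(\mu_1 R_{D_2}+\mu_2 R_{D_1})$, and the overall subpacketization is $F' F''$ because the PIR subpacketization $F'$ applies independently inside each PDA block of size $F''$.

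Privacy follows as in Theorem~\ref{thm:cachepir}: demand privacy comes from the PIR privacy constraint~\eqref{eq:PIR privacy constraint}, which makes every broadcasted query look independent of $\mathbf{d}$ from any other user's viewpoint, yielding~\eqref{eq:demand privacy constraint}; cache privacy requires the UDIQ property, since the uniform demand of the PIR together with the conditional independence of $Q_1$ and $Q_2$ (Definition~\ref{def:udiq}) makes the effective per-clique broadcast independent of any stored private key and hence of any metadata $\mathscr{M}_{k'}$, yielding~\eqref{eq:cache privacy constraint}. The only step beyond Theorem~\ref{thm:cachepir} is the verification that general PDA clique-covering plays the exact structural role that MAN delivery does, which I expect to be immediate from the definition of a PDA; I would therefore package the argument as a short reduction to Theorem~\ref{thm:cachepir} rather than as a full rewrite.
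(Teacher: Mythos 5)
Your proposal is correct and follows essentially the same route the paper takes: both observe that the proof of Theorem~\ref{thm:cachepir} uses only the two structural features a PDA scheme guarantees (file-symmetric uncoded placement, so the private key can be generated per uncached subfile index, and clique-covering delivery, so the server can broadcast sums of server-$2$ PIR answers in place of XORs of subfiles), then read off the memory $M_1 + (1-M_1/N)\,R'_{D_1}$, the load $R_1\,R'_{D_2}$, and the subpacketization $F'F''$, with the demand- and cache-privacy chains in Appendices A and B carrying over unchanged. One small wording slip worth fixing if you write this up: the index you call "broadcast clique $\mathcal{V}$" is actually a PDA \emph{row} (a subfile index); for a general PDA a row can be missed by several users, and the uniqueness you invoke holds only \emph{within} a given multicast clique, where each participating user misses exactly one of the row-subfiles appearing in the sum and can compute the others from its cache and the broadcast queries.
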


\subsection{New construction on two-server PIR schemes}
\label{sub:two server PIR}

By the proposed construction   in Theorem~\ref{thm:cachepir} (resp. the one in Theorem~\ref{thm:private demands}), in order to design coded caching schemes with private demands and caches (resp. with private demands), our task is to design two-server PIR schemes under (resp. without) the UDIQ condition with total download cost and subpacketization level as low as possible. 
In the following we propose a new construction structure on two-server PIR schemes under the UDIQ condition by leveraging coded caching schemes. 
Intuitively, this idea stems from the observation that, the placement phase of coded caching, does not reveal any information on the demands; and the observation that given the transmission of the delivery phase, from 
different cache configurations we can decode different files. Hence, we can treat the cache configuration of one user as the transmission of one server in the PIR scheme and treat the delivery phase as the transmission of the other server in the PIR scheme.  From the above explanation, we have the following construction.


	\begin{theorem}[From coded caching to PIR]  \label{thm:cacheinpirGen}
	Assume that there exists a coded caching scheme  for $N$ users and $N$ files  which achieves the memory-load trafeoff $(M,R)$ with subpacketization $F$. Then there exists a   two-server $N$-message PIR scheme satisfying the UDIQ condition in Definition \ref{def:udiq} with the download cost pair $(R_{D_1}, R_{D_2})=(M,R)$ and subpacketization $F$.	
	\end{theorem}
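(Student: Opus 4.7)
My plan is to construct an explicit two-server PIR scheme that mirrors the coded caching scheme: one server plays the role of the ``cache'' and the other plays the role of the ``delivery'' transmission. Both PIR servers hold the full library $W_1,\ldots,W_N$ and can, on demand, compute any cache $Z_s$ or delivery message $X_{\mathbf{d}'}$ prescribed by the given caching scheme. To retrieve $W_d$, the PIR user samples a uniformly random permutation $\sigma$ of $[N]$, sets $s=\sigma^{-1}(d)$ and effective demand vector $\mathbf{d}'=\sigma$, and queries server~1 with $Q_1=s$ and server~2 with $Q_2=\mathbf{d}'$. The servers return $A_1=Z_s$ of normalized size $M$ and $A_2=X_{\mathbf{d}'}$ of normalized size $R$; since $d'_s=\sigma(\sigma^{-1}(d))=d$, the caching decoder for user $s$ reconstructs $W_d$, which gives the download-cost pair $(M,R)$ and inherits the subpacketization $F$.

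For marginal privacy, since $\sigma$ is uniform on $S_N$ and independent of $d$, the marginal of $Q_2=\sigma$ is uniform on $S_N$ and that of $Q_1=\sigma^{-1}(d)$ is uniform on $[N]$, each independent of $d$, so $I(d;Q_i)=0$ for $i=1,2$. Since the queries are generated independently of $W$, the UDIQ condition $I(Q_1^{[n]};Q_2^{[n]}\mid W_1,\ldots,W_N)=0$ collapses to $Q_1\perp Q_2$ given $d$. In the bare construction above, $Q_1$ is a deterministic function of $(Q_2,d)$, so this fails. The proof must therefore augment the construction so that the two queries are drawn from independent random sources: split the user's randomness into two independent pieces $R_1,R_2$ and take $Q_1=g_1(d,R_1)$, $Q_2=g_2(d,R_2)$, with $g_1,g_2$ designed so that each marginal is uniform and independent of $d$, while the server answers of sizes $MF$ and $RF$ still allow the decoder to recover $W_d$ from $(Q_1,Q_2,A_1,A_2,d)$.

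The hardest step is realizing this independent randomization while preserving decodability. Observe that, once marginal privacy and UDIQ both hold, the joint distribution of $(Q_1,Q_2)$ becomes entirely independent of $d$, so the quadruple $(Q_1,Q_2,A_1,A_2)$ must jointly determine every $W_d$ for $d\in[N]$. This forces $H(Q_1,Q_2)+(M+R)F\geq NF$, which means the queries must carry entropy of order $\Omega((N-M-R)F)$ whenever $M+R<N$. The construction will therefore use long queries that inject additional user-side randomness, exploiting the caching scheme's symmetry over its $N$ users and $N$ files --- specifically the many valid pairs $(s,\mathbf{d}')$ with $d'_s=d$ that all decode the same $W_d$ --- to realize a factorized joint distribution on $(Q_1,Q_2)$ whose support lies in $\mathcal{V}_d=\{(s,\mathbf{d}'):d'_s=d\}$ and whose marginals are uniform. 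This use of the caching scheme's $K=N$ structural symmetry, together with the need to keep the server answers at the prescribed sizes $MF$ and $RF$ while enforcing $Q_1\perp Q_2\mid d$, is the technical crux of the proof.
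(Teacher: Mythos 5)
Your base construction is essentially the paper's own proof and, read correctly, is already complete; the ``technical crux'' you then chase is a dead end caused by a misreading of Definition~\ref{def:udiq}. The paper's construction is the cyclic-shift special case of yours: draw $r$ uniform on $[N]$ (your $s=\sigma^{-1}(d)$), ask server~1 for $Z_r$, and ask server~2 for $X_{\mathbf d}$ where $\mathbf d$ is the cyclic shift of $(1,\dots,N)$ placing $\theta$ at position $r$. Despite the superscript $[n]$ in the definition, the quantity the paper actually uses --- in step~\eqref{eq:cache privacy} of the proof of Theorem~\ref{thm:cachepir}, and in the converse of Theorem~\ref{thm:lowerbound} where $\Pr(Q_2\mid Q_1)=\Pr(Q_2)$ is invoked --- is $I(Q_1;Q_2\mid W_{[N]})$ with the demand $d$ treated as a uniform random variable, \emph{not} the mutual information conditioned on $d=n$. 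Under this unconditional reading your base construction already satisfies UDIQ: given $Q_2=\sigma$, the query $Q_1=\sigma^{-1}(d)$ is uniform on $[N]$ because $d$ is uniform and independent of $\sigma$, which matches the unconditional marginal of $Q_1$; hence $Q_1\perp Q_2$ and there is nothing to augment. (The paper's own scheme has exactly the same structure --- given $\theta=n$, $Q_2=\langle r-n\rangle_N$ is a deterministic function of $Q_1=r$ --- so it too would fail the conditional reading.)

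The augmentation you describe is also unattainable at any nontrivial rate, which is further evidence it cannot be what the proof requires. If one imposed $Q_1\perp Q_2\mid d$ on top of the two per-server demand-privacy constraints, then, as you correctly note, $(Q_1,Q_2)\perp d$ jointly, so the decoder would have to recover \emph{every} $W_n$ from $(Q_1,Q_2,A_1,A_2)$. But since queries are generated independently of $W_{[N]}$, this yields $NF = I\bigl(W_{[N]};A_1,A_2\mid Q_1,Q_2\bigr)\le (R_{D_1}+R_{D_2})F$, i.e.\ $M+R\ge N$. This tightens your own entropy count: the $H(Q_1,Q_2)$ term you hoped to inflate contributes nothing because $I(W_{[N]};Q_1,Q_2)=0$, so no amount of user-side randomization in the queries can help. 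The conditional reading of UDIQ admits only the trivial download-everything PIR scheme, confirming that the unconditional reading --- under which your construction and the paper's are both valid and essentially identical --- is the intended one.
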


\begin{proof}
	We	consider  the coded caching scheme for the   shared-link setting with $N$ files and $K=N$ users. In the cache placement phase,  each user $i\in [N]$ fills its cache by the content denoted by $Z_i$.
	In the delivery phase, each user requests a distinct file. Thus the demand vector $\mathbf{d} = (d_1, \ldots, d_N)$ is a permutation function $\pi(.)$ from $[N]$ to $[N]$. In the delivery phase, the server sends the message $X_{\mathbf{d}}$. By the decodability of the coded caching scheme, from $X_{\mathbf{d}}$ and $Z_i$, we can decode $W_{d_i}$, for each $i\in [N]$. 
	
	Next  we use the above coded caching scheme to construct a two-server PIR scheme under the UDIQ condition. Let us go back to the PIR setting, where the user  requests  file $W_{\theta}$ where $\theta$ is  distributed uniformly at random on $[N]$. 
	The user generates a random variable $r$ uniformly on $[N]$ and sends $r$ as the query to  the first server, in order to retrieve   $Z_r$. In addition, to determine the demand vector, we first define $\mathbf{d}_c$ as $\mathbf{d}_c = (1,2,...,N)$. The demand vector $\mathbf{d}$ is determined as the cyclic shift of $\mathbf{d}_c$ by $<r-\theta>_{N}$ positions to the right; i.e. $\mathbf{d}(i) = \mathbf{d}_c(<i-<r-\theta>_{N}>_{N})$.\footnote{\label{foot:mod}In this paper, we let $<\cdot>_a$ represent  the modulo operation with  integer quotient $a>0$ and  we let $<\cdot>_a \in \{1,\ldots,a \}$ (i.e., we let $<b>_a=a$ if $a$ divides $b$).} Now the user sends $<r-\theta>_{N}$ as the query to the second server to retrieve $X_{\mathbf{d}}$. 
	
	Obviously, the query to the first server is independent of the demand. In addition,	since $r$ is  generated  independently and uniformly, the second server cannot get any information about $\theta$. 
	So the privacy constraint in PIR in \eqref{eq:PIR privacy constraint} is satisfied. On the other hand, since $I(r;<r-\theta>_{N}|W_1,...,W_N) = I(r;\theta)=0$, the UDIQ condition in \eqref{def:udiq} is also satisfied by this scheme. 	
\end{proof}

We then apply the MAN coded scheme with memory-load tradeoff points $(M,R)=(t,\frac{N-t}{t+1})$ and subpacketization level  $\binom{N}{t}$, for $t\in [0:N]$, into the construction in Theorem \ref{thm:cacheinpirGen}. 
	\begin{theorem}\label{thm:cachinginPIR}
	There exists a two-server PIR scheme satisfying the UDIQ condition in Definition \ref{def:udiq}, whose achieved download cost pair  is the convex envelope  of the points $(R_{D_1}, R_{D_2}) = (t, \frac{N-t}{t+1})$
	with subpacketization level   $\binom{N}{t}$, for all $t\in [N]$. By letting $t=O\left(\sqrt{N}\right)$, the resulting two-server PIR scheme  achieves  the download costs $R_{D_1}$ and $R_{D_2}$  of order $O\left(\sqrt{N}\right)$ with   subpacketization level $O\left(\sqrt{N}^{\sqrt{N}}\right)$ (considering highest order in the exponent).
	\end{theorem}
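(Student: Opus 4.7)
The plan is to invoke Theorem~\ref{thm:cacheinpirGen} with the Maddah--Ali and Niesen (MAN) coded caching scheme as the input and then derive the stated asymptotic scaling by a standard binomial-coefficient estimate. The whole argument is essentially a direct substitution, so I expect no substantive obstacle; the only point requiring care is checking that the UDIQ property is maintained when passing to the convex envelope via time-sharing, which is the only non-mechanical step.

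First I would recall that, for the shared-link setting with $N$ files and $K=N$ users, the MAN scheme achieves for every integer $t\in[0:N]$ the memory-load pair $(M,R)=\bigl(t,\tfrac{N-t}{t+1}\bigr)$ using subpacketization $\binom{N}{t}$ (each file is split into subfiles labeled by the $t$-subsets of $[N]$). Feeding this scheme into the construction of Theorem~\ref{thm:cacheinpirGen} immediately yields, for each $t\in[N]$, a two-server $N$-message PIR scheme satisfying the UDIQ condition with download cost pair $(R_{D_1},R_{D_2})=\bigl(t,\tfrac{N-t}{t+1}\bigr)$ and subpacketization $\binom{N}{t}$. To obtain the claimed convex envelope I would use standard time-sharing: split each message into two independent parts, run two PIR instances (corresponding to two different values of $t$) in parallel on these parts, and concatenate the queries and answers of the two servers. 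Because the two instances act on disjoint, independent portions of the library and use independently generated randomness, the queries to the two servers remain conditionally independent given the library, and the marginal distribution over the requested message index remains uniform, so the UDIQ condition in Definition~\ref{def:udiq} is preserved by the composition.

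For the asymptotic claim, I would choose $t=\lceil\sqrt{N}\,\rceil$, which gives $R_{D_1}=t=O(\sqrt{N})$ and $R_{D_2}=\tfrac{N-t}{t+1}=O(\sqrt{N})$. For the subpacketization, using the standard estimate $\binom{N}{t}\le (eN/t)^{t}$ with $t=\sqrt{N}$ yields
\[
\binom{N}{\lceil\sqrt{N}\,\rceil}\;\le\;\bigl(e\sqrt{N}\,\bigr)^{\sqrt{N}},
\]
whose exponent is $\tfrac{1}{2}\sqrt{N}\log N + O(\sqrt{N})$. Retaining only the highest-order term in the exponent, this is $\sqrt{N}^{\sqrt{N}}$, matching the statement and completing the proof.
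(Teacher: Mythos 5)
Your overall plan coincides with the paper's: Theorem~\ref{thm:cachinginPIR} is obtained by plugging the $(N,K=N)$ MAN scheme, which achieves $(M,R)=\bigl(t,\tfrac{N-t}{t+1}\bigr)$ with subpacketization $\binom{N}{t}$, into the construction of Theorem~\ref{thm:cacheinpirGen}; and your choice $t=\lceil\sqrt{N}\,\rceil$ together with $\binom{N}{t}\le(eN/t)^t$ gives exactly the stated $O\bigl(\sqrt{N}^{\sqrt{N}}\bigr)$ scaling. So the corner points and the asymptotics are fine and are what the paper actually uses downstream (e.g.\ Corollary~\ref{cor:tradeoffdiffN}, part 4).

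However, the extra step you added to justify the convex-envelope part of the statement --- ``time-sharing preserves UDIQ because the two instances use independent randomness on disjoint portions of the library'' --- is not correct. The UDIQ condition in Definition~\ref{def:udiq} is an \emph{unconditional} (over the random demand) mutual-information statement $I(Q_1;Q_2\mid W_{[N]})=0$, as is clear from the proof of Theorem~\ref{thm:cacheinpirGen} where the paper computes $I(r;\langle r-\theta\rangle_N\mid W_{[N]})=0$ with $\theta$ random. When you concatenate two instances of that construction with independent $r_A,r_B$ but the \emph{same} demand $\theta$, you get $Q_1=(r_A,r_B)$ and $Q_2=\bigl(\langle r_A-\theta\rangle_N,\langle r_B-\theta\rangle_N\bigr)$, and these satisfy the deterministic relation
\begin{equation*}
Q_2^A - Q_2^B \equiv r_A - r_B \equiv Q_1^A - Q_1^B \pmod N .
\end{equation*}
Consequently $H(Q_2)=2\log N$ while $H(Q_2\mid Q_1)=H(\theta)=\log N$, so $I(Q_1;Q_2\mid W_{[N]})=\log N\neq 0$: the composed scheme violates UDIQ. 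The independence you invoke holds only \emph{conditionally on the demand}; marginalizing out the common $\theta$ couples the two query pairs. In fairness, the paper itself states the convex-envelope claim without proof, so this gap is latent in the source; but your sentence asserting UDIQ preservation under time-sharing is the one non-mechanical step you flagged, and it is precisely the step that fails. The safe statement you can actually prove with your argument is that each corner point $(t,\tfrac{N-t}{t+1})$ for integer $t$ is achievable with UDIQ --- which is all that Theorem~\ref{thm:cachinginPIR}'s asymptotic consequence and Corollary~\ref{cor:order optimality} require.
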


	\begin{remark}
		In this paper we    exploit  the connection between PIR and coded caching, where we use one to build the other one as illustrated in Fig.~\ref{fig:diagram}. More precisely, in Theorem \ref{thm:cachepir} (resp. Theorem \ref{thm:private demands}) we propose a construction structure on demand and cache private (resp. demand private) caching schemes utilizing  two-server PIR schemes satisfying (resp. not satisfying) the UDIQ condition.   Later in Theorem \ref{thm:cacheinpirGen} we propose  a construction structure on  two-server PIR schemes satisfying the UDIQ condition utilizing  coded caching. 	
	\end{remark}
		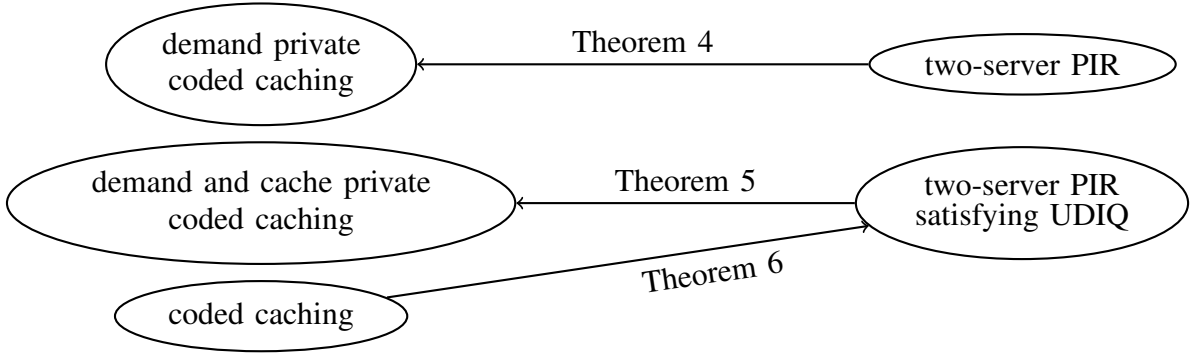
\begin{figure}
		\begin{tikzpicture} [thick, main/.style = {ellipse, draw}] 
			\node[main] (1) {\shortstack{demand private \\ coded caching}};
			\node[main] (2) [right=6cm of 1] {two-server PIR};
			\node[main] (3) [below=2mm of 1] {\shortstack{demand and cache private \\ coded caching}};
			\node[main] (4) [right=4.5cm of 3] {\shortstack{two-server PIR \\ satisfying UDIQ}};
			\node[main] (5) [below=2mm of 3] {coded caching};
			
			\draw[->] (2) -- node[midway, above] {Theorem \ref{thm:private demands}} (1); 
			\draw[->] (4) -- node[midway, above] {Theorem \ref{thm:cachepir}} (3); 
			\draw[->] (5) -- node[midway, below right, sloped] {Theorem \ref{thm:cacheinpirGen}} (4); 
		\end{tikzpicture}
	\caption{Diagram of the proposed connections between PIR and coded caching.}
	\label{fig:diagram}
	\end{figure}

Next, we derive a lower bound on the download costs  of a two-server PIR scheme satisfying the UDIQ condition in Definition \ref{def:udiq} by using a cut-set argument. 
 We assume that the sets of queries to server $1$ and server $2$ are    $\mathcal{Q}_1$ and   $\mathcal{Q}_2$, respectively.  Consider the set of pairs of queries that based on the design of the PIR scheme can be sent to recover file $W_{\tau}$; we denote this set by $\mathcal{U}_\tau$ as follows, 
	\begin{align}
		\mathcal{U}_\tau \triangleq \{ (Q_1, Q_2): Q_1 \in \mathcal{Q}_1, Q_2 \in \mathcal{Q}_2, \text{ $(Q_1, Q_2)$ recovers $W_\tau$} \}.
	\end{align} 
	For a particular choice of $q_1 \in \mathcal{Q}_1$, we define the set of all queries in the set $\mathcal{Q}_2$ that can together recover file $W_\tau$ as follows.
	\begin{align} \label{def:conditional}
		\mathcal{U}_{\tau|Q_1=q_1} \triangleq \{ Q_2 : Q_2 \in \mathcal{Q}_2, (q_1,Q_2) \text{ recovers } W_\tau \}.
	\end{align}
	Similarly, we define 
	\begin{align} \label{def:conditional2}
		\mathcal{U}_{\tau|Q_2=q_2} \triangleq \{ Q_1 : Q_1 \in \mathcal{Q}_1, (Q_1,q_2) \text{ recovers } W_\tau \}.
	\end{align}
	We propose the following converse bound, whose proof could be found in Appendix~\ref{sub:proof of converse}. 
	\begin{theorem}  \label{thm:lowerbound}
		In a two-server PIR scheme satisfying the UDIQ condition in Definition \ref{def:udiq},  denote the query sets to servers 1 and 2 respectively by $\mathcal{Q}_1$ and $\mathcal{Q}_2$, where $|\mathcal{Q}_1|=N_1$ and $|\mathcal{Q}_2|=N_2$; denote the download costs from servers $1$ and $2$  by $R_{D_1}$ and $R_{D_2}$, respectively. If we have uniform query distribution for both servers; $\Pr(Q_1=q_1 \in \mathcal{Q}_1)=1/N_1$ and $\Pr(Q_2=q_2 \in \mathcal{Q}_2)=1/N_2$,\footnote{\label{foot:}{The two-server PIR schemes in Theorem \ref{thm:pirschemes}, satisfy the uniform query distribution condition stated in Theorem \ref{thm:lowerbound}. To our best knowledge, existing information theoretic PIR schemes    also satisfy this condition.}} then, 
		\begin{enumerate}
			\item for all $q_1 \in \mathcal{Q}_1$ and all $q_2 \in \mathcal{Q}_2$, we have 
			\begin{align}
				n_2 \triangleq \left|\mathcal{U}_{\tau|Q_1=q_1}\right|, n_1 \triangleq \left|\mathcal{U}_{\tau|Q_2=q_2}\right|, \ \forall 	\tau \in [N]; 		
			\end{align}
			\item   we have
			\begin{align}
				\frac{N_1}{n_1} \leq N, \frac{N_2}{n_2} \leq N; 
			\end{align}
			
			\item we have
			\begin{align} \label{thm:r1r2lower}
				{\min_{%
					\substack{%
						\alpha_1 \in [N_1], \alpha_2 \in [N_2], \\
						\alpha_1 \alpha_2=\left\lceil\frac{N_1}{n_1}\right\rceil=\left\lceil\frac{N_2}{n_2}\right\rceil
					}
				}}
				\alpha_1 R_{D_1} + \alpha_2 R_{D_2} \geq N
			\end{align}
			\item if we assume $R_{D_1}=R_{D_2}=R'_{D}$ we have
			\begin{align}
				R'_D \geq \frac{N}{2\left(\sqrt{\left\lceil\frac{N_1}{n_1}\right\rceil}+1\right)} = \frac{N}{2\left(\sqrt{\left\lceil\frac{N_2}{n_2}\right\rceil}+1\right)} \geq \frac{N}{2\left(\sqrt{N}+1\right)}.
				\label{eq:sqrt converse}
			\end{align}
		\end{enumerate}
	\end{theorem}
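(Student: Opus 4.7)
The plan is to attack the four parts sequentially, with Parts 2--4 all resting on the structural symmetry established in Part 1. For Part 1, the leverage comes from combining the UDIQ condition with the theorem's uniform marginal hypothesis and PIR privacy in \eqref{eq:PIR privacy constraint}. UDIQ together with the uniform marginals forces the joint to be the product $\Pr(Q_1=q_1, Q_2=q_2)=1/(N_1 N_2)$. PIR privacy gives $\Pr(Q_1=q_1 \mid \theta=\tau) = 1/N_1$ for every $\tau$, and similarly $\Pr(Q_2=q_2\mid \theta=\tau)=1/N_2$. Decodability requires the conditional distribution $\Pr((q_1,q_2)\mid \theta=\tau)$ to be supported inside $\mathcal{U}_\tau$. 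Writing the identity $\Pr(Q_1,Q_2) = (1/N)\sum_\tau \Pr(Q_1,Q_2 \mid \theta=\tau)$ and comparing it with the uniform joint---under the natural assumption that each pair $(q_1,q_2)$ decodes a unique file---I would deduce that $|\mathcal{U}_\tau|=N_1 N_2/N$ is the same for every $\tau$, and that each row $\{q_1\}\times \mathcal{Q}_2$ (resp.\ each column) contains exactly $n_2=N_2/N$ (resp.\ $n_1=N_1/N$) pairs from each $\mathcal{U}_\tau$. This settles Part 1, and Part 2 follows immediately by arithmetic: $N_1/n_1=N$ and $N_2/n_2=N$.

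For Part 3, I would run a cut-set entropy argument. For any valid $(\alpha_1,\alpha_2)$ satisfying the product constraint, the plan is to construct subsets $\mathcal{Q}_1'\subseteq \mathcal{Q}_1$, $\mathcal{Q}_2'\subseteq \mathcal{Q}_2$ of sizes $\alpha_1, \alpha_2$ such that every $\mathcal{U}_\tau$ intersects $\mathcal{Q}_1'\times \mathcal{Q}_2'$; equivalently, the collection of answers elicited by these queries jointly allows decoding of every $W_\tau$, $\tau\in[N]$. Subadditivity of entropy then bounds the total information in these answers above by $(\alpha_1 R_{D_1}+\alpha_2 R_{D_2})F$, while the decodability of all $N$ mutually independent files forces it to be at least $H(W_{[N]})=NF$. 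Combining gives $\alpha_1 R_{D_1}+\alpha_2 R_{D_2}\geq N$ for every valid $(\alpha_1,\alpha_2)$, and hence also for the minimum.

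Part 4 is then an optimization consequence of Part 3 under $R_{D_1}=R_{D_2}=R_D'$: the bound reduces to $(\alpha_1+\alpha_2)R_D'\geq N$, and picking balanced integer factors $\alpha_1,\alpha_2$ of $M=\lceil N_1/n_1\rceil$ as close as possible to $\sqrt{M}$ (so that $\alpha_1+\alpha_2\leq 2(\sqrt{M}+1)$) yields $R_D'\geq N/(2(\sqrt{M}+1))$; the final inequality uses $M\leq N$ from Part 2. The main obstacle I anticipate is the combinatorial existence claim inside Part 3: rigorously exhibiting, for every valid $(\alpha_1,\alpha_2)$, subsets $(\mathcal{Q}_1',\mathcal{Q}_2')$ whose product meets every $\mathcal{U}_\tau$. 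The doubly-balanced partition delivered by Part 1 (each row and each column carrying the same number of pairs per class) behaves like a design, and the strategy will be to exploit this balance---via either a probabilistic selection (the expected number of pairs of $\mathcal{Q}_1'\times \mathcal{Q}_2'$ inside each $\mathcal{U}_\tau$ is $\alpha_1\alpha_2/N$, promoted to existence by an averaging or symmetry argument) or an explicit algebraic construction leveraging the uniformity enforced by UDIQ.
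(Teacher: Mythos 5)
Your high-level plan matches the paper's: Part 1 derives the row/column regularity of the sets $\mathcal{U}_\tau$ from PIR privacy, UDIQ, and the uniform marginals; Part 3 rests on a cut-set entropy bound applied to a suitable covering pair of query subsets; and Part 4 optimizes the resulting factorization bound. A small mismatch in Part 2: the paper does not assume each query pair decodes a unique file. It obtains $N_1/n_1 \le N$ directly by counting --- for a fixed $q_2$, independence and the uniform marginals force every $q_1\in\mathcal{Q}_1$ to appear in at least one $\mathcal{U}_{\tau|Q_2=q_2}$, so $N_1 \le \sum_{\tau\in[N]} \left|\mathcal{U}_{\tau|Q_2=q_2}\right| = N n_1$. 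Your route to the sharper equality $N_1/n_1=N$ relies on a uniqueness assumption the theorem neither states nor needs; the stated inequality is all the rest of the argument consumes.

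The genuine gap is Part 3's covering claim, which you yourself flag as ``the main obstacle'': for every admissible $(\alpha_1,\alpha_2)$ with $\alpha_1\alpha_2 = \left\lceil N_1/n_1\right\rceil$, one must exhibit $\alpha_1$ queries in $\mathcal{Q}_1$ and $\alpha_2$ in $\mathcal{Q}_2$ whose Cartesian product meets every $\mathcal{U}_\tau$. This is the load-bearing step of the whole bound, and neither of the two strategies you gesture at is carried through. In particular, the first-moment calculation you propose (expected $\alpha_1\alpha_2/N$ hits per class) does not by itself promote to existence: since $\alpha_1\alpha_2 = \left\lceil N_1/n_1\right\rceil \le N$, the expected number of hits per class is roughly one, so a plain averaging argument has no slack --- an average of one hit per class gives no guarantee that a single random choice hits all $N$ classes simultaneously. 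The paper isolates this as Lemma~\ref{lem} and proves it by bounding, for each $\tau$, the probability that a uniformly random choice of $\alpha_1$ row-queries and $\alpha_2$ column-queries \emph{misses} $\mathcal{U}_\tau$, and then combining over the $N$ classes to show the overall failure probability is strictly below one. To convert your plan into a proof you would need to carry out that miss-probability computation (or supply the explicit algebraic cover you allude to), and then the cut-set and the Part 4 optimization proceed as you describe.
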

Note that for any two-server PIR scheme, by using time-sharing we can always obtain another two-server PIR scheme with equal download costs from the two servers, where the total download cost is the same as the previous two-server PIR scheme. Hence, it can be seen from~\eqref{eq:sqrt converse} that any two-server PIR scheme satisfying the UDIQ condition in Definition \ref{def:udiq} should have a total download cost 
                \begin{align}
				R_D \geq \frac{N}{\left(\sqrt{\left\lceil\frac{N_1}{n_1}\right\rceil}+1\right)} = \frac{N}{\left(\sqrt{\left\lceil\frac{N_2}{n_2}\right\rceil}+1\right)} \geq \frac{N}{\left(\sqrt{N}+1\right)}=O\left(\sqrt{N}\right).
				\label{eq:sqrt total converse}
			\end{align}
Comparing the converse bound in~\eqref{eq:sqrt total converse} and the proposed two-server PIR scheme in Theorem~\ref{thm:cachinginPIR}, we can obtain the following order optimality result.
	\begin{corollary}
 \label{cor:order optimality}
 The total download cost by the two-server PIR scheme in Theorem~\ref{thm:cachinginPIR}, which is equal to $O\left(\sqrt{N}\right)$, is order optimal under the constraint of  UDIQ  and uniform query.  
 \end{corollary}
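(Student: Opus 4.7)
The plan is to combine the achievability of Theorem~\ref{thm:cachinginPIR} with the converse bound in~(\ref{eq:sqrt total converse}). Both sides of the order-of-magnitude claim are essentially already in hand; the corollary's contribution is to pair them under the common hypotheses of UDIQ and uniform queries.

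First, for achievability, I would invoke Theorem~\ref{thm:cachinginPIR} with $t = \lceil \sqrt{N}\, \rceil$, yielding a two-server UDIQ PIR scheme whose per-server download costs are $R_{D_1}=t$ and $R_{D_2}=(N-t)/(t+1)$, each of order $O(\sqrt{N})$. Summing gives a total download cost $R_D = O(\sqrt{N})$ for the constructed scheme, which supplies the upper half of the order-optimality claim.

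For the converse direction, the key ingredient is Theorem~\ref{thm:lowerbound}(4), which bounds the symmetric download cost $R'_D$ from below by $N/\bigl(2(\sqrt{N}+1)\bigr)$. To extend this bound to a general (possibly asymmetric) UDIQ, uniform-query scheme with costs $(R_{D_1},R_{D_2})$, I would symmetrize: construct the mirror scheme by swapping the roles of the two servers, then time-share the original and the mirror with equal probability. The resulting scheme has $R'_{D_1}=R'_{D_2}=(R_{D_1}+R_{D_2})/2$, the same total download cost, and should inherit both UDIQ and uniform queries (the time-share bit can be absorbed into an enlarged query alphabet). Applying Theorem~\ref{thm:lowerbound}(4) to the symmetrized scheme and clearing the factor of $2$ would give $R_D \geq N/(\sqrt{N}+1) = \Omega(\sqrt{N})$, as already recorded in~(\ref{eq:sqrt total converse}).

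Matching the $O(\sqrt{N})$ upper bound against the $\Omega(\sqrt{N})$ lower bound delivers order optimality. The only delicate step is the verification that the symmetrization truly preserves the UDIQ and uniform-query properties; the hard part here is ensuring that once the coin flip selecting original-versus-mirror is absorbed into the query alphabet, the conditional independence $I(Q_1^{[n]};Q_2^{[n]}\mid W_1,\ldots,W_N)=0$ still holds and that each server's query distribution remains uniform on its expanded alphabet. I expect this to follow by a direct calculation: the enlarged query at each server is the pair (branch indicator, original-scheme query), and conditional independence of the original queries in each branch, together with a common independent branch selector, yields conditional independence of the pairs.
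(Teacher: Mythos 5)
Your achievability half is fine and mirrors what the paper does: taking $t=\lceil\sqrt{N}\,\rceil$ in Theorem~\ref{thm:cachinginPIR} gives a UDIQ, uniform-query scheme with $R_{D_1},R_{D_2}=O(\sqrt{N})$. The gap is in the converse, exactly at the symmetrization step you flag as delicate --- and your hope that it works out is misplaced. Symmetrization does not preserve UDIQ. If the branch bit $B$ is absorbed into the enlarged query alphabet, as you propose, then each server's new query has the form $(B,\cdot)$, and since both queries contain the same $B$ we get $I(\tilde{Q}_1;\tilde{Q}_2\mid W_{[N]})\geq H(B\mid W_{[N]})=1>0$. Your intuition that ``a common independent branch selector yields conditional independence of the pairs'' is backwards: a shared random selector is a common cause that creates dependence, not preserves independence. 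If instead one tries not to send $B$, the servers cannot determine whether to apply $\gamma_1$ or $\gamma_2$; and even ignoring that, an equal-weight mixture of $(Q_1,Q_2)$ and $(Q_2,Q_1)$ is generally not a product distribution even when $(Q_1,Q_2)$ is. (The paper's own one-line justification --- ``by using time-sharing we can always obtain another two-server PIR scheme with equal download costs'' --- suffers from the same unverified, and in fact false, implication that the time-shared scheme remains UDIQ.)

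The clean way to obtain the $\Omega(\sqrt{N})$ converse is to bypass symmetrization and apply parts 2 and 3 of Theorem~\ref{thm:lowerbound} directly to the asymmetric scheme (the proof of Lemma~\ref{lem} in fact only needs $\alpha_1\alpha_2\geq N_2/n_2$, not equality). Pick $\alpha_1=\alpha_2=\big\lceil\sqrt{N_2/n_2}\,\big\rceil$, which satisfies $\alpha_i\leq N_i$ since $N_1/n_1=N_2/n_2\leq N$. The cut-set inequality $\alpha_1 R_{D_1}+\alpha_2 R_{D_2}\geq N$ then gives
\begin{align}
R_D=R_{D_1}+R_{D_2}\;\geq\;\frac{N}{\big\lceil\sqrt{N_2/n_2}\,\big\rceil}\;\geq\;\frac{N}{\sqrt{N}+1}\;=\;\Omega\big(\sqrt{N}\big),
\end{align}
using Theorem~\ref{thm:lowerbound}-2 in the last step. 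This matches the $O(\sqrt{N})$ achievability without ever passing through an equal-cost scheme, and is the argument you should substitute for the symmetrization.
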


	 For some special cases, more precisely for $N\in \{2,3,4\}$, in Appendix~\ref{sub:newPIRschemes} we propose new two-server PIR schemes satisfying the UDIQ condition, whose subpacketizations are lower and the download costs are lower or equal compared to the two-server PIR scheme in Theorem~\ref{thm:cachinginPIR}.
	 	 \begin{theorem}  \label{thm:pirschemes}
		For the two-server PIR schemes satisfying the UDIQ condition in Definition \ref{def:udiq}, 
		
	 	1) when $N=2$, the download cost pair {$(R_{D_1}, R_{D_2})=(0.5,1)$} (i.e., $R_D=3/2$) is achievable and the required subpacketization is $F'=1$;
		
		2) when  $N=3$, the download cost pair {$(R_{D_1}, R_{D_2})=(1,1)$} (i.e., $R_D=2$) is achievable and the required subpacketization is $F'=1$;
		
		3) when $N=4$, the download cost pair {$(R_{D_1}, R_{D_2})=(1,1)$} (i.e, $R_D=2$) is achievable and the required subpacketization is $F'=1$.
		
		
	\end{theorem}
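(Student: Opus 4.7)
The plan is to construct explicitly, for each $N \in \{2, 3, 4\}$, a two-server PIR scheme with $F' = 1$ meeting the stated download-cost pair, and to verify decodability, PIR privacy \eqref{eq:PIR privacy constraint}, and UDIQ \eqref{eq:independent queries constriants}.

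My query skeleton, shared across the three cases, is the group-shift template from the proof of Theorem \ref{thm:cachinginPIR}: draw $r$ uniformly from a group $G$ of order $N$ (take $G = \mathbb{Z}_N$ for $N \in \{2, 3\}$ and $G = \mathbb{Z}_2 \times \mathbb{Z}_2$ for $N = 4$), set $Q_1 = r$, and set $Q_2 = r - \theta$ (or $Q_2 = r \oplus \theta$ for $N = 4$). The identity $I(r; r - \theta \mid W) = I(r; \theta) = 0$ already used in Theorem \ref{thm:cachinginPIR}, which relies only on the uniformity and independence of $\theta$, immediately yields both PIR privacy and UDIQ. Each server then answers with a single $\mathbb{F}_q$-symbol $A_1 = \langle u_{Q_1}, W\rangle$ and $A_2 = \langle v_{Q_2}, W\rangle$ for pre-designed coefficient vectors $u_r, v_s \in \mathbb{F}_q^N$, so $F' = 1$ holds automatically and decodability reduces to the algebraic condition $e_{\theta(r,s)} \in \mathrm{span}_{\mathbb{F}_q}\{u_r, v_s\}$ for every $(r, s) \in [N]^2$, where $\theta(r,s)$ is the value recovered from the queries.

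The crux is to exhibit, in each case, a finite field $\mathbb{F}_q$ together with vectors realizing the $N^2$ span conditions. For $N = 2$ I would use $u_1 = \mathbf{0}$, $u_2 = (1, 1)$, $v_1 = e_2$, $v_2 = e_1$ over $\mathbb{F}_2$; the asymmetric choice $u_1 = \mathbf{0}$ makes server~1 silent when $r = 1$, halving its average transmission to $R_{D_1} = 1/2$, while server~2 always sends one bit so that $R_{D_2} = 1$. For $N = 3$ I would take the Vandermonde rows $u_r = (1, \omega^{r-1}, \omega^{2(r-1)})$ over $\mathbb{F}_4 = \mathbb{F}_2[\omega]$ with $\omega^3 = 1$, and solve the nine span conditions to obtain $v_1 = (1, 1, \omega)$, $v_2 = (1, \omega, 1)$, $v_3 = (1, \omega^2, \omega^2)$. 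For $N = 4$ I would pair the XOR skeleton with the symmetric ansatz $u_r = \mathbf{1} + e_r$ and $v_s = \mathbf{1} + e_1 + e_s$ for $s \neq 1$ (with $v_1 = \mathbf{1}$) over $\mathbb{F}_3$, then verify $e_{r \oplus s} \in \mathrm{span}(u_r, v_s)$ for all $16$ pairs. In both latter cases every answer is a single $\mathbb{F}_q$-symbol of the same bit length as a file, giving $R_{D_s} = 1$.

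I expect the main obstacle to be this third step: finding an ambient field and coefficient vectors that simultaneously satisfy all $N^2$ rank-deficiency constraints while keeping $F' = 1$ literal. For $N = 3$ the span conditions force a primitive cube root of unity to exist in $\mathbb{F}_q$, which makes $\mathbb{F}_4$ the smallest viable choice. For $N = 4$ a naive cyclic-shift Vandermonde over $\mathbb{F}_5$ does not close up---the minors corresponding to distinct $e_\theta$ over-constrain $v_s$---so switching to the XOR query structure together with the symmetric ansatz $u_r = \mathbf{1} + e_r$ collapses the 16 conditions to the single consistency $\mathrm{char}(\mathbb{F}_q) = 3$. Identifying these algebraic structures is the delicate part; once the vectors are pinned down, the remaining verifications---decodability in each $(r, s)$ configuration, PIR privacy, UDIQ, and the download-cost accounting---are routine.
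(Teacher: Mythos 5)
Your constructions are correct and achieve all of the stated download-cost pairs and $F'=1$; I verified all $N^2$ span conditions for each case. For $N=2$ and $N=4$ your route coincides with the paper's: the $N=2$ scheme ($u_1=\mathbf{0}$, $u_2=(1,1)$, $v$'s the standard basis) is exactly the two-server scheme of \cite{tian2019capacity} cited in Appendix~\ref{sec:schemes}, and the $N=4$ ansatz $u_r=\mathbf{1}+e_r$, $v_s=\mathbf{1}+e_1+e_s$ over $\mathbb{F}_3$ is the paper's signed-sum table (Table~\ref{tab:tablen4}) with the two servers' roles swapped and $-1$ written as $2$ in characteristic~3. The $N=3$ case, however, is a genuinely different construction: you use Vandermonde rows $u_r=(1,\omega^{r-1},\omega^{2(r-1)})$ over $\mathbb{F}_4$, whereas the paper has server~1 answer with pairwise sums $W_i+W_j$ and server~2 with single files $W_k$, all over $\mathbb{F}_2$ (Table~\ref{tab:tablen3}). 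Your remark that the span conditions ``force a primitive cube root of unity to exist in $\mathbb{F}_q$'' is therefore not correct as a general statement about $N=3$ UDIQ PIR: it is an artifact of committing to the Vandermonde ansatz for server~1. With the paper's choice of $u$-vectors of Hamming weight two and $v$-vectors of weight one, all nine span conditions close over $\mathbb{F}_2$; the binary scheme is thus simpler and works for one-bit files, whereas yours requires a two-bit alphabet. Both achieve $(R_{D_1},R_{D_2})=(1,1)$ with $F'=1$, so the theorem is established either way. One thing your write-up does add is the explicit framing of all three cases through the group-shift template of Theorem~\ref{thm:cachinginPIR} ($Q_1=r$, $Q_2=r-\theta$, uniform $r$), which the paper leaves implicit in its tables; this makes the PIR-privacy and UDIQ verifications immediate rather than case-by-case computations of conditional probabilities.
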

 
Based on Theorems \ref{thm:pirschemes} and \ref{thm:lowerbound}, we readily get the following result.
	\begin{corollary} \label{cor:n24optimality}
		The PIR schemes in Theorem \ref{thm:pirschemes} for the cases $N=2$ and $N=4$, meet the lower bound \eqref{thm:r1r2lower} in Theorem \ref{thm:lowerbound} with equality.
	\end{corollary}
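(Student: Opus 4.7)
The plan is to show tightness by explicit verification: for each of $N=2$ and $N=4$, I will extract the design parameters $N_1=|\mathcal{Q}_1|$, $N_2=|\mathcal{Q}_2|$, $n_1$, $n_2$ from the schemes described in Theorem~\ref{thm:pirschemes} (constructed in Appendix~\ref{sub:newPIRschemes}), compute the forced product $\lceil N_1/n_1 \rceil = \lceil N_2/n_2 \rceil$ appearing in the constraint of \eqref{thm:r1r2lower}, and then exhibit a feasible pair $(\alpha_1,\alpha_2)$ attaining the objective value $N$. Since Theorem~\ref{thm:lowerbound} already provides $\alpha_1 R_{D_1} + \alpha_2 R_{D_2} \geq N$ over all feasible $(\alpha_1,\alpha_2)$, exhibiting a single feasible pair that meets this with equality concludes tightness.

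For the case $N=2$ with $(R_{D_1},R_{D_2})=(1/2,1)$: I first verify from the scheme description that the parameters yield $\lceil N_1/n_1\rceil = \lceil N_2/n_2\rceil = 2$, so the constraint in~\eqref{thm:r1r2lower} reads $\alpha_1 \alpha_2 = 2$. The only integer pairs satisfying this are $(\alpha_1,\alpha_2) \in \{(1,2),(2,1)\}$; evaluating the objective, $(2,1)$ gives $2\cdot\tfrac{1}{2} + 1\cdot 1 = 2 = N$, so the bound is met with equality. For the case $N=4$ with $(R_{D_1},R_{D_2})=(1,1)$: the scheme parameters yield $\lceil N_1/n_1\rceil = \lceil N_2/n_2\rceil = 4$, so the constraint becomes $\alpha_1 \alpha_2 = 4$. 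Picking $(\alpha_1,\alpha_2)=(2,2)$ gives objective value $2+2=4=N$, matching the lower bound.

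Thus the two verifications are purely numerical once the quantities $N_1,N_2,n_1,n_2$ of the underlying PIR constructions are identified. The main (mild) obstacle is the bookkeeping step of reading off these parameters from the schemes in Appendix~\ref{sub:newPIRschemes}: one must check that each query of server~1 can be combined with exactly $n_2$ distinct queries of server~2 to recover any fixed file $W_\tau$ (and symmetrically for server~2), which is precisely the uniformity condition behind Part~1 of Theorem~\ref{thm:lowerbound}. Since both schemes are small and symmetric (each server's query set is structured so that every fixed query participates in the same number of file-recovering pairs), this can be checked by direct enumeration.

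Note that this corollary does not cover the $N=3$ case of Theorem~\ref{thm:pirschemes}: there $(R_{D_1},R_{D_2})=(1,1)$ gives total download $2$, while the bound~\eqref{thm:r1r2lower} with $\lceil N_1/n_1\rceil=3$ forces $\alpha_1 \alpha_2 = 3$, i.e.\ $(\alpha_1,\alpha_2)\in\{(1,3),(3,1)\}$, yielding objective $4>3=N$; hence equality cannot hold and the $N=3$ case is (correctly) excluded from the statement. This sanity check confirms that our identification of the parameters $\lceil N_1/n_1\rceil$ is consistent with the corollary's claim.
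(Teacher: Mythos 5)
Your proof is correct and takes essentially the same approach as the paper: for each case it reads off $N_1,N_2,n_1,n_2$ from the explicit scheme (giving $n_1=n_2=1$ and $N_1=N_2=N$, hence constraint $\alpha_1\alpha_2=2$ or $4$), then exhibits the minimizing pair $(\alpha_1,\alpha_2)=(2,1)$ for $N=2$ and $(2,2)$ for $N=4$ to achieve $\alpha_1 R_{D_1}+\alpha_2 R_{D_2}=N$. Your added $N=3$ sanity check is a nice consistency remark the paper omits, but the core argument is identical.
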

	
	\begin{proof}
		For the case $N=2$, as we mention in Appendix \ref{sec:schemes}, we use the PIR scheme proposed in  \cite[Section III-A]{tian2019capacity}. Remember this scheme has uniform distribution on queries. In this scheme $N_1=N_2=2$, $n_1=n_2=1$, and $R_{D_1}=0.5, R_{D_2}=1$. For the minimization in the left hand side of \eqref{thm:r1r2lower}, we have $\alpha_1 \alpha_2 = \alpha'=2$. The minimum happens when $\alpha_1=2, \alpha_2=1$. Then,
		\begin{align}
			2R_{D_1}+R_{D_2}= 2 = N.
		\end{align}
		So this case holds \eqref{thm:r1r2lower} with equality.
		
		
		For the case $N=4$ introduced in Appendix \ref{sec:schemen4}, we have $N_1=N_2=4$, $n_1=n_2=1$, and $R_{D_1}=1, R_{D_2}=1$. Again remember, this scheme has uniform distribution on queries. For the minimization in the left hand side of \eqref{thm:r1r2lower}, we have $\alpha_1 \alpha_2 = \alpha'=4$. The minimum happens when $\alpha_1=2, \alpha_2=2$. Then,
		\begin{align}
			2R_{D_1}+2R_{D_2} = 4 = N. 
		\end{align}
		So this case also holds \eqref{thm:r1r2lower} with equality.
	\end{proof}

By applying the proposed two-server PIR schemes  in Theorems~\ref{thm:cachinginPIR} and~\ref{thm:pirschemes}  into our construction in Theorem~\ref{thm:cachepir}, we can directly obtain the following coded caching schemes with private demands and caches. Note that for the first three parts, we use the schemes of Theorem \ref{thm:pirschemes}, and for the last part, we use the scheme in Theorem \ref{thm:cachinginPIR}.
	\begin{corollary}
		\label{cor:tradeoffdiffN}
		For the coded caching problem with private demands and caches, we have the following achievable schemes:
		
		1) when $N=2$, the following memory-load points are achievable,
		\begin{align}
			(M,R)=\left(\frac{2t}{K}+(1-\frac{t}{K})(\mu_1/2+\mu_2), (\mu_1 +\mu_2/2)\frac{K-t}{t+1}\right), \forall t\in [0:K-1], 
		\end{align} 
for $\mu_1 +\mu_2=1$ and 	$\mu_1 , \mu_2>0$, 
	while the required subpacketization is $\binom{K}{t}$; 
	
		2) when $N=3$, the following memory-load points are achievable,
		\begin{align}
			(M,R)=\left(\frac{3t}{K}+(1-\frac{t}{K}), \frac{K-t}{t+1}\right), \forall t\in [0:K-1], 
		\end{align} 
	while the required subpacketization is $\binom{K}{t}$;
	
		3) when $N=4$, the following memory-load points are achievable,
		\begin{align}
			(M,R)=\left(\frac{4t}{K}+(1-\frac{t}{K}), \frac{K-t}{t+1}\right), \forall t\in [0:K-1], 
		\end{align} 
while the required subpacketization is $\binom{K}{t}$;

		4) when general $N$, the following memory-load points are achievable,
		\begin{align}
			(M,R)=\left(\frac{t}{K}N+(1-\frac{t}{K})O(\sqrt{N}), O(\sqrt{N}) \frac{K-t}{t+1})\right), \forall t\in [0:K-1],
			\label{eq:general N}
		\end{align} 
		while the required subpacketization is $O\left(\binom{K}{t} \sqrt{N}^{\sqrt{N}}\right)$. 
	\end{corollary}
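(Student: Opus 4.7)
The plan is to obtain each of the four memory-load tradeoffs as a direct instantiation of the construction in Theorem~\ref{thm:cachepir}, feeding in the appropriate two-server PIR scheme satisfying the UDIQ condition. In other words, no new combinatorial work is needed: the corollary is a bookkeeping exercise that substitutes the download-cost pair $(R_{D_1},R_{D_2})$ and subpacketization $F^{\prime}$ of the chosen PIR scheme into the memory expression $Nt/K + (1-t/K)(\mu_1 R_{D_1}+\mu_2 R_{D_2})$, the load expression $(\mu_1 R_{D_2}+\mu_2 R_{D_1})(K-t)/(t+1)$, and the subpacketization expression $\binom{K}{t}F^{\prime}$ given in~\eqref{eq:novel approach achieved memory load}.

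First I would handle parts~1--3, invoking Theorem~\ref{thm:pirschemes} to supply two-server PIR schemes satisfying the UDIQ condition with $F^{\prime}=1$. For $N=2$, plug in $(R_{D_1},R_{D_2})=(1/2,1)$: the memory becomes $2t/K+(1-t/K)(\mu_1/2+\mu_2)$ and the load becomes $(\mu_1+\mu_2/2)(K-t)/(t+1)$ with subpacketization $\binom{K}{t}\cdot 1$, for every $t\in[0:K-1]$ and every convex combination $\mu_1+\mu_2=1$. For $N=3$ and $N=4$, the PIR scheme yields $R_{D_1}=R_{D_2}=1$, so $\mu_1 R_{D_1}+\mu_2 R_{D_2}=\mu_1+\mu_2=1$ collapses the $\mu$-dependence, producing $(M,R)=(Nt/K+(1-t/K),(K-t)/(t+1))$ with subpacketization $\binom{K}{t}$.

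For part~4, I would invoke Theorem~\ref{thm:cachinginPIR} with parameter $t'=O(\sqrt{N})$, giving a UDIQ two-server PIR scheme with $R_{D_1}=R_{D_2}=O(\sqrt{N})$ and subpacketization $F^{\prime}=O\bigl(\sqrt{N}^{\sqrt{N}}\bigr)$. Since the two download costs coincide (and one may always take $\mu_1=\mu_2=1/2$ by time-sharing), the terms $\mu_1 R_{D_1}+\mu_2 R_{D_2}$ and $\mu_1 R_{D_2}+\mu_2 R_{D_1}$ both equal $O(\sqrt{N})$, and feeding this into~\eqref{eq:novel approach achieved memory load} immediately yields~\eqref{eq:general N} with subpacketization $\binom{K}{t}\cdot F^{\prime}=O\bigl(\binom{K}{t}\sqrt{N}^{\sqrt{N}}\bigr)$.

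The only thing to check for each of the four parts is that the hypothesis of Theorem~\ref{thm:cachepir} is met, namely that the supplied PIR scheme is a valid two-server $N$-message scheme satisfying the UDIQ condition of Definition~\ref{def:udiq}; this is guaranteed by the statements of Theorems~\ref{thm:pirschemes} and~\ref{thm:cachinginPIR}. Given that, the arithmetic above is the entire proof, so I do not expect any genuine obstacle: the ``work'' lies in Theorems~\ref{thm:cachepir}, \ref{thm:pirschemes}, and \ref{thm:cachinginPIR}, while this corollary merely records the four instantiations that are most relevant for subpacketization-friendly schemes.
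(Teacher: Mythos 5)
Your proposal is correct and matches the paper's own argument exactly: the corollary is obtained by feeding the PIR schemes of Theorem~\ref{thm:pirschemes} (for $N\in\{2,3,4\}$, with $F'=1$) and of Theorem~\ref{thm:cachinginPIR} (for general $N$, with $t'=O(\sqrt{N})$) into the construction of Theorem~\ref{thm:cachepir}, then reading off the memory, load, and subpacketization from~\eqref{eq:novel approach achieved memory load}. The arithmetic you carry out agrees with the corollary's stated tradeoffs in all four cases.
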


For the general $N$, the proposed caching scheme with private demands and caches in~\eqref{eq:general N} has subpacketization level   $O\left(\binom{K}{t} \sqrt{N}^{\sqrt{N}}\right)$. Note that the subpacketization of the virtual users scheme in Theorem~\ref{thm:virtual user} is $\binom{NK}{t}$. Based on the asymptotic approximation of the binomial coefficients, the subpacketization of the virtual users scheme would be $F_1=\binom{NK}{MK} \simeq 2^{NK\mathcal{H}(\frac{M}{N})}$, where $\mathcal{H}(.)$ is the binary entropy function. The subpacketization of our general scheme is on the order of $F_2 \simeq 2^{K\mathcal{H}(\frac{M}{N})} 2^{\frac{1}{2}\sqrt{N}\log_2(N)}$.
	 Then 
	\begin{align}
		\frac{F_2}{F_1} = \frac{2^{K\mathcal{H}(\frac{M}{N})+\frac{1}{2}\sqrt{N}\log_2(N)}}{2^{NK\mathcal{H}(\frac{M}{N})}}.
	\end{align}
	If we assume $\frac{M}{N}$ is not vanishing with $N$,  $\frac{F_2}{F_1}$ goes to $0$ when $N$ and $K$ increase.
	
	\begin{remark}
		Based on Remark \ref{re:simp}, the proposed demand and cache private coded caching schemes in Corollary \ref{cor:tradeoffdiffN} for $N \in \{2,3,4\}$, are optimal within a constant multiplicative factor.
	\end{remark}

At the end of this subsection, we   illustrate the main idea of the construction in Theorem~\ref{thm:private demands} through one example.
	\begin{example}[$K=N=2, M=\frac{5}{4}$]
  In this example, we use the PIR scheme in     \cite[Section III-A]{tian2019capacity} in which the total download cost is $R_D=3/2$ and the subpacketization level is $F'=1$. Their scheme is presented in Table \ref{tab:tablentian}.
	
	\begin{table}
		\centering
		\begin{tabular}{| c || c || c | c |} 
			\hline
			\multirow{2}{*}{} & \multirow{2}{*}{Server $1$} &  \multicolumn{2}{c|}{Server $2$} \\ \cline{3-4}
			& & $d=1$ & $d=2$ \\
			\hline\hline
			$T=0$ & $0$ & $W_1$ & $W_2$ \\
			\hline
			$T=1$ & $W_1+W_2$ & $W_2$ & $W_1$ \\
			\hline
		\end{tabular}
		\caption{ Two-server PIR scheme in \cite{tian2019capacity} for $N=K=2$.}
  \label{tab:tablentian}
	\end{table}

	Assume the two files are $A$ and $B$. Each file is devided into two equal-length and non-overlapping subfiles as $A=(A_1,A_2)$ and $B=(B_1,B_2)$. 
	
	{\it Placement.}
	For the first part of the cache, user 1 caches $Z_1=(A_1,B_1)$ and user 2 caches $Z_2=(A_2,B_2)$. As can be seen in the PIR scheme, $\mathcal{Q}_1=\mathcal{Q}_2=2$. User $i$ chooses $T_i\in \{0,1\}$ each with probability $1/2$. Suppose $T_1=0$ and $T_2=1$. Based on our proposed approach in Theorem~\ref{thm:cachepir}, the second user additionaly caches $\gamma_{1}(Q_{1,2}=T_2=1,A_1,B_1)=A_1+B_1$, while the first user caches nothing additional since $\gamma_{1}(Q_{1,1}=T_1=0,A_2,B_2)=0$. So in total, the caches by the two users are
	\begin{align}
		Z_1&=(A_1,B_1), \\
		Z_2&=(A_2,B_2,A_1+B_1).
	\end{align}
	
	{\it Delivery.}
	Assume that  user 1 demands file $A$ and user 2 demands file $B$. 
	 Since $\gamma_{2}(Q_{2,1}=T_1=0,A_2,B_2)=A_2$ and $\gamma_{2}(Q_{2,2}=T_2=1,A_1,B_1)=A_1$, the transmission of the server is $A_2+A_1$. User $1$ cancels out $A_1$ and recovers $A_2$.   User $2$ 
recovers $B_1$ by using  the transmission 	$A_2+A_1$ and the cached content $A_2$,  $A_1+B_1$. 
	 So both users receive their desired subfiles. For other cases of $(T_1,T_2)$, the transmission of the server follows Table \ref{tab:tableex}. As can be seen, when $A_2+A_1$ is sent by the server, there can be four different cases happening.
	
	\begin{itemize}
		\item $(T_1,T_2)=(0,0)$ and demand vector $\mathbf{d}=(A,A)$;
		\item $(T_1,T_2)=(0,1)$ and demand vector $\mathbf{d}=(A,B)$;
		\item $(T_1,T_2)=(1,0)$ and demand vector $\mathbf{d}=(B,A)$;
		\item $(T_1,T_2)=(1,1)$ and demand vector $\mathbf{d}=(B,B)$.
	\end{itemize}
	For user 1 who is aware of the values $T_1=0, d_1=A$, there can exist two possible options of  
	\begin{itemize}
		\item $(T_1,T_2)=(0,0)$ and demand vector $\mathbf{d}=(A,A)$,
		\item $(T_1,T_2)=(0,1)$ and demand vector $\mathbf{d}=(A,B)$,
	\end{itemize}
	which reveals no information about the value of $d_2$ nor $T_2$ since
	\begin{align}
		& \Pr (d_2=A|T_1=0, d_1=A, X_\mathbf{d}=A_2+A_1) = \frac{\Pr (d_2=A, T_1=0, d_1=A, X_\mathbf{d}=A_2+A_1)}{\Pr (T_1=0, d_1=A, X_\mathbf{d}=A_2+A_1)} \nonumber \\
		& = \frac{\Pr (d_2=A, T_1=0, d_1=A) \Pr (X_\mathbf{d}=A_2+A_1 | d_2=A, T_1=0, d_1=A)}{\Pr (T_1=0, d_1=A) \Pr (X_\mathbf{d}=A_2+A_1|T_1=0, d_1=A)} \nonumber \\
		& = \frac{(1/2)^3 (1/2)}{(1/2)^2(1/2)}=\frac{1}{2},
	\end{align}
	and
	\begin{align}
		& \Pr (T_2=0|T_1=0, d_1=A, X_\mathbf{d}=A_2+A_1) = \frac{\Pr (T_2=0, T_1=0, d_1=A, X_\mathbf{d}=A_2+A_1)}{\Pr (T_1=0, d_1=A, X_\mathbf{d}=A_2+A_1)} \nonumber \\
		& = \frac{\Pr (T_2=0, T_1=0, d_1=A) \Pr (X_\mathbf{d}=A_2+A_1 | T_2=0, T_1=0, d_1=A)}{\Pr (T_1=0, d_1=A) \Pr (X_\mathbf{d}=A_2+A_1|T_1=0, d_1=A)} \nonumber \\
		& = \frac{(1/2)^3 (1/2)}{(1/2)^2(1/2)}=\frac{1}{2},
	\end{align}
	which equlas the prior probability for $d_2$ and $T_2$. Thus, both the demand and cache of user 2 is kept private. Similarly this holds for user 1.
	
	\begin{table}
		\centering
		\begin{tabular}{| c || c | c | c | c |} 
			\hline
			& $\mathbf{d}=(A,A)$ & $\mathbf{d}=(A,B)$ &  $\mathbf{d}=(B,A)$ & $\mathbf{d}=(B,B)$ \\ 
			\hline\hline
			$(T_1,T_2)=(0,0)$ & $A_2+A_1$ & $A_2+B_1$ & $B_2+A_1$ & $B_2+B_1$ \\
			\hline
			$(T_1,T_2)=(0,1)$ & $A_2+B_1$ & $A_2+A_1$ & $B_2+B_1$ & $B_2+A_1$ \\
			\hline
			$(T_1,T_2)=(1,0)$ & $B_2+A_1$ & $B_2+B_1$ & $A_2+A_1$ & $A_2+B_1$ \\
			\hline
			$(T_1,T_2)=(1,1)$ & $B_2+B_1$ & $B_2+A_1$ & $A_2+B_1$ & $A_2+A_1$ \\
			\hline
		\end{tabular}
		\caption{Delivery phase of demand and cache private coded caching scheme for $K=N=2$ and $M=\frac{5}{4}$.}
  \label{tab:tableex} 
	\end{table}

	Note that both the load $1/2$ and cache size $5/4$ are expected values over the random choice of the queries to the first server in the placement phase and the corresponding queries to the second server in the delivery phase. Note that user $2$ in this example has a cache size of $3/2$ but if it had chosen $T_2=0$, like the first user, it would have had a cache of size $1$. So on average we have a cache size of $5/4$.
	
	As a comparison, the privacy key scheme in~\cite{yan2021fundamental} for the same system parameters of $K=N=2, M=5/4$ has a load of $R=5/4$ whereas our scheme reaches the load $R=1/2$ while it preserves cache privacy additionally and the privacy key scheme does not. 
	
	\end{example}



	\section{Coded Caching with Private Demands and Imperfectly Private Caches} \label{sec:extension}
	Since constructing two-server PIR schemes under the UDIQ property is difficult, and in any case the download cost $R_D$ increases at least as $O(\sqrt{N})$ (see Theorem~\ref{thm:lowerbound}), to be able to propose better PIR schemes in terms of download cost which results in better memory-load tradeoffs for the corresponding caching scheme (see Theorem~\ref{thm:private demands}), in this section we relax the perfect cache privacy and allow some leakage in the cache information, while preserving perfect demand privacy.  

	We first review the leakage metric in the literature of  leaky PIR and then introduce our metric of leakage.  
	  Next, we apply the two-server PIR scheme in~\cite{dvir20162} into our construction structure in Theorem~\ref{thm:private demands}, and compute the cache leakage of the resulting coded caching scheme with private demands. Finally, we compare the resulting schemes with the existing coded caching schemes with private demands, in terms of load and cache leakage.

	\subsection{Cache information leakage}
	Privacy leakage has been already introduced in several works on PIR following various definitions (see~\cite{samy2019capacity,dwork2008differential,lin2019weakly,lin2021capacity,samy2021asymmetric, guo2020information, zhou2020weakly}). In this section we introduce a privacy leakage definition on the cache information which is relevant to our setting. 
    The decoding and demand privacy constraints stay the same as in \eqref{eq:decoding} and \eqref{eq:demand privacy constraint}, while the cache privacy constraint in \eqref{eq:cache privacy constraint} does not exist anymore. As the cache privacy constraint in \eqref{eq:cache privacy constraint} suggests, the perfect scenario for cache is that the ambiguity on its information does not change conditioned on the knowledge of server transmission. In a non-perfect scenario, we want to keep the distribution on cache information before and after server transmission close to each other as much as possible. 
  
    In information-theoretic secrecy \cite{el2011network}, the {\it information leakage rate} associated with the $(2^{nR}, n)$ secrecy code is defined as
    \begin{align}
   	  \frac{1}{n} I(M,Z^n),
    \end{align}
	in which $M$ represents the sender's message and $Z^n$ represents the message received by the eavesdropper for the block length $n$. In our definition for cache privacy, the server's transmission $X_{\mathbf{d}}$ acts as the message received by the eavesdropper, and user $k$'s cache metadata $\mathscr{M}_k$ the message we want to keep private. We replace the block length $n$ with the entropy of cache metadata as the block length for the message. This motivates our consideration of the following cache leakage metric for user $k$:
	\begin{align} \label{def:eps}
		\epsilon_k =  \frac{I(\mathscr{M}_k; X_{\mathbf{d}})}{H(\mathscr{M}_k)} = 1 - \frac{H(\mathscr{M}_k |  X_{\mathbf{d}})}{H(\mathscr{M}_k)}, \ \forall k \in [K]
	\end{align}
	where $H(\cdot)$ is the entropy function. In the fully private case when there is no leakage, this metric is $0$. As the uncertainty amount on cache information decreses after server transmission, the leakage grows and goes to $1$ when the cache information is fully leaked.

\subsection{Cache-leakages of \cite{yan2021fundamental} and \cite{DBLP:journals/corr/abs-1909-03324}}
	
	We then consider the coded caching schemes with private demands in \cite{yan2021fundamental} and \cite{DBLP:journals/corr/abs-1909-03324}, and compute their leakages on the cache. 
		 For the case of single-file requests in \cite{yan2021fundamental}, the randomness on cache for user $k$ is $\mathscr{M}_k=\mathbf{p}_k:=(p_{k,1}, \dots, p_{k,N})$ which is chosen uniformly at random from $\mathbb{F}_q^k$,  such that the summation of the elements of $\mathbf{p}_k$ equals $q-1$; $\sum_{n \in [N]} p_{k,n}=q-1$. Based on this constraint, the total number of choices for $\mathbf{p}_k$ is $q^{N-1}$. Thus we have
	\begin{align}
		H(\mathscr{M}_k) = (N-1) \log (q).
	\end{align}
	If we denote the demand vector for user $k$ by $\mathbf{d}_k$ which for single-file demands has a $1$ on the position of requested file index and $0$ elsewhere, the server sends $\mathbf{q}_k=\mathbf{p}_k+\mathbf{d}_k$ for all $k \in [K]$ as a metadata alongside the main message. Having $\mathbf{q}_k$, since there are only $N$ options for $\mathbf{d}_k$ (uniformly chosen), our options for $\mathbf{p}_k$ would be also limited to $N$. Thus
	\begin{align}
		H(\mathscr{M}_k | X_{\mathbf{d}}) = \log (N).
	\end{align}
	According to \eqref{def:eps} we have
	\begin{align} \label{eq:leakagepk}
		\epsilon_k = 1- \frac{1}{\log (q)} \frac{\log (N)}{(N-1)},
	\end{align}
	which goes to $1$ as $N$ increases. Our goal is to introduce a coded caching scheme with non-zero leakage on cache using a two-server PIR scheme that does not satisfy the UDIQ condition in Definition \ref{def:udiq}, instead of the perfectly private scheme of Theorem \ref{thm:cachinginPIR}, with the benefit of attaining better download costs and subpacketization for the PIR scheme which will directly affect the memory-load tradeoff and subpacketization of the resulted coded caching scheme based on our structure in Theorem \ref{thm:cachepir}. 
	
	For the virtual users scheme of \cite{DBLP:journals/corr/abs-1909-03324}, the cache of user $k$ is selected between $N$ choices uniformly at random. After the transmission, the probabilty distribution over cache information does not change as proved in Theorem \ref{thm:virtual user}. So in this case, the leakage would be $\epsilon_k=0$ for all users, which is perfect but as mentioned before, this scheme has a huge subpacketization level. 	

	\subsection{Review on \cite{woodruff2005geometric} and \cite{dvir20162}} \label{sec:bestpir}
	We then review the protocol proposed in \cite{dvir20162} with the lowest communication cost (equal to $N^{o(1)}$) among all existing two-server PIR protocols, which will be applied into our proposed construction structure in Theorem~\ref{thm:private demands}.
	This scheme is a combination of an existing two-server PIR scheme which uses polynomial interpolation \cite{woodruff2005geometric} and Matching Vector Codes (MV codes) \cite{efremenko20123, yekhanin2008towards}. We will shortly go through \cite{woodruff2005geometric} and then introduce matching vector families and after that, describe the protocol in \cite{dvir20162}. 
	
	The scheme in \cite{woodruff2005geometric} is based on building  polynomials with degree $3$. First, choose $k$ such that $N \leq \binom{k}{3}$. Pick a finite field $\mathbb{F}_q$ where $q>3$. Define an encoding $\phi$ that maps indices in $[N]$ to binary $k$-dimensional space. 
	\begin{align}
		\phi:[N] \rightarrow \{0,1\}^k \subset \mathbb{F}_q^k,
	\end{align}
	such that the resulting $k$-dimensional codewords are of Hamming weight $3$. If we denote the $k$-dimantional space by $\mathbf{x}=(x_1,\dots,x_k)$, the polynomial $F(\mathbf{x}) \in \mathbb{F}_q[x_1,\dots,x_k]$ where $\mathbb{F}_q[x_1,\dots,x_k]$ denotes the field of polynomials with variables $x_1,\dots,x_k$ over $\mathbb{F}_q$, is defined as follows, 
	\begin{align}
		F(\mathbf{x}) = \sum_{i=1}^{N} W_i \left(\prod_{j:\phi(i)_j=1}x_j\right),
	\end{align}
	in which the files $W_i$ are considered to be one bit. This polynomial satisfies $F(\phi(i))=W_i, \forall i \in [N]$.
	
	Suppose the user demands  the file $W_\tau$. The scheme works as follows:
	\begin{itemize}
		\item the user picks a $\mathbf{z} \in \mathbb{F}_q^k$ uniformly at random;
		\item the user sends $\phi(\tau) + t_i\mathbf{z}$ to server $i$ in which $t_1 \neq t_2 \in \mathbb{F}_q \backslash \{0\}$;
		\item server $i$ sends to the user the values $F(\phi(\tau)+t_i\mathbf{z})$ and $\nabla F(\phi(\tau)+t_i\mathbf{z})$.
	\end{itemize} 
	With the answers received from both servers,   the user can retrieve $F(\phi(\tau))=W_\tau$; the reader can refer to \cite{woodruff2005geometric} for the detailed proof of decodability. 
	 The privacy of demand is protected since $\phi(\tau) + t_i\mathbf{z}$ is uniformly distributed in $\mathbb{F}_q^k$ for any value of $\tau$. 
	
	We then review the two-server PIR scheme in \cite{dvir20162}, starting with the following definition. 
	\begin{definition} [Matching Vector Family] \label{def:mvf}
		Let $S \subset \mathbb{Z}_m \backslash \{0\}$ and let $\mathcal{F} = (\mathcal{U}, \mathcal{V})$ where $\mathcal{U}=(\mathbf{u}_1,\dots,\mathbf{u}_N)$, $\mathcal{V}=(\mathbf{v}_1,\dots,\mathbf{v}_N)$ and $\mathbf{u}_i, \mathbf{v}_i \in \mathbb{Z}_m^w, \forall i \in [N]$. Then $\mathcal{F}$ is called an $S$-matching vector family of size $N$ and dimension $w$ if $\forall i,j$,
		\begin{align}
			\langle \mathbf{u}_i,\mathbf{v}_j \rangle \left\{
			\begin{array}{ll}
				=0 \; &\text{if } i=j \\
				\in S &\text{if } i\neq j
			\end{array},
			\right.
		\end{align}
	\end{definition}
	where $\langle \mathbf{u}_i,\mathbf{v}_j \rangle$ indicates the inner product between the two vectors. 
	It has been shown that based on \cite[Theorem 1.2]{grolmusz2000superpolynomial}, for $S=\{1,3,4\}$, we can build matching vector codes with parameters $N$ and $w$ (and when $m$ is composite) such that 
	\begin{align} \label{w}
		w=\exp \left(O\left(\sqrt{\log N \log \log N}\right)\right).
	\end{align}
	For a commutative ring $\mathcal{R}$, the ring of polynomials in variables $x_1, \dots, x_w$ with coefficients in $\mathcal{R}$ is denoted by $\mathcal{R}[x_1,\dots,x_w]$. In \cite{dvir20162}, the authors introduce a definition to extend the notion of partial derivatives to polynomials in $\mathcal{R}[x_1,\dots,x_w]$ as follows.
	\begin{definition}
		Let $\mathcal{R}$ be a commutative ring and let $F(\mathbf{x})=\sum c_{\mathbf{z}} \mathbf{x}^{\mathbf{z}} \in \mathcal{R}[x_1,\dots,x_w]$. We define $F^{(1)}(\mathbf{x}) \in (\mathcal{R}^w)[x_1,\dots,x_w]$ to be
		\begin{align} \label{eq:polynomialder}
			F^{(1)}(\mathbf{x}):=\sum (c_{\mathbf{z}} \cdot \mathbf{z}) \mathbf{x}^{\mathbf{z}},
		\end{align}
	\end{definition}
	where $ \mathbf{x}^{\mathbf{z}}=x_1^{z_1} x_2^{z_2} \dots x_w^{z_w}$. Now we are ready to introduce the scheme in \cite{dvir20162}.
    
    For the rest of this section, $\mathcal{R}=\mathcal{R}_{6,6}=\mathbb{Z}_6[\gamma]/(\gamma^6-1)$ which is the ring of univariate polynomilas $\mathbb{Z}_6[\gamma]$ modulo the identity $\gamma^6=1$ as defined in \cite{dvir20162}. It should be pointed out that the set $S$ which contains only three values is the key to this scheme since, rougly speaking, the powers of $\gamma$ appearing in the polynomial are from this set and $0$ and therefore, there will be four unknown coefficients and we would only need two evaluations and two derivatives to recover the intended value. We will not go into the details of the recovery and refer the reader to the paper. 
    
    Assume the user's demand is $W_\tau$. The servers save the data in the polynomial $F(\mathbf{x}) \in \mathcal{R}[x_1,\dots,x_w]$ where
	\begin{align} \label{eq:polynomial}
		F(\mathbf{x})=F(x_1,\dots,x_w)=\sum_{i=1}^{N}W_i \mathbf{x}^{\mathbf{u}_i},
	\end{align}
	in which $\mathcal{U}=(\mathbf{u}_1,\dots,\mathbf{u}_N)$ is given by the matching vector family $\mathcal{F} = (\mathcal{U}, \mathcal{V})$ for $m=6$ as $w$ as in~\eqref{w}. Then,
	\begin{itemize}
		\item the user picks a $\mathbf{z} \in \mathbb{Z}_6^w$ uniformly at random;
		\item the user sends $\mathbf{z}+t_i \mathbf{v}_\tau$ to server $i$; 
		\item server $i$ sends back the values $F(\gamma^{\mathbf{z}+t_i \mathbf{v}_\tau})$ and $F^{(1)}(\gamma^{\mathbf{z}+t_i \mathbf{v}_\tau})$,
	\end{itemize}
	where the vector $(\gamma^{z_1+t_i \mathbf{v}_{\tau,1}}, \gamma^{z_2+t_i \mathbf{v}_{\tau,2}}, \dots, \gamma^{z_w+t_i \mathbf{v}_{\tau,w}})$ is denoted by $\gamma^{\mathbf{z}+t_i \mathbf{v}_\tau}$. Since the values $\mathbf{z}+t_i \mathbf{v}_\tau$ are distributed uniformly on $\mathbb{Z}_6^w$, the privacy of demand in the PIR scheme is preserved. Also in the scheme $t_1=0$ and $t_2=1$. Since the user sends elements in $\mathbb{Z}_6^w$ to both servers and recieves an element in $\mathcal{R}$ and another one in $\mathcal{R}^w$ from each server, the communication cost would be $O(w)=N^{O\left(\sqrt{\frac{\log \log N}{\log N}}\right)}$.

	\subsection{Coded caching schemes with private demands and imperfectly private caches based on Theorem~\ref{thm:private demands}}

	We now apply the two-server PIR scheme  in~\cite{dvir20162} into our structure in Theorem \ref{thm:private demands}. 	Assume we have a system of $K$ users and $N$ files $W_1,\dots,W_N$. The server is connected to the users with a shared link. For any $t =KM/N \in [K]$, each file is split into $\binom{K}{t}$ non-overlapping subfiles with the same size, 
	\begin{align}
		W_n=(W_{n,\tau}:\tau \subset [K], |\tau|=t).
	\end{align}
	We assume that each   subfile  has one bit; but we can easily extend the scheme for the other case. 
	
{\it Placement phase.} 	In the first part of the placement phase, for every $k \in [K]$, any subfile $W_{n,\tau}$ is stored in the cache if $k \in \tau$. Therefore,
	\begin{align}
		\{W_{n,\tau}: n\in [N], \tau \subset [K], |\tau|=t, k\in\tau  \} \subset Z_k.
	\end{align}
	
	In the second part of the placement phase, for each set $\tau \subseteq [K]$ where $|\tau|=t$ and  $k \notin \tau$, user $k$ caches the result of an encoding on all subfiles $\{W_{n,\tau}, n \in [N] \}$. The matching vector family $\mathcal{F}=(\mathcal{U}, \mathcal{V})$ is constructed in which $\mathcal{U}=(\mathbf{u}_1,\dots,\mathbf{u}_N)$ and $\mathcal{V}=(\mathbf{v}_1,\dots,\mathbf{v}_N)$ such that $\mathbf{u}_i, \mathbf{v}_i \in \mathbb{Z}_6^w, \forall i \in [N]$ as explained previously.
	
	User $k$ picks $\mathbf{z}_k \in \mathbb{Z}_6^w$ uniformly at random. The user sends $\mathbf{z}_k$ to the server. For each $\tau$ such that $k \notin \tau$, the server sends  $F(\gamma^{\mathbf{z}_k},W_{[N],\tau})$ and $F^{(1)}(\gamma^{\mathbf{z}_k},W_{[N],\tau})$ to the user where 
	\begin{align} \label{eq:defencoding}
		F(\mathbf{x},W_{[N],\tau})&=F(x_1,\dots,x_k,W_{1,\tau},\dots,W_{N,\tau})\nonumber \\&=\sum_{i=1}^{N}W_{i,\tau} \mathbf{x}^{\mathbf{u}_i}.
	\end{align}
	This completes the placement phase. 
	
	{\it Delivery phase.} 	Assume that user $k$ demands the file $W_{\tau_k}$. In the delivery phase, user $k$ sends $\mathbf{z}_k+\mathbf{v}_{\tau_k}$ to the server. For each $\mathcal{S} \subset [K]$ where $|\mathcal{S}|=t+1$, the server sends the multicast messages
	\begin{align}
		 Y_{\mathcal{S}} =  
		&\left(\sum_{s \in \mathcal{S}} F\left(\gamma^{\mathbf{z}_s+\mathbf{v}_{\tau_s}},W_{[N],\mathcal{S}\backslash s}\right), \sum_{s \in \mathcal{S}} F^{(1)}\left(\gamma^{\mathbf{z}_s+\mathbf{v}_{\tau_s}},W_{[N],\mathcal{S}\backslash s}\right) \right).
	\end{align}

	Alongside with messages $Y_{\mathcal{S}}$, in order for the users to be able to decode their needed messages, the server should send the values $\{\mathbf{z}_k+\mathbf{v}_{\tau_k}, \forall k \in [K]\}$ as metadata. So the transmitted message by the server $X_{\mathbf{d}}$ would be
	\begin{align}
		X_{\mathbf{d}} &= \{Y_{\mathcal{S}}, \mathcal{S} \subseteq [K], |\mathcal{S}|=t+1\}   \bigcup \{\mathbf{z}_k+\mathbf{v}_{\tau_k}, \forall k \in [K]\}.
	\end{align}
	The decodability proof follows from the proof of Theorem \ref{thm:cachepir}.
	
	{\it Performance.} 
	An observation on the scheme reveals that the cache metadata equals $\mathscr{M}_k=\mathbf{z}_k$. Now we compute the amount of cache information leakage based on \eqref{def:eps}. Since $\mathscr{M}_k$ is uniformly distributed in $\mathbb{Z}_6^w$,
	\begin{align}
		H(\mathscr{M}_k)=w \log 6.
	\end{align}
	In addition, we have
	\begin{subequations}
		\begin{align}
			H(\mathscr{M}_k | X_{\mathbf{d}})&  = H\left(\mathbf{z}_k | \{\mathbf{z}_{k'}+\mathbf{v}_{\tau_{k'}}, \forall k' \in [K]\}\right) \label{p1} \\
			&= H\left(\mathbf{z}_k | \mathbf{z}_k+\mathbf{v}_{\tau_k} \right) \label{p2} \\
			&= \log N,
		\end{align}
	\end{subequations}
	where \eqref{p1} comes from that the values $Y_{\mathcal{S}}$ depend on $\{\mathbf{z}_{k'}+\mathbf{v}_{\tau_{k'}}, \forall k' \in [K]\}$ and the library and \eqref{p2} comes from independence of $\mathbf{z}_k$ values for all $k \in [K]$. Therefore
	\begin{align} \label{leakours}
		\epsilon_k = 1 - \frac{\log N}{w \log 6} = 1 - O\left(\frac{\log N}{ N^{\sqrt{\frac{\log \log N}{\log N}}}}\right). 
	\end{align}
		Since the total download cost of this scheme in $O(w)=N^{O\left(\sqrt{\frac{\log \log N}{\log N}}\right)}$, based on our structure in Theorem \ref{thm:cachepir}, we can achieve the lower convex envelop of the the memory-load pair points 
		\begin{align}
			(M,R)= \left(  \frac{Nt}{K}+\left(1-\frac{t}{K}\right) N^{O\left(\sqrt{\frac{\log \log N}{\log N}}\right)}, N^{O\left(\sqrt{\frac{\log \log N}{\log N}}\right)}\frac{K-t}{t+1}  \right).
		\end{align}

	To compare, the memory-load tradeoff and cache leakage of the privacy key scheme of \cite{yan2021fundamental} follows $\left(1+\frac{t(N-1)}{K}, \frac{\binom{K}{t+1}-\binom{K-\min \{N-1,K\}}{t+1}}{\binom{K}{t}}\right), \ \forall t\in [0:K]$ and $\epsilon_k = 1- \frac{1}{ \log (q)} \frac{\log (N)}{(N-1)}$ respectively, whereas for our scheme for general $N$ in Corollary \ref{cor:tradeoffdiffN}, we have $\left(\frac{t}{K}N+(1-\frac{t}{K})O(\sqrt{N}), O(\sqrt{N}) \frac{K-t}{t+1}\right), \\ \forall t\in [0:K-1]$ as memory-load pair and $\epsilon_k=0$. For the scheme in this section these parameters are $\left(  \frac{Nt}{K}+\left(1-\frac{t}{K}\right) N^{O\left(\sqrt{\frac{\log \log N}{\log N}}\right)}, N^{O\left(\sqrt{\frac{\log \log N}{\log N}}\right)}\frac{K-t}{t+1}  \right), \forall t\in [0:K-1]$ and $\epsilon_k = 1 - O\left(\frac{\log N}{ N^{\sqrt{\frac{\log \log N}{\log N}}}}\right)$. The scheme in this section works better in terms of cache leakage than the privacy key scheme since it converges much more slowly to $1$, but has worse load. On the other hand, compared to our perfectly private scheme, it has a better load but of course worse leakage on cache. 
	
	At the end of this section, we provide an example to illustrate the proposed coded caching scheme with private demands and imperfectly private caches by leveraging the two-server PIR scheme  in~\cite{dvir20162}. 
	\begin{example}[$N=K=2$ and $t=1$]
This is an example just to demonstrate the placement and delivery phases of the proposed scheme. Therefore, we will not care about the $S$ and $w$ parameters of the matching vector family. In this scheme, $S=\{1\}$ and $w=2$.
We consider the coded caching problem with $N=K=2$ and $t=KM/N=1$. 
Each file is splitted into $\binom{K}{t}=2$ non-overlapping equally-sized subfiles, i.e. $A=(A_1, A_2), B=(B_1,B_2)$. In the first part of the placement phase, each user's cache will be as follows, 
	\begin{align}
		Z_1&=(A_1,B_1), \\
		Z_2&=(A_2,B_2).
	\end{align}
	
	For the second part of placement, we first introduce a matching vector family based on Definition \ref{def:mvf}. 
	We define the $2$-tuples $\mathcal{U}$ and $\mathcal{V}$ of the matching vector family as follows,
	\begin{align}
		\mathcal{U} &= \left((0,1),(1,0)\right), \\
		\mathcal{V} &= \left((1,0),(0,1)\right).
	\end{align}

	The polynomial $F(\mathbf{x})$ in \eqref{eq:polynomial} for library files $W_1$ and $W_2$ is as follows,
	\begin{align}
		F(\mathbf{x}) &= W_1x_2+W_2x_1
	\end{align}

	In addition, the function $F^{(1)}(\mathbf{x})$ in \eqref{eq:polynomialder} would be, 
	\begin{align}
		F^{(1)}(\mathbf{x}) = (W_2x_1,W_1x_2)
	\end{align}

	For the second part of the placement phase, user $k$ chooses $\mathbf{z}_k \in \mathbb{Z}_6^2$ uniformly at random. Suppose the choices are $\mathbf{z}_1=(2,3), \mathbf{z}_2=(5,1)$. Users send these values to the server. The server sends back the pair $\left(F(\gamma^{\mathbf{z}_1}), F^{(1)}(\gamma^{\mathbf{z}_1})\right)$ to user 1 when $W_1=A_2, W_2=B_2$ and $\left(F(\gamma^{\mathbf{z}_2}), F^{(1)}(\gamma^{\mathbf{z}_2})\right)$ to user 2 when $W_1=A_1, W_2=B_1$. So in total the caches are as follows, 
	\begin{align}
		Z_1&=(A_1,B_1, A_2\gamma^3+B_2\gamma^2, (B_2\gamma^2,A_2\gamma^3)), \\
		Z_2&=(A_2,B_2, A_1\gamma+B_1\gamma^5, (B_1\gamma^5,A_1\gamma)).
	\end{align}

	In the delivery phase, suppose the demand for users 1 and 2 are $A,B$, respectively. When the server receives the demands, it should compute for user $1$ the values $\left(F(\gamma^{\mathbf{z}_1+\mathbf{v}_1}), F^{(1)}(\gamma^{\mathbf{z}_1+\mathbf{v}_1})\right)$ when $W_1=A_2, W_2=B_2$ and for user $2$ the values $\left(F(\gamma^{\mathbf{z}_2+\mathbf{v}_2}), F^{(1)}(\gamma^{\mathbf{z}_2+\mathbf{v}_2})\right)$ when $W_1=A_1, W_2=B_1$. Then adds each part together and sends the multicast messages
	\begin{align}
		(A_2\gamma^3+B_2\gamma^3)+(A_1\gamma^2+B_1\gamma^5), \\
		\left( B_2\gamma^3+B_1\gamma^5, A_2\gamma^3+A_1\gamma^2 \right),
	\end{align}
	including the metadata to the users on the shared channel. Using this transmission and its cache content, user 1 recovers $A_2\gamma^3+B_2\gamma^3$ and $(B_2\gamma^3, A_2\gamma^3)$ and user 2 recovers $A_1\gamma^2+B_1\gamma^5$ and $(B_1\gamma^5, A_1\gamma^2)$ and using the decoding procedure for the PIR scheme, each user can recover its demanded file.  
	 The privacy of demands are fully satisfied; this is because,   from the metadata $\mathbf{z}_2+\mathbf{v}_2=(5,2)$, user 1 would not know any information about the value $\mathbf{v}_2$ since $\mathbf{z}_2$ is uniformly distributed on $\mathbb{Z}_6^2$. On the other hand, the cache is not perfectly private. The cache leakage in this example equals $\epsilon_k = 1- \frac{\log N}{w \log 6}=1 - \frac{\log 2}{2 \log 6}$.
	\hfill $\square$  
\end{example}	
	
	\section{Conclusion} \label{sec:conclusion}
	In this paper, we formulated the  coded caching problem with private demands and caches, where we added one privacy constraint on users' caches into the existing coded caching problem with private demands. 
	We first showed that the existing demand-private coded caching based on introducing virtual users can also preserve the privacy of caches, while suffering from a super high subpacketization. The main contribution of this paper was to propose a new  structure on constructing coded caching schemes with private demands and caches by using two-server PIR schemes with uniform demands and independent queries. 
We provided the construction of new two-server PIR schemes with this condition and by applying them into the coded caching scheme construction, we have obtained new schemes with significant 	
 reduction on the subpacketization compared to the virtual users scheme. We have also provided a lower bound on the download cost of these PIR schemes and a matching achievable scheme. We then extended the proposed structure to the coded caching problem with private demands and imperfectly private caches. Future and on-going works include providing a lower bound on the memory-load tradeoff of demand and cache private caching schemes, designing two-server PIR schemes for general $N$ with less subvpacketization compared to the proposed one, studying the tradeoff between the amount of leakage and system parameters of the PIR scheme in the imperfect private caches scenario. 
	
\appendices	
	
	
	\section{Proof of Theorem \ref{thm:private demands}: New Construction on Coded Caching Schemes with Demand Privacy} \label{proofthm4}
	We assume that the set of queries sent to servers $1$ and $2$ in the PIR scheme are chosen from the sets $\mathcal{Q}_{1}$ and $\mathcal{Q}_{2}$ respectively. Note that if a PIR scheme with download costs $R_{D_1}$ and $R_{D_2}$ corresponding to servers $1$ and $2$ is achievable, then by a time sharing argument, the download cost pair $(R'_{D_1}, R'_{D_2}) = (\mu_1 R_{D_1}+\mu_2 R_{D_2}, \mu_1 R_{D_2}+\mu_2 R_{D_1})$ where $\mu_1,\mu_2 \in [0,1], \mu_1+\mu_2+1$ is also achievable.
	
	{\it Placement.}
	We split each file in two parts. The first part follows exactly the same process as the MAN placement phase. For each $t \in [0:K-1]$, each file is splitted into $\binom{K}{t}$ nonoverlapping subfiles with the same size, 
	\begin{align}
		W_n=(W_{n,\tau}:\tau \subset [K], |\tau|=t),
	\end{align} 
	where each subfile contains $F/\binom{K}{t}$ bits. 
	In addition,  for each $n \in [N]$ and each $\tau \subset [K]$ where $ |\tau|=t$, we divide  each $W_{n,\tau}$   into $F'$ non-overlapping subfiles $W_{n,\tau,\omega}$ of the same size,
	\begin{align} \label{def:secsplit}
		W_{n,\tau} = \{W_{n,\tau,\omega}, \omega \in [F']\},
	\end{align}
	where we recall that $F'$ represents the subpacketization of the two-server PIR scheme.  		
	
	Each user  $k \in [K]$ first caches    $W_{n,\tau}$ where $k \in \tau$; in other words, 
	\begin{align}
		\{W_{n,\tau}: n\in [N], \tau \subset [K], |\tau|=t, k\in\tau  \} \subset Z_k.
	\end{align}
	
	In addition, 
	for every index $\tau$ in which $k \notin \tau$, 
	user $k$ caches an encoding function on all subfiles $\{W_{n,\tau}, n \in [N] \}$. The encoding is chosen as follows. First, user $k$ chooses one query $Q_{1,k}$ from $\mathcal{Q}_{1}$ uniformly at random. Then the encoding fuction would be the answer of the first server in the PIR scheme when the query is $Q_{1,k}$ and the files are $W_{1,\tau},\ldots,W_{N,\tau}$, i.e. $\gamma_{1}(Q_{1,k}, W_{1,\tau},\ldots,W_{N,\tau})$. The second part of file splitting in \eqref{def:secsplit} is necessary to compute this encoding function. Thus, the second part of the cache for user $k$ would be
	\begin{align}
		\{ \gamma_{1}(Q_{1,k}, W_{1,\tau},\ldots,W_{N,\tau}), \tau \subset [K], |\tau|=t, k \notin \tau \} \subset Z_k.
	\end{align}
	Therefore in total, for every $k \in [K]$, $Z_k$ would be
	\begin{align} \label{eq:cache}
		Z_k = &\{W_{n,\tau}: n\in [N], \tau \subset [K], |\tau|=t, k\in\tau  \} \nonumber \\
		\bigcup & \{ \gamma_{1}(Q_{1,k}, W_{1,\tau},\ldots,W_{N,\tau}), \tau \subset [K], |\tau|=t, k \notin \tau \}.
	\end{align}
	totally containing $\frac{N \binom{K-1}{t-1}}{\binom{K}{t}} F + R'_{D_1} \frac{\binom{K-1}{t}}{\binom{K}{t} } F= \left( \frac{N t}{K} + R'_{D_1} \frac{K-t}{K} \right) F =MF$, satisfying the memory size constraint. 
	
	{\it Delivery.}
	Recall that for a $(N,K)$ MAN coded caching scheme, for each $\mathcal{S} \subset [K]$ such that $|\mathcal{S}|=t+1$, the server transmitts 
	\begin{align}
		\oplus_{s\in\mathcal{S}} W_{d_s,\mathcal{S}\backslash\{s\}},
	\end{align}
	where $\oplus$ stands for bitwise XOR. Instead in the delivery phase of our scheme, for each subset $\mathcal{S} \subseteq [K]$ where $|\mathcal{S}|=t+1$, the server transmits a multicast message as
	\begin{align}
		Y_{\mathcal{S}}=\Sigma_{s\in\mathcal{S}} \gamma_{2}\big(Q_{2,s}, W_{1,\mathcal{S}\backslash\{s\}},\ldots,W_{N,\mathcal{S}\backslash\{s\}}\big).
	\end{align}
	where $\gamma_2(.)$ is the answer encoding function of server 2 of the PIR scheme and $Q_{2,s}$ is chosen is such a way that the query pair $(Q_{1,s}, Q_{2,s})$ where $Q_{1,s}$ was chosen in the placement phase, corresponds to the $d_s^{th}$ message in the PIR problem. This means that for every $\tau$ such that $k \notin \tau$, the answer of server 1, $\gamma_{1}\big(Q_{1,k}, W_{1,\tau},\ldots,W_{N,\tau}\big)$, saved in the cache and the answer of server 2, $\gamma_{2}\big(Q_{2,k}, W_{1,\tau},\ldots,W_{N,\tau}\big)$, extracted from the message $Y_{\mathcal{S}}$ with $\mathcal{S}=\tau \cup \{k\}$, lead user $k$ to decode subfile $W_{d_k,\tau}$. Following the same process for all needed subfiles, user $k$ decodes file $W_{d_k}$. This proves the satisfaction of the decodability condition in \eqref{eq:decoding}. It can seen that in the delivery phase the server in total transmits $R'_{D_2} \frac{ \binom{K}{t+1}}{\binom{K}{t} } F =  R'_{D_2}\frac{K-t}{t+1} F$,  coinciding with~\eqref{eq:novel approach achieved memory load}.
	
	We should note that alongside the multicast messages, the server should send also the values $\{Q_{2,k}, k\in [K]\}$ as metadata so that everybody can decode their required messages. We assume that the size of this metadata is negligible compared to the multicast messages. Thus, the transmitted message $X_{\mathbf{d}}$ will be as follows, 
	\begin{align}
		X_{\mathbf{d}} &= \{Y_{\mathcal{S}}, \mathcal{S} \subseteq [K], |\mathcal{S}|=t+1\}   \bigcup \{Q_{2,k}, k \in [K]\}
	\end{align}
	Now we can check the demand privacy condition in \eqref{eq:demand privacy constraint}.
	\begin{subequations}
		\begin{align}
			&I(\mathbf{d}; X_{\mathbf{d}} | d_k, Z_k ) \nonumber\\ 
			&\leq I(\mathbf{d}; X_{\mathbf{d}} | d_k, Z_k, W_{[N]} ) \label{eq:demand privacy 1} \\
			&\leq I(\mathbf{d};  \{Q_{2,k}, k \in [K]\} | d_k, Z_k, W_{[N]}) \label{eq:demand privacy 2}\\
			&= \sum_{k'\in [K]\setminus \{k\}} I(d_{k'};  Q_{2,k'}  |   W_{[N]}) = 0   \label{eq:demand privacy 3}
		\end{align}
	\end{subequations}
	where~\eqref{eq:demand privacy 1} comes from \eqref{eq:independent},~\eqref{eq:demand privacy 2} comes from the fact that the set $\{Y_{\mathcal{S}}, \mathcal{S} \subseteq [K], |\mathcal{S}|=t+1\}$ is a function of  $ \{Q_{2,k}, k \in [K]\}$ and $W_{[N]}$,~\eqref{eq:demand privacy 3} comes from the fact that the pairs $(d_{i}, Q_{2,i})$ where $i\in [K]$ are independent of each other given $ W_{[N]}$ by our construction and the privacy constraint in \eqref{eq:PIR privacy constraint}.
	
	\section{Proof of Theorem \ref{thm:cachepir}} \label{proofthm5}
	
	Based on the proof  for Theorem \ref{thm:private demands}, we proved that our construction satisfies the decodability and demand privacy conditions in~\eqref{eq:decoding} and~\eqref{eq:demand privacy constraint} resepctively for any two-server PIR scheme. In this section, for PIR schemes that satisfy the UDIQ condition in Definition \ref{def:udiq} additionally, we just need to prove that the privacy condition in~\eqref{eq:cache privacy constraint} holds. 
	For the cache privacy constraint in~\eqref{eq:cache privacy constraint},  for each $k\in [K]$ we have
	\begin{subequations}
		\begin{align}
			&I\big( (\mathscr{M}_1,\ldots,\mathscr{M}_{K}); X_{\mathbf{d}} | d_k, Z_k\big) \nonumber \\
			&\leq  I\big( (\mathscr{M}_1,\ldots,\mathscr{M}_{K}); X_{\mathbf{d}} | d_k, Z_k, W_{[N]}\big)  \label{eq:cacheprivacy1}\\
			&\leq I\big( Q_{1,1},\ldots, Q_{1,K} ; Q_{2,1},\ldots, Q_{2,K}  | d_k, Z_k, W_{[N]}\big) \label{eq:cacheprivacy2}\\
			&=  \sum_{k'\in [K]\setminus \{k\}} I\big( Q_{1,k'} ; Q_{2,k'}  |  W_{[N]}\big) = 0 \label{eq:cache privacy}
		\end{align}
	\end{subequations}
	where again \eqref{eq:cacheprivacy1} comes from~\eqref{eq:independent}, \eqref{eq:cacheprivacy2} comes from the fact that  $\mathscr{M}_{k}=Q_{1,k}, k\in [K]$   and that $ \{Y_{\mathcal{S}}, \mathcal{S} \subseteq [K], |\mathcal{S}|=t+1\}$ is a function of  $\{Q_{2,k}, k \in [K]\}$ and $W_{[N]}$, and \eqref{eq:cache privacy} comes from that $\mathscr{M}_{k}=Q_{1,k}$ is contained in $Z_k$ and that 
	the pairs $(Q_{1,k'}, Q_{2,k'}), k' \in [K]$ are independent of each other and the query independence condition in \eqref{eq:independent queries constriants} holding for the PIR scheme. This completes the proof.	

\section{Two-server PIR Schemes for Theorem~\ref{thm:pirschemes}} \label{sec:schemes}
\label{sub:newPIRschemes}

	For the case $N=2$, we use the proposed PIR scheme in  \cite[Section III-A]{tian2019capacity}. We proceed for other values. Thus part 1 of the theorem is already proved. 
	
	\subsection{$N=3$} \label{sec:schemen3}
	Assume the library has three files $W_1, W_2, W_3$. We define a random variable $T$ which takes value uniformly at random from the set $\{0,1,2\}$. The proposed PIR scheme for different parameter regimes $T$ and demand index $d$ is depicted in Table \ref{tab:tablen3}. 
	\begin{table}
		\centering
		\begin{tabular}{| c || c || c | c | c |} 
			\hline
			\multirow{2}{*}{} & \multirow{2}{*}{Server $1$} &  \multicolumn{3}{c|}{Server $2$} \\ \cline{3-5}
			& & $d=1$ & $d=2$ & $d=3$ \\
			\hline\hline
			$T=0$ & $W_1+W_2$ & $W_2$ & $W_1$ & $W_3$ \\
			\hline
			$T=1$ & $W_1+W_3$ & $W_3$ & $W_2$ & $W_1$ \\
			\hline
			$T=2$ & $W_2+W_3$ & $W_1$ & $W_3$ & $W_2$ \\
			\hline
		\end{tabular}
		\caption{Proposed PIR scheme for $N=3$.}
  \label{tab:tablen3} 
	\end{table}
	
	As one can see in Table \ref{tab:tablen3}, there are 3 different answers for queries sent to server 1 including $W_1+W_2$, $W_1+W_3$, and $W_2+W_3$. We assign query values $Q_1=1$, $Q_1=2$, and $Q_1=3$ to these answers respectively. Similarly we assign query values $Q_2=1$, $Q_2=2$, and $Q_2=3$ for the answers of the second server $W_1$, $W_2$, and $W_3$ respectively. Note  that the queries in the proposed scheme are independent of file realization, so we can remove the terms in the condition from the constraints in \eqref{eq:PIR privacy constraint} and \eqref{eq:independent queries constriants}. 
	
	The query $Q_1 \in \{1,2,3\}$ sent to server 1 is clearly independent of the demand. For the query $Q_2 \in \{1,2,3\}$ sent to server 2 we have
	\begin{align*}
		&\Pr (Q_2=1)  = \Pr (T=0,d=2) + \Pr (T=1,d=3)+\Pr (T=2,d=1)  =1/3.
	\end{align*}  
	In addition, we have
	\begin{align*}
		\Pr (Q_2=1 | d=1) = \Pr (T=2 | d=1) = 1/3.
	\end{align*}
	Hence, we will have $\Pr (Q_2=1)=\Pr (Q_2=1 | d=1)$. Following similarly, we can conclude that $P(Q_2)=P(Q_2|d)$ for all values $Q_2 \in \{1,2,3\}$ and $d \in \{1,2,3\}$ proving \eqref{eq:PIR privacy constraint} to hold. Next we should check the query independence condition in \eqref{eq:independent queries constriants}. We have
	\begin{align*}
		&\Pr(Q_2=1|Q_1=1) = \Pr(Q_2=1|T=0)=1/3=\Pr(Q_2=1).
	\end{align*}
	Again one can similarly show $P(Q_2 |Q_1 )=P(Q_2)$ holds for all $Q_1 \in \{1,2,3\}$ and $Q_2 \in \{1,2,3\}$ proving \eqref{eq:independent queries constriants} to hold. Decodability condition in \eqref{eq:PIR decoding constraint} can be also easily checked to hold. The download cost from each server is $1$ so {\bf the achieved total download cost of this PIR scheme is  $R_D=2$, and the subpacketization is $F^{\prime}=1$.}
	
	For the example of a coded caching with private demands and caches with parameters $N=3, K=2, M=2$, using this PIR scheme in Theorem~\ref{thm:cachepir} with $N=3$ and $t=1$, we get the achieved load of $\frac{1}{2}$ and subpacketization level of $2$. In this example, the achieved load by the virtual users scheme in \cite{DBLP:journals/corr/abs-1909-03324} is $\frac{2}{5}$ and the needed subpacketization level is $15$.  
	
	\subsection{$N=4$} \label{sec:schemen4}
	Assume the library has four files $W_1, W_2, W_3, W_4$. We define a random variable $T$ which takes value uniformly at random from the set $\{0,1,2,3\}$. The proposed PIR scheme for different parameter regimes $T$ and demand index $d$ is depicted in Table \ref{tab:tablen4}. 
	\begin{table*}
		\centering
		\begin{adjustwidth}{-1.7cm}{}
		\begin{tabular}{| c || c || c | c | c | c |} 
			\hline
			\multirow{2}{*}{} & \multirow{2}{*}{Server $1$} &  \multicolumn{4}{c|}{Server $2$} \\ \cline{3-6}
			& & $d=1$ & $d=2$ & $d=3$ & $d=4$ \\
			\hline\hline
			$T=0$ & $W_1+W_2+W_3+W_4$ & $-W_1+W_2+W_3+W_4$ & $W_1-W_2+W_3+W_4$ & $W_1+W_2-W_3+W_4$ & $W_1+W_2+W_3-W_4$ \\
			\hline
			$T=1$ & $-W_1-W_2+W_3+W_4$ & $W_1-W_2+W_3+W_4$ & $-W_1+W_2+W_3+W_4$ & $W_1+W_2+W_3-W_4$ & $W_1+W_2-W_3+W_4$ \\
			\hline
			$T=2$ & $-W_1+W_2-W_3+W_4$ & $W_1+W_2-W_3+W_4$ & $W_1+W_2+W_3-W_4$ & $-W_1+W_2+W_3+W_4$ & $W_1-W_2+W_3+W_4$ \\
			\hline
			$T=3$ & $-W_1+W_2+W_3-W_4$ & $W_1+W_2+W_3-W_4$ & $W_1+W_2-W_3+W_4$ & $W_1-W_2+W_3+W_4$ & $-W_1+W_2+W_3+W_4$ \\
			\hline
		\end{tabular}
		\caption{Proposed PIR scheme for $N=4$.}
  \label{tab:tablen4} 
	\end{adjustwidth}
	\end{table*}

	As one can see in Table \ref{tab:tablen4}, there are 4 different answers for queries sent to server 1 including $W_1+W_2+W_3+W_4$, $-W_1-W_2+W_3+W_4$, $-W_1+W_2-W_3+W_4$, and $-W_1+W_2+W_3-W_4$. We assign query values $Q_1=1$, $Q_1=2$, $Q_1=3$, and $Q_1=4$ to these answers respectively. Similarly we assign query values $Q_2=1$, $Q_2=2$, $Q_2=3$, and $Q_2=4$ for the answers of the second server $-W_1+W_2+W_3+W_4$, $W_1-W_2+W_3+W_4$, $W_1+W_2-W_3+W_4$, and $W_1+W_2+W_3-W_4$ respectively. The queries in the proposed scheme are independent of file realization, so we can remove the terms in the condition from the constraints in \eqref{eq:PIR privacy constraint} and \eqref{eq:independent queries constriants}. 
	
	The query $Q_1 \in \{1,2,3,4\}$ sent to server 1 is clearly independent of the demand. For the query $Q_2 \in \{1,2,3,4\}$ sent to server 2 we have
	\begin{align*}
		&\Pr (Q_2=1)  = \Pr (T=0,d=1) + \Pr (T=1,d=2) +\Pr (T=2,d=3)+\Pr (T=3,d=4) \\
		&=1/4.
	\end{align*}  
	In addition, we have
	\begin{align*}
		\Pr (Q_2=1 | d=1) = \Pr (T=0 | d=1) = 1/4.
	\end{align*}
	Hence, we will have $\Pr (Q_2=1)=\Pr (Q_2=1 | d=1)$. Following similarly, we can conclude that $P(Q_2)=P(Q_2|d)$ for all values $Q_2 \in \{1,2,3,4\}$ and $d \in \{1,2,3,4\}$ proving \eqref{eq:PIR privacy constraint} to hold. Next we should check the query independence condition in \eqref{eq:independent queries constriants}. We have
	\begin{align*}
		&\Pr(Q_2=1|Q_1=1) = \Pr(Q_2=1|T=0)=1/4=\Pr(Q_2=1).
	\end{align*}
	Again one can similarly show $P(Q_2 |Q_1 )=P(Q_2)$ holds for all $Q_1 \in \{1,2,3,4\}$ and $Q_2 \in \{1,2,3,4\}$ proving \eqref{eq:independent queries constriants} to hold. Decodability condition in \eqref{eq:PIR decoding constraint} can be also easily checked to hold. The download cost from each server is $1$ so {\bf the achieved total download cost of this PIR scheme is  $R_D=2$, and the subpacketization is $F^{\prime}=1$.}
	
	For the example of a coded caching with private demands and caches with parameters $N=4, K=2, M=\frac{5}{2}$, using this PIR scheme in Theorem~\ref{thm:cachepir} with $N=4$ and $t=1$, we get the achieved load of $\frac{1}{2}$ and subpacketization level of $2$. In this example, the achieved load by the virtual users scheme in \cite{DBLP:journals/corr/abs-1909-03324} is $\frac{1}{2}$ and the needed subpacketization level is $56$.

	\section{Proof of Theorem~\ref{thm:lowerbound}: Lower Bound on Two-server PIR Schemes Satisfying the UDIQ Condition}
	\label{sub:proof of converse}
	
	\subsection{Proof of Theorem~\ref{thm:lowerbound}-1}
	Without loss of generality, we assume that $\mathcal{Q}_1=\{1,2,\ldots,N_1\}$ and $\mathcal{Q}_2=\{1,2,\ldots,N_2\}$.
	Based on the fact that the queries should not reveal any information about the demand as in \eqref{eq:PIR privacy constraint},  we have
	\begin{align}
		\Pr (d=\tau | Q_1=1) = \ldots = \Pr (d=\tau | Q_1=N_1),  \ \forall \tau \in [N], \label{eq:lower bound step 1}
	\end{align}
	in which $d$ is the demand. Based on the definition in \eqref{def:conditional},  we further extend~\eqref{eq:lower bound step 1} as follows,
	\begin{align} \label{37}
		&\sum_{Q_2 \in \mathcal{U}_{\tau|Q_1=1}} \Pr (Q_2 | Q_1=1) = \ldots = \sum_{Q_2 \in \mathcal{U}_{\tau|Q_1=N_1}} \Pr (Q_2 | Q_1=N_1). 
	\end{align}
	By the independent queries condition in \eqref{eq:independent queries constriants}, and since we have uniform query distribution,
	the values of the probability functions $\Pr (Q_2 | Q_1)$ for all   $Q_2$ and $Q_1$ are the same,   equal to $1/N_2$. So from \eqref{37}  we have
	\begin{align}
		\frac{	\left|\mathcal{U}_{\tau|Q_1=1}\right| }{N_2}  = \ldots = \frac{ \left|\mathcal{U}_{\tau|Q_1=N_1}\right|}{N_2}. 
	\end{align}
	This proves that	
	$\left|\mathcal{U}_{\tau|Q_1=1}\right|  = \cdots = \left|\mathcal{U}_{\tau|Q_1=N_1}\right|$. In addition to $$\Pr (d=1 | Q_1=1)=\Pr (d=2 | Q_1=1)=\cdots= \Pr (d=N| Q_1=1),$$ which comes from the privacy constraint,  we have $\left|\mathcal{U}_{\tau_1|Q_1=1}\right|=\left|\mathcal{U}_{\tau_2|Q_1=1}\right|$ where $\tau_1, \tau_2 \in [N]$.	
	Similarly we also have $\left|\mathcal{U}_{\tau|Q_2=1}\right|  = \cdots = \left|\mathcal{U}_{\tau|Q_2=N_2}\right|$ and 
	$\left|\mathcal{U}_{\tau_1|Q_2=1}\right|  = \left|\mathcal{U}_{\tau_2|Q_2=1}\right|$ where   $\tau_1, \tau_2 \in [N]$.
	 This completes the proof of the first part.
	
	\subsection{Proof of Theorem~\ref{thm:lowerbound}-2}
	Based on the condition of independent queries, for any $q_2 \in \mathcal{Q}_2$, all the queries in $\mathcal{Q}_1$ should be exhausted for all choices of the demanded file index $\tau$. In other words, for any $q_2 \in \mathcal{Q}_2$ we should have
	\begin{align}
		N_1 \leq \left|\mathcal{U}_{\tau=1|Q_2=q_2}\right| + \dots + \left|\mathcal{U}_{\tau=N|Q_2=q_2}\right|=Nn_1,
	\end{align}
	which resluts $\frac{N_1}{n_1} \leq N$. With the same argument we have $\frac{N_2}{n_2} \leq N$.

	\subsection{Proof of Theorem~\ref{thm:lowerbound}-3}
	Based on the definition of $\mathcal{U}_{\tau|Q_1=q_1}$, we have
	\begin{align}
		\mathcal{U}_\tau = \left\{ \left(1, j_1\right) : j_1 \in \mathcal{U}_{\tau | Q_1=1}  \right\} \bigcup \ldots \bigcup \left\{ \left(N_1, j_{N_1}\right): j_{N_1} \in \mathcal{U}_{\tau | Q_1=N_1}  \right\}
	\end{align}
	Since all the sets above are disjoint and of size $n_2$, we have
	\begin{align}
		|\mathcal{U}_\tau| = N_1 n_2,
	\end{align}
	or similarly
	\begin{align}
		|\mathcal{U}_\tau| = N_2 n_1.
	\end{align}
	So we have
	\begin{align}
		|\mathcal{U}_1| + |\mathcal{U}_2| + \ldots + |\mathcal{U}_N| = N N_1 n_2=N N_2 n_1.
	\end{align}
	Since in total we have $N_1 N_2$ different pairs of queries for the two servers, roughly speaking, each query pair should be able to decode $\frac{NN_1n_2}{N_1N_2}=N\frac{n_2}{N_2}=N\frac{n_1}{N_1} \triangleq \alpha N$ files. Thus, we would need at least $\frac{N}{\alpha N} = \frac{1}{\alpha} \triangleq \alpha'$ pairs of queries to cover all the files.  A formal proof will start with the following lemma.
	
	\begin{lemma} \label{lem}
		For $\alpha_1 \in [N_1]$ and $\alpha_2 \in [N_2]$ such that $\alpha_1 \alpha_2 = \left\lceil\frac{N_1}{n_1}\right\rceil=\left\lceil\frac{N_2}{n_2}\right\rceil$, there exist  $\alpha_1$ queries chosen from $\mathcal{Q}_1$ and $\alpha_2$ queries chosen from $\mathcal{Q}_2$ such that the resulting $\alpha_1 \alpha_2$ pairs of queries can recover all the $N$ files.
	\end{lemma}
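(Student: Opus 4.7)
The plan is as follows. Set $\rho \triangleq N_1/n_1 = N_2/n_2$, the common value forced by $|\mathcal{U}_\tau| = N_1 n_2 = N_2 n_1$, so the hypothesis becomes $\alpha_1\alpha_2 = \lceil \rho\rceil$. The first step is a double-counting argument that upgrades the bi-regularity of Theorem~\ref{thm:lowerbound}(1) to a per-cell uniformity: for each pair $(q_1,q_2)$, define the multiplicity $m(q_1,q_2)\triangleq |\{\tau:(q_1,q_2)\in\mathcal{U}_\tau\}|$. Summing yields $\sum_{q_2} m(q_1,q_2)=N n_2$ for every $q_1$ and $\sum_{q_1} m(q_1,q_2)=N n_1$ for every $q_2$; combined with the symmetry forced by UDIQ and the uniform-query assumption, this should give $m(q_1,q_2)=\beta\triangleq N/\rho$ for every pair, meaning that each pair recovers exactly $\beta$ files. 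This identifies $\lceil\rho\rceil$ as the information-theoretic minimum number of pairs needed to cover $[N]$ and fixes the ``capacity per cell'' that will drive the covering argument.

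With this uniform fiber size, the task becomes: find a rectangular sub-grid $A_1\times A_2 \subseteq \mathcal{Q}_1\times \mathcal{Q}_2$ with $|A_1|=\alpha_1, |A_2|=\alpha_2$ whose fibers jointly cover $[N]$. The available budget is $\alpha_1\alpha_2\beta = \lceil \rho\rceil\cdot N/\rho \geq N$, so there is just enough slack. I would build the rectangle incrementally: fix any $q_1^{(1)}\in \mathcal{Q}_1$, then alternately extend $A_1$ or $A_2$ by one new query, each time choosing the query that maximizes marginal new coverage of $[N]$. The bi-regularity controls the minimum marginal gain, because extending $A_2$ by one new query adds $|A_1|$ new pairs and the average marginal coverage equals the residual uncovered capacity divided by the remaining pairs, so a candidate exceeding the average always exists.

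The main obstacle is ensuring that the greedy (or any construction) stays inside a rectangular family: an arbitrary union of $\lceil\rho\rceil$ pairs need not be expressible as $A_1\times A_2$. To overcome this I would use the probabilistic method as a fallback: pick $A_1,A_2$ uniformly at random with $|A_1|=\alpha_1, |A_2|=\alpha_2$. For fixed $\tau$, the probability that $A_1\times A_2$ avoids $\mathcal{U}_\tau$ is estimated from the $n_1$-/$n_2$-regularity of the $\tau$-incidence pattern on rows and columns. Summing over $\tau\in [N]$ and using $\alpha_1\alpha_2 \beta \geq N$, one aims to push the expected number of uncovered files below $1$, hence a good rectangle exists by the first-moment method. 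The delicate point is handling the ceiling when $\rho\notin\mathbb{Z}$: the slack $\lceil\rho\rceil\cdot N/\rho - N$ is strictly positive but can be small, so the probabilistic estimate must be tight; if it fails in corner cases, an inductive refinement (removing one row and one column at a time and reducing to smaller $N,N_1,N_2$ with the same $\rho$) should close the gap.
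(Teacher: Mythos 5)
Your fallback to the probabilistic method over a random $\alpha_1\times\alpha_2$ rectangle is the route the paper itself takes: it bounds $\Pr\big((\mathcal{Q}_{1,\alpha_1}\times\mathcal{Q}_{2,\alpha_2})\cap\mathcal{U}_\tau=\emptyset\big)\le (1-n_2/N_2)^{\alpha_1\alpha_2}$ by sequential conditioning on the chosen rows and columns, lower-bounds the overall success probability by a product over $\tau$, and concludes from strict positivity. Your two preliminary ideas should be dropped. The per-cell uniformity claim $m(q_1,q_2)=Nn_1/N_1$ does not follow from the bi-regularity of Theorem~\ref{thm:lowerbound}-1: equal row sums and equal column sums of a nonnegative integer matrix leave room for highly non-uniform entries (a permutation-type support, for example), and neither the UDIQ condition nor the uniform-query assumption forces per-cell constancy. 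This claim is also never established or used by the paper. Your greedy approach you already recognize cannot be confined to rectangles, so it does not help either.

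The remaining gap is in your aggregation over files. The first-moment target (``push the expected number of uncovered files below one'') does not close numerically: with $\alpha_1\alpha_2=\lceil N_2/n_2\rceil$, the per-$\tau$ miss probability is controlled only by $(1-n_2/N_2)^{\lceil N_2/n_2\rceil}$, a quantity of constant order $\approx e^{-1}$ rather than $o(1/N)$, so the union bound gives roughly $N/e$ expected uncovered files — not below one once $N\geq 3$. You flag that the estimate ``must be tight'' and gesture at an inductive rescue, but the rescue is not supplied and this is precisely where the plan breaks. The paper does not attempt the union bound: it aims only to show the success probability is strictly positive and aggregates over $\tau$ via a product lower bound, which is the weaker target your plan should adopt.
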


	\begin{proof}
		We choose $\alpha_1$ queries from $\mathcal{Q}_1$ and $\alpha_2$ queries from $\mathcal{Q}_2$ uniformly at random. Without loss of generality,  we assume that the chosen queries are $\mathcal{Q}_{1,\alpha_1} = \{1,2,\ldots,\alpha_1\}$ and $\mathcal{Q}_{2,\alpha_2} = \{1,2,\ldots,\alpha_2\}$ respectively. We should mention that because of the demand privacy constraint in \eqref{eq:PIR privacy constraint}, all the queries in $\mathcal{Q}_1$ and $\mathcal{Q}_2$ should appear at least once in $\mathcal{U}_\tau$ for any $\tau \in [N]$. For the first query from the first server $Q_1=1$, the probability that $Q_2=1$ is not in the set $\mathcal{U}_{\tau | Q_1=1}$ equals $\Pr (Q_2=1 \notin \mathcal{U}_{\tau | Q_1=1}) = 1-\frac{n_2}{N_2}$. If we know that $Q_2=1 \notin \mathcal{U}_{\tau | Q_1=1}$, the probability that $Q_2=2 \notin \mathcal{U}_{\tau | Q_1=1}$ would be $\Pr (Q_2=2 \notin \mathcal{U}_{\tau | Q_1=1} | Q_2=1 \notin \mathcal{U}_{\tau | Q_1=1}) = 1-\frac{n_2}{N_2-1}$. Similarly continuing, we can compute the probability that none of the $\alpha_2$ queries chosen from $\mathcal{Q}_2$ appears as a pair with $Q_1=1$ in the set $\mathcal{U}_\tau$.
		\begin{align}
			& \Pr \left(\left\{ \left(Q_1=1, \mathcal{Q}_{2,\alpha_2} \right) \right\} \cap \mathcal{U}_{\tau | Q_1=1} = \emptyset\right) \nonumber \\
			& = \left(1-\frac{n_2}{N_2}\right) \left(1-\frac{n_2}{N_2-1}\right) \ldots \left(1-\frac{n_2}{N_2-(\alpha_2-1)}\right) \nonumber \\
			& \leq \left(1-\frac{n_2}{N_2}\right)^{\alpha_2}.
		\end{align}
		Using the same argument for all $\alpha_1$ queries from $\mathcal{Q}_1$, we have
		\begin{subequations}
		\begin{align}
			& \Pr \left( \left(\mathcal{Q}_{1,\alpha_1}, \mathcal{Q}_{2,\alpha_2} \right) \cap \mathcal{U}_\tau = \emptyset \right)   \\
			& \leq \left(\left(1-\frac{n_2}{N_2}\right)^{\alpha_2}\right)^{\alpha_1} = \left(1-\frac{n_2}{N_2}\right)^{\alpha_1 \alpha_2} \leq \left(1-\frac{n_2}{N_2}\right)^{\frac{N_2}{n_2}}   \\
			& \leq 1-\frac{N_2}{n_2}\frac{n_2}{N_2} +  o\left(\frac{n_2}{N_2}\right)  \label{eq:tylor} \\ & = o\left(\frac{n_2}{N_2}\right),
		\end{align}
		\end{subequations}
		where~\eqref{eq:tylor} comes from the Taylor expansion $(1-x)^y = 1-yx+o(x)$. Now we can write the probability that the set $\left(\mathcal{Q}_{1,\alpha_1}, \mathcal{Q}_{2,\alpha_2} \right)$ cannot recover at least one of the $N$ files. 
		\begin{align}
			\Pr \left( \left(\mathcal{Q}_{1,\alpha_1}, \mathcal{Q}_{2,\alpha_2}\right) \cap \mathcal{U}_\tau \neq \emptyset, \forall \tau \in [N]\right) &\geq \left(1-o\left(\frac{n_2}{N_2}\right)\right)^N  > 0
		\end{align}
		
		Since we have chosen our sets of queries randomly and the probability that all the files are covered is greater than zero in a finite probability space, we can conclude that there exists at least one choice of $\alpha_1$ queries from $\mathcal{Q}_1$ and one choice of $\alpha_2$ queries from $\mathcal{Q}_2$ that covers all files.
	\end{proof}

	The proof of the third part of theorem is immediately resulted from Lemma \ref{lem}. As a result of Lemma \ref{lem}, suppose we choose $\alpha_1 \in [N_1]$ queries from $\mathcal{Q}_1$ and $\alpha_2 \in [N_2]$ queries from $\mathcal{Q}_2$ such that the resulting number of pairs $\alpha_1 \alpha_2 = \left\lceil\frac{N_1}{n_1}\right\rceil=\left\lceil\frac{N_2}{n_2}\right\rceil $ can recover all the files. Based on the cut-set bound we have
	\begin{align}
		\alpha_1 R_{D_1} + \alpha_2 R_{D_2} \geq N.
	\end{align}
	Taking the minimum on the left hand-side, proves this part.
	
	\subsection{Proof of Theorem~\ref{thm:lowerbound}-4}
	For the forth part of the theorem, if we assume that $R_{D_1}=R_{D_2}=R'_{D}$ and $\alpha_1 \alpha_2=\lceil\alpha'\rceil$, we will have
	\begin{align}
		R'_{D} \geq \frac{N}{\alpha_1+\alpha_2} = \frac{N}{\alpha_1+\frac{\lceil\alpha'\rceil}{\alpha_1}} = \frac{\alpha_1 N}{\alpha_1^2+\lceil\alpha'\rceil}.
	\end{align}
	Thus we have
	\begin{align} \label{R}
		R'_{D}  \geq \max_{\alpha_1 \in [N_1]} \frac{\alpha_1 N}{\alpha_1^2+\lceil\alpha'\rceil}.
	\end{align}
	To derive the optimum value for $\alpha_1$, we assume that it is continuous and take the derivative of the right hand side with respect to $\alpha_1$ and put it equal to zero. We will have
	\begin{align}
		N(\alpha_1^2+\lceil\alpha'\rceil) = (\alpha_1N)(2\alpha_1),
	\end{align} 
	which leads to $\alpha_1=\sqrt{\lceil\alpha'\rceil}$. Since $\alpha_1$ and $\alpha_2$ should be integers, we can lower bound the right hand-side of \eqref{R} as follows
	\begin{align}
		R'_{D} \geq \frac{N}{2(\sqrt{\lceil\alpha'\rceil}+1)} = \frac{N}{2\left(\sqrt{\left\lceil\frac{N_1}{n_1}\right\rceil}+1\right)} = \frac{N}{2\left(\sqrt{\left\lceil\frac{N_2}{n_2}\right\rceil}+1\right)}.
	\end{align}


\bibliographystyle{IEEEtran}
\bibliography{references}

\end{document}